\numberwithin{equation}{section}
\theoremstyle{plain}
\newcolumntype{d}[1]{D{.}{.}{#1}}
\newtheorem{definition}{Definition}
\newtheorem{idefinition}{Informal Definition}
\newtheorem{theorem}{Theorem}
\def\qedbox{\ifvmode\else\unskip\fi~\penalty10000
    \hfill{\large$\blacksquare$}}
\def\Fig#1{Figure~\ref{#1}}
\def\Thm#1{Theorem~\ref{#1}}
\def\Sec#1{Section~\ref{#1}}
\def\Def#1{Informal~Definition~\ref{#1}}
\def\Appendix#1{Appendix~\ref{#1}}
\def\eref#1{(\ref{#1})}
\def\Eref#1{Eq.~(\ref{#1})}
\def\mm#1{\ensuremath{\boldsymbol{#1}}} 
\def\Stdev{\text{Stdev}}
\def\E{\text{E}}
\def\Prob{\text{Prob}}
\def\RINLA{\texttt{R-INLA}\xspace}
\def\STAN{\texttt{STAN}\xspace}
\def\JAGS{\texttt{JAGS}\xspace}
\def\KLN#1#2{\text{KLD}\left(#1\| #2\right)}
\def\KL#1#2{\int #1 \log\left(\frac{#1}{#2}\right)\;d\mm{x}}
 \newcommand{\norm}[1]{\lVert#1\rVert}
\newcommand{\PCP}{PC prior}
\DeclareMathOperator{\supp}{supp}
\begin{document}

\begin{frontmatter}
\title{Penalising model component complexity: A principled, practical
    approach to constructing priors}
\runtitle{PC Priors}

\begin{aug}
\author{\fnms{Daniel} \snm{Simpson}\thanksref{t0}\ead[label=e1]{D.P.Simpson@warwick.ac.uk}},
\author{\fnms{H\aa{}vard} \snm{Rue}\ead[label=e5]{hrue@math.ntnu.no}},
\author{\fnms{Thiago G.} \snm{Martins}\ead[label=e2]{thigm85@gmail.com}},
\author{\fnms{Andrea} \snm{Riebler}\ead[label=e3]{andrea.riebler@math.ntnu.no}},
\and
\author{\fnms{Sigrunn H.} \snm{S\o{}rbye}\ead[label=e6]{sigrunn.sorbye@uit.no}}

\thankstext{t0}{Corresponding author}

\affiliation{University of Warwick, NTNU, University of Troms\o{} The Arctic University}

\address{Department of Statistics, University of Warwick, Coventry, CV4 7AL, United Kingdom \printead{e1}}
\address{Department of Mathematical Sciences,
    NTNU, Norway\printead{e5,e2,e3}}
    \address{Department of Mathematics and Statistics, University of Troms\o{} The Arctic University \printead{e6}}
%
\end{aug}

\begin{abstract}

    In this paper, we introduce a new concept for constructing prior
    distributions. We exploit the natural nested structure inherent to
    many model components, which defines the model component to be a
    flexible extension of a base model.  Proper priors are defined to
    penalise the complexity induced by deviating from the simpler base
    model and are formulated after the input of a user-defined
    \emph{scaling} parameter for that model component, both in the
    univariate and the multivariate case. These priors are invariant
    to reparameterisations, have a natural connection to Jeffreys'
    priors, are designed to support Occam's razor and seem to have
    excellent robustness properties, all which are highly desirable
    and allow us to use this approach to define default prior
    distributions. Through examples and theoretical results, we
    demonstrate the appropriateness of this approach and how it can be
    applied in various situations.

\end{abstract}



\begin{keyword}
\kwd{ Bayesian theory}
\kwd{    Interpretable prior distributions}
\kwd{Hierarchical models}
\kwd{    Disease mapping}
\kwd{    Information geometry}
\kwd{    Prior on correlation matrices}
\end{keyword}

\end{frontmatter}

\bibliographystyle{imsart-nameyear}

%
%
%
%
%
%
%
%


\section{Introduction}
\label{sec1}

The field of Bayesian statistics began life as a  sub-branch of probability theory. From the 1950s onwards, a number of pioneers built upon the Bayesian framework and applied it with great success to real world problems.  The true Bayesian ``moment'', however, began with the advent of Markov chain Monte Carlo (MCMC) methods, which can be viewed as a universal computational device to solve (almost) any problem.  Coupled with user-friendly implementations of MCMC, such as WinBUGS, Bayesian models exploded across fields of research from astronomy to anthropology, linguistics to zoology.  Limited only by their data, their patience, and their imaginations, applied researchers have constructed and applied increasingly complex Bayesian models. The spectacular flexibility of the Bayesian paradigm as an applied modelling toolbox has had a number of important consequences for modern science (see, for example, the special issue of Statistical Science (Volume 29, Number 1, 2014) devoted to Bayesian success stories).

There is a cost to this flexibility. The Bayesian machine is a closed one: given data, an assumed sampling distribution (or likelihood),  a prior probability,  and perhaps a loss function, this machine will spit out a single answer.  The quality of this answer (most usually given in the form of a posterior distribution for some parameters of interest) depends entirely on the quality of the inputs.  Hence, a significant portion of Bayesian data analysis inevitably involves modelling the prior distribution.  Unfortunately, as models become more complex the number of methods for specifying joint priors on the parameter space drops off sharply.

In some cases, the statistician has access to detailed prior information (such as in a sequential experiment) and the choice of a prior distribution is uncontroversial.  At other times, the statistician can construct an informative, subjective or expert prior by interviewing experts in order to elicit
parameters in a parametric family of priors~\citep{kadane1980interactive,book121,art560,book119,book120}.  

More commonly, the prior distributions that are used are not purely
subjective and, as such, are open for criticism.  Non-subjective
priors are used for a number of reasons ranging from a lack of expert
information, to the difficulty in eliciting information about
structural parameters that are further down the model hierarchy, such
as precision or correlation parameters.  Furthermore, as models grow
more complex, the difficulty in specifying expert priors on the
parameters increases.  Martyn Plummer, the author of \JAGS software
for Bayesian inference~\citep{plummer2003jags} goes so far as to
say\footnote{\texttt{http://martynplummer.wordpress.com/2012/09/02/stan/}}
``\emph{[...] nobody can express an informative prior in terms of the
    precision [...]}''.
    
The problem of constructing sensible default priors on model
parameters becomes especially pressing when developing general
software for Bayesian computation.  As developers of the \RINLA (see
 \texttt{http://www.r-inla.org/}) package, which performs
approximate Bayesian inference for latent Gaussian
models~\citep{art451,art500,art522}, we are left with two unpalatable
choices.  We could force the user of \RINLA to explicitly define a joint prior
distribution for all parameters in the model.  Arguably, this is the
correct thing to do, however, the sea of confusion around how to
properly prescribe priors makes this undesirable in practice.  The
second option is to provide default priors. These, as currently
implemented, are selected partly out of the blue with the hope that
they lead to sensible results.

This paper is our attempt to provide a broad, useful framework for building priors for a large class of hierarchical models.  The priors  we develop, which we call
\emph{Penalised Complexity} or \PCP s, are informative priors.  The
information in these priors is specified in terms of four underlying
principles.  This has a twofold purpose. The first purpose is to
communicate the exact information that is encoded in the prior in
order to make the prior interpretable and easier to elicit.     \PCP s
have a single parameter that the user must set, which controls the
amount of flexibility allowed in the model.
This parameter can be set using ``weak'' information that is
frequently available \citep{art547}, or by appealing to some other subjective criterion such as ``calibration'' under some assumptions about future experiments \citep{draper2006coherence}.  This is in line with  Draper's  call for ``transparent subjectivity''.  

Following in the footsteps of Lucien Le Cam \citep[``Basic Principle 0. Do not trust any principle.''][]{le1990maximum} and (allegedly) Groucho Marx (``Those are my principles, and if you don't like them\ldots well, I have others.''),   the second purpose of building
\PCP s from a set of principles is to allow us to change these
principles when needed. For example, in Sections \ref{sec:sparsity}
and \ref{sec:ext} we modify single principles for, respectively,
modelling and computational reasons.  This gives the \PCP\ framework
the advantage of flexibility without sacrificing its simple structure.  We stress that the principles provided in this paper do not absolve modellers of the responsibility to check their suitability \citep[see, for example, ][who argued that the principles underlying the reference prior approach are inappropriate for spatial data]{palacios2006non}.

We believe that \PCP s are general enough to be used in realistically
complex statistical models and are straightforward enough to be used
by general practitioners. Using only weak information, \PCP s
represent a unified prior specification with a clear meaning and
interpretation.  The underlying principles are designed so that
desirable properties follow automatically: invariance regarding
reparameterisations, connection to Jeffreys' prior, support of Occam's
razor principle, and empirical robustness to the choice of the
flexibility parameter. Our approach is not restricted to any specific
computational method as it is a principled approach to prior
construction and therefore relevant to any application involving
Bayesian analysis.

We do not claim that the priors we propose are optimal or unique, nor do we claim that the principles are universal.  Instead, we make the more modest claim that these priors are useful, understandable, conservative, and better than doing nothing at all.  

\subsection{The models considered in this paper}
While the goals of this paper are rather ambitious, we will necessarily restrict ourselves to a specific class of hierarchical model, namely additive models.  The models we consider have a non-trivial unobserved latent structure.  This latent structure is made up of a number of \emph{model components}, the structure of which is controlled by a small number of \emph{flexibility parameters}.  We are interested in latent structures in which each model component is  added for modelling purposes. We do not focus on the case where the hierarchical structure is added to increase the robustness of the model \citep[See Chapter 10 of ][for a discussion types of hierarchical structure]{robert2007bayesian}.  This \emph{additive model} viewpoint is the key to understanding many of the choices we make, in particular the concept of the ``base model'', which is covered in detail in \Sec{sec:base}.

An example of the type of model we are considering is the spatial survival model proposed by \citet{art458}, where the log-hazard rate is modelled according to a Cox proportional hazard model as $$
\log(\text{hazard}_j) = \log(\text{baseline}) + \beta_0 + \sum_{i=1}^p \beta_i x_{ij} + u_{r_j} + v_j,
$$
where $x_{ij}$ is the $i$th covariate for case $j$, $u_{r_j}$ is the value of the spatial structured random effect for the region $r_j$ in which case $j$ occurred, and $v_j$ is the subject specific log-frailty.  Let us focus on the \emph{model component} $ \mm{u} \sim N(0, \tau^{-1} \mm{Q}^{-1})$, where $\mm{Q}$ is is the structure matrix of the first order CAR  model on the regions and $\tau$ is an inverse scaling parameter \citep[Chapter 3]{book80}. The model component $\mm{u}$ has one flexibility parameter $\tau$, which controls the scaling of the structured random effect. The other model components are $\mm{v}$ and $\mm{\beta}$, which have one and zero  (assuming a uniform prior on $\mm{\beta}$) flexibility parameters respectively. We will consider this case in detail in \Sec{sec:bym}.

 We emphasise that we are not interested in the case where the number of flexibility parameters grows as we enter an asymptotic regime (here as the number of cases increases). The only time we consider models where the number of parameters grows in an ``asymptotic'' way is \Sec{sec:sparsity}, where we consider sparse linear models. In that section we discuss a (possibly necessary) modification to the prior specification given below (specifically Principle 3 in \Sec{sec2}). We also do not consider models with discrete components.

\subsection{Outline of the paper}

The plan of the paper is as follows. \Sec{sec:review} contains an overview of common techniques for setting priors for hierarchical models. In~\Sec{sec2} we will define our
principled approach to design priors and discuss its connection to the
Jeffreys' prior. In~\Sec{sec3}, we will study properties of \PCP s
near the base model and its behaviour in a Bayesian hypothesis testing
setting. Further, we provide explicit risk results in a simple
hierarchical model and discuss the connection to sparsity
priors. 
In~\Sec{sec:t} we study prior choices for the degrees of
freedom parameter in the Student-t distribution and perform an
extensive simulation study to investigate properties of the new prior.
 In~\Sec{sec:bym} we
discuss the BYM-model for disease mapping with a possible smooth
effect of an ecological covariate, and we suggest a new
parameterisation of the model in order to facilitate improved control
and interpretation.  
\Sec{sec:mvprobit} extends the method to multivariate parameters and
we derive  principled priors for correlation matrices in the
context of the multivariate probit model. \Sec{sec:ext} contains a
discussion of how to extend the framework of \PCP s to hierarchical
models by defining joint \PCP s over model components that take the
model structure into account.  This technique is demonstrated on an
additive logistic regression model.  We end with a discussion
in~\Sec{sec:disc}. The Appendices host technical details and
additional results.

\section{A  guided tour of non-subjective priors for Bayesian hierarchical models} \label{sec:review}

The aim of this section is to review the existing methods for setting non-subjective priors for parameters in Bayesian hierarchical models. We begin by discussing, in order of descending purity, objective priors, weakly-informative priors, and what we call risk-averse priors.  We then consider a special class of priors that are important for hierarchical models, namely priors that encode some notion of a base model.   Finally, we investigate the main concepts that we feel are most important for setting priors for parameters in hierarchical models, and we look at related ideas in the literature. 

In order to control the size of this section, we have made two major decisions.  The first is that we are focussing exclusively on methods of prior specification that could conceivably be used in all of the examples in this paper. The second is that we focus entirely on priors for prediction. It is commonly \citep[although not exclusively][]{bernardo2011integrated,rousseaumoment,kamary2014testing} held that we need to use different priors for testing than those used for prediction.  We return to this point in \Sec{sec:hypothesis_testing}.  A discussion of alternative priors for the specific examples in this paper is provided in the relevant section.   We also do not consider data-dependent priors and empirical Bayes procedures.

\subsection{Objective priors}
The concept of prior specification furthest
from expert elicitation priors is that of ``objective'' priors
\citep{art534,art555,art558,art557,kass1996selection}.  These aim to inject as little
information as possible into the inference procedure.  Objective
priors are often strongly design-dependent and are not uniformly accepted
amongst Bayesians on philosophical grounds; see for example discussion
contributions to \citet{art555} and \citet{art556}, but they are  useful 
(and used) in practice. The most common constructs in this
family are Jeffreys' non-informative priors and their extension
``reference priors''~\citep{art558}.  These priors are often
improper and great care is required to ensure that the resulting
posteriors are proper.  If chosen carefully, the use of
non-informative priors will lead to appropriate parameter estimates as
demonstrated in several applications by \cite{kamary2014}.  It can be also shown theoretically that for sufficiently nice models, the posterior resulting from a reference prior analysis matches the results of classical maximum likelihood estimation to second order \citep{reid2003some}.  

While reference priors have been successfully used for classical models, they have a less triumphant history for hierarchical models. There are two reasons for this.  The first reason is that reference priors are model dependent and notoriously difficult to derive for complicated models.  Furthermore, when the model changes, such as when a model component is added or removed, this prior needs to be re-computed. This
does not mesh well with the practice of applied statistics, in which a
``building block'' approach is commonly used and several slightly
different model formulations will be fitted simultaneously to the same
dataset.  The second reason that it is difficult to apply the reference prior methodology to hierarchical models is that reference priors depend  on an ordering of the parameters.  In some applications, there may be a natural ordering.  However, this can also lead to farcical situations.  Consider the problem of determining the $m$ parameters of a multinomial distribution (c.f. the $3$ weights $w_i$ in \Sec{sec:ext}).  In order to use a reference prior, one would need to choose one of the $m!$ orderings of the  allocation probabilities. This is not a benign choice as the the joint posterior will depend strongly on the ordering. \citet{berger2015overall} proposed some work arounds for this problem, however it is not clear that they enjoy the good properties of pure reference priors and it is not clear how to apply these to even moderately complex hierarchical models \citep[see the comment of ][]{rousseau2015comment}.   In spite of
these shortcomings, the reference prior framework is the only complete
framework for specifying prior distributions.

\subsection{Weakly informative priors}
Between objective and expert priors lies the realm of ``weakly
informative'' priors
\citep{art547,art548,evans2011weak,polson2012half}.  These priors are
constructed by recognising that while you usually do not have strong
prior information about the value of a parameter, it is rare to be
completely ignorant.  For example, when estimating the height and
weight of an adult, it is sensible to select a prior that gives mass
neither to people who are five metres tall, nor to those who only
weigh two kilograms.  This use of weak prior knowledge is often sufficient
to regularise the extreme inferences that can be obtained using
maximum likelihood \citep{le1990maximum} or non-informative priors.  To date, there has been
no attempt to construct a general method for specifying weakly
informative priors.  

Some known weakly informative priors, like the half-Cauchy distribution on the standard deviation of a normal distribution, can lead to well better predictive inference than reference priors.  To see this, we consider the problem of inferring the mean of a normal distribution with known variance.  The posterior mean obtained from the reference prior is the sample mean of the observations.   Charles Stein  and coauthors showed that this estimator is inadmissible in the sense that you can find an estimator that is better than it for any value of the true parameter \citep{james1961estimation,stein1981estimation}.  This problem fits within our framework if we consider the mean to be \emph{a priori} normally distributed with an unknown standard deviation, which is a flexibility parameter.  For the half-Cauchy prior on the standard deviation, \citet{polson2012half} showed that the resulting posterior mean dominates the sample mean and has lower risk than the Stein estimator for small signals.  This shows that the right sort of weak information has the potential to greatly improve the quality of the inference.  This case is explored in greater detail in \Sec{section:risk}.    There are no general theoretical results that show how to build priors with good risk properties for the broader class of models we are interested in, but the intuition is that weakly informative priors can strike a balance between fidelity to a strong signal, and shrinkage of a weak signal. We interpret this as the prior on the flexibility parameter (the standard deviation) allowing extra model complexity, but not forcing it.

\subsection{\emph{Ad hoc}, risk averse, and computationally convenient prior specification}
The methods of prior specification we have considered in the previous two sections are distinguished by the presence of an (abstract) ``expert''.  For objective priors, the expert is demanding the priors encode only minimal information, whereas for weakly informative priors, the expert is choosing a specific type of ``weak'' information that the prior will encode.  The priors that we consider in this section lack such an expert and, as such, are difficult to advocate for.  These methods, however, are used in a huge amount of applied Bayesian analysis.

The most common non-subjective approach to prior specification for hierarchical models is to use a prior that has been previously used in the literature for a similar problem.  This \emph{ad hoc} approach can be viewed as a \emph{risk averse} strategy, in which the choice of prior has been delegated to another researcher. In the best cases, the
chosen prior was originally selected in a careful, problem independent
manner for a similar problem to the one the statistician is solving \citep[for example, the priors in ][]{gelman2013bayesian}.
More commonly, these priors have been carefully chosen for the problem
they were designed to solve and are inappropriate for the new
application \citep[such as the priors in][]{muff-etal-2015}.  The lack of a dedicated ``expert'' guiding these prior choices can lead to troubling inference. Worse still is the idea that, as the prior was selected from the literature or is in common use, there is some sort of justification for it.

Other priors in the literature have been selected for purely computational reasons.  The main example of these priors 
are conjugate priors for exponential families \citep[Section 3.3]{robert2007bayesian}.  These priors ensure the 
existence of analytic conditional distributions, which allow for the easy implementation of a Gibbs sampler.  While Gibbs samplers are an important part of the historical development of Bayesian statistics, we tend to favour sampling methods based on the joint distribution, such as \STAN, as they tend to perform better.  Hence, we only require that the priors have a tractable density and that it is straightforward to explore the parameter space.  This latter requirement is particularly important when dealing with structured multivariate parameters, for example correlation matrices, where it may not be easy to move around the set of valid parameters (see \Sec{sec:mvprobit} for a discussion of this).

Some priors from the literature are not
sensible.  An extreme example of this is the enduring popularity of
the $\Gamma(\epsilon, \epsilon)$ prior, with a small $\epsilon$, for
inverse variance (precision) parameters, which has been the
``default''\footnote{We note that this recommendation has been revised, however these priors are still widely used in the literature.} choice in the WinBugs~\citep{tech23} example
manuals. However, this prior is well known to be a choice with severe
problems; see the discussion in \citet{art460} and \citet{book113}.
Another example of a bad ``vague-but-proper'' prior is a uniform prior
on a fixed interval for the degrees of freedom parameter in a
Student-t distribution.  The results in Section \ref{sec:t} show that these
priors, which are also used in the WinBugs manual, get increasingly
informative as the upper bound increases.

One of the unintentional consequences of using risk averse priors is that they will usually lead to independent priors on each of the hyperparameters. For complicated models that are overspecified or partially identifiable, we do not think this is necessarily a good idea, as we need some sort of shrinkage or sparsity to limit the flexibility of the model and avoid over-fitting.  The tendency towards over-fitting is a property of the full model and independent priors on the components may not be enough to mitigate it.
\citet{fuglstad2015interpretable} define a joint prior on the partially identifiable hyperparameters for a Gaussian random field that is re-parameterised to yield independent priors, whereas  \citet{bhattacharya2014dirichlet} define a sparsity prior for an over-specified linear model by considering a prior that concentrates on the boundary of a simplex, that in specific cases factors into independent priors on each model component. This strongly suggests that for useful inference in hierarchical models, it is important to consider the \emph{joint} effect of the priors and it is not enough to simply focus on the marginal properties. 

While the tone of this section has been quite negative, we do not wish to give the impression that all inference obtained using risk averse or computationally convenient priors will not be meaningful.  We only want to point out  that a lot of work needs to be put into checking the suitability of the prior for the particular application before it is used.  Furthermore, the suitability (or not) or a specific joint prior specification will depend in subtle and complicated ways on the global model specification.  An interesting, but computationally intensive, method for reasserting the role of an ``expert'' into a class of \emph{ad hoc} priors, and hence recovering the justifiable status of objective and weakly-informative priors, is the method of \emph{calibrated Bayes} \citep{rubin1984bayesianly}.  In calibrated Bayes, as practiced with aplomb by \citet{browne2006comparison},   a specific prior distribution is chosen from a possibly \emph{ad hoc} class by ensuring that, under correct model specification, the credible sets are honest credible regions.

\subsection{Priors specified using a base model} \label{sec:base}  

One of the key challenges when building a prior for a hierarchical model is finding a way to control against over-fitting. In this section, we consider a number of priors that have been proposed in the literature that are linked through the abstract concept of  a ``base model''.

\begin{definition} \label{defn:base_model}
For a model component with density $\pi(\mm{x} \mid \mm{\xi})$ controlled by a flexibility parameter $\mm{\xi}$, the base model is the ``simplest'' model in the class.  For notational clarity, we will take this to be the model corresponding to $\mm{\xi}=0$.  It will be common for $\mm{\xi}$ to be non-negative.   The flexibility parameter is often a scalar, or a number of independent scalars, but it can also be a vector-valued parameter.
\end{definition}

 This allows us to interpret
$\pi(\mm{x} \mid \mm{\xi})$ as a \emph{flexible extension} of the base model,
where increasing values of $\mm{\xi}$ imply increasing deviations from the
base model. The idea of a base model is reminiscent of a ``null hypothesis'' and thinking of what a sensible hypothesis to test for $\mm{\xi}$ is a good way to specify a base model.  We emphasise, however, that we are not using this model to do testing, but rather to control flexibility and reduce over-fitting thereby improving predictive performance.

The idea of a base models is strongly linked to the ``additive'' version of model building.  As we add more model components, it is necessary to mitigate the increase in flexibility by placing a ``barrier'' to their application.  This suggests that we should shrink towards the simplest version of the model component, which often corresponds to it not being present.

A few simple examples will fix the idea.
\begin{description}
\item[Gaussian random effects] Let $\mm{x}\mid\xi$ be a multivariate
    Gaussian with zero mean and precision matrix $\tau \mm{I}$ where
    $\tau=\xi^{-2}$. Here, the base model puts all the mass at $\xi=0$,
    which is appropriate for random effects where the natural
    reference is absence of these effects. In the multivariate case
    and conditional on $\tau$, we can allow for correlation among the
    model components where the uncorrelated case is the base model.
\item[Spline model] Let $\mm{x}\mid\xi$ represent a spline model with
    smoothing parameter $\tau = 1/\xi$. The base model is the infinite
    smoothed spline which can be a constant or a straight line,
    depending on the order of the spline or in general its
    null-space. This interpretation is natural when the spline model
    is used as a flexible extension to a constant 
     or in generalised additive models, which
    can (and should) be viewed as a flexible extension of a
    generalised linear model.
\item[Time dependence] Let $\mm{x}\mid\xi$ denote an auto-regressive
    model of order 1, unit variance and lag-one correlation
    $\rho=\xi$. Depending on the application, then either $\rho=0$ (no
    dependence in time) or the limit $\rho=1$ (no changes in time) is
    the appropriate base model.
\end{description}

The base model  primarily finds a home in the idea of ``spike-and-slab'' priors \citep{george1993variable,ishwaran2005spike}.  These models specify a prior on $\mm{\xi}$ as a mixture of a point mass at the base model and a diffuse absolutely continuous prior over the remainder of the parameter space.  These priors successfully control over-fitting and simultaneously perform prediction and model selection.  The downside is that they are computationally unpleasant and specialised tools need to be built to infer these models. Furthermore, as the number of flexibility parameters increases, exploring the entire posterior quickly becomes infeasible.

Spike-and-slab models are particularly prevalent in high-dimensional regression problems, where a more computationally pleasant alternative has been proposed.  These local-global priors are scale mixtures of normal distributions, where the priors on the variance parameters are chosen to mimic the spike-and-slab structure \citep{bhattacharya2014dirichlet}. While these parameters are \emph{not}, by our definition, flexibility parameters, this suggests that non-atomic priors can have the advantages of spike-and-slab priors without the computational burden.

In order to further consider what a non-atomic prior must look like to take advantage of the base model, we consider the following Informal Definition. 
\begin{idefinition} \label{def:overfit} A prior $\pi(\mm{\xi})$ forces
    overfitting (or \emph{overfits}) if the density of the prior is
    zero at the base model. A prior avoids overfitting if its density is decreasing with a maximum at  the base model.
\end{idefinition}
A prior that overfits will drag the posterior towards the more
flexible model and the base model will have almost no support in the
posterior, even in the case where the base model is the true
model. Hence, when using an overfitting prior, we are unable to
distinguish between flexible models that are supported by the data and
flexible models that are a consequence of the prior choice.  For variance components, this idea is implicit in the work of \citet{art547,art548},  while for high-dimensional models, it is strongly present in the concept of the local-global scale mixture of normals.  

At first glance, \Def{def:overfit} appears to be overly harsh and it requires justification.  We know that the minimal requirement for the posterior to contract to the base model (under repeated observations from the base model) is that the prior puts enough mass in small neighbourhoods around $\xi=0$.  The shape of these neighbourhoods depends  on the model, which makes this condition difficult to check for the models we are considering, however priors that don't overfit in the sense of \Def{def:overfit} are likely to fulfil it. The second justification is that, in order to avoid overfitting, we want the model to require stronger information to ``select'' a flexible model than it does to ``select'' the base model.  \Def{def:overfit} ensures that the probability in a small ball around $\xi = 0$  is larger than the probability in the corresponding ball around any other $\xi_0 \neq 0$.  This can be seen as an absolutely continuous generalisation of the spike-and-slab concept.

While the careful specification of the prior near the base model controls against overfitting, it is also necessary to consider the tail behaviour. While it is not the focus of this paper, this is best understood in the high-dimensional regression case.  The spike-and-slab approach typically uses a slab distribution with exponential tails such as the Laplace distribution \citep{castillo2012needles}, while global-local scale mixtures of normals need heavier tails in order to simultaneously control the probability of a zero and the behaviour for large signals (see Theorem \ref{thm:general_sparse}, although the idea is implicit in \citet{bhattacharya2014dirichlet} and the proof of Proposition 1 in \citet{castillo2014bayesian}).

There is a second argument for heavy tails put forth by  \citet{art547,art548}, who suggested that they are necessary for robustness.  An excellent review of the role complementary role of priors and likelihoods for Bayesian robustness can be found in \citet{o2012bayesian}.  From this, we can see that this theory is only developed for inferring location-scale families or natural parameters in exponential families and is unclear how to generalise these results to the types of models we are considering. 
 In our experiments, we have found little difference between half-Cauchy and exponential tails, whereas we found huge differences between exponential and Gaussian tails (which performed badly when the truth was a moderately flexible model).

\subsection{Methods for setting joint priors on flexibility parameters in hierarchical models: some desiderata } \label{sec:desiderata} 
We conclude this tour of prior specifications by detailing what we look for in a joint prior for parameters in a hierarchical model and pointing out priors that have been successful in fulfilling at least some of these.  This list is, of course, quite personal, but we believe that it is  broadly sensible.  We wish to emphasise that the desiderata listed below only make sense in the context of hierarchical models with \emph{multiple model components} and it does not make sense to apply them to less complex models.    The remainder of the paper can be seen as our attempt to construct a system for specifying priors that at least partially consistent with this list.

\textbf{D1: The prior should not be non-informative} Hierarchical models are frequently quite involved and, even if it was possible to compute a non-informative prior, we are not  convinced it would be a good idea. Our primary concern is the stability of inference.  In particular, if a model is over-specified, these priors are likely to lead to badly over-fitted posteriors.  Outside the realm of formally non-informative priors, emphasising ``flatness'' can lead to extremely prior-sensitive inferences \citep{art547}.  This  should not be interpreted as us calling for massively informative priors, but rather a recognition that for complex models, a certain amount of extra information needs to be injected to make useful inferences.

\textbf{D2: The prior should be aware of the model structure}  Roughly speaking, we want to ensure that if a subset of the  parameters control a single aspect of the model, the prior on these parameters is set jointly.  This also suggests using a parameterisation of the model that, as much as possible, has parameters that only control one aspect of the model. An example of this is the local-global method for building shinkage priors. Consider the normal means model $y_i \sim N(\theta_i,1)$, $\theta_i \sim N(0, \sigma^2 \psi_i^2)$, where the prior on both the global variance parameter $\sigma^2$ and the local variance parameter $\psi_i^2$ have support on $[0, \infty)$. \citet{bhattacharya2014dirichlet} suggested a sensible re-parameterisation, where the local parameters $\{\psi_i\}$ are constrained to lie on a simplex.  With this restriction, the parameters can be interpreted as an overall variance $\sigma^2$ while the $\psi_i$s control how that variance is distributed to each of the model components.  This decoupling of the parameters makes it considerably easier to set sensible prior distributions.

 While this type of decoupling is highly desirable for setting priors, it is not always clear how to achieve it. To see this, consider a negative-binomial GLM, where the (implied) prior data-level variance is controlled by both by the (implied) prior on the log mean and the prior on the overdispersion parameter.  Furthermore, if the overdispersion parameter is of scientific interest, this type of decoupling would be a terrible idea!

\textbf{D3: When re-using the prior for a different analysis, changes in the problem should be reflected in the prior} A prior specification should be explicit about what needs to be changed when applying it to a similar but different problem.  An easy example is that the prior on the scaling parameter of a spline model needs to depend on the number of knots \citep{art521}. A less clear cut example occurs when inferring a deep multi-level model.  In this case, the data is less informative about variance components further down the hierarchy, and hence it may be desirable to specify a stronger prior on these model components.

\textbf{D4: The prior should limit the flexibility of an over-specified model} This desideratum is related to the discussion in \Sec{sec:base}. It is unlikely that priors that do not have good shrinkage properties will lead to good inference for hierarchical models.

\textbf{D5: Restrictions of the prior to identifiable submanifolds of the parameter space should be sensible} As more data appears, the posterior will contract to a submanifold of the parameter space. For an identifiable model, this submanifold will be a point, however any number of models that appear in practice are only partially identifiable \citep{gustafson2005model}. In this case, the posterior limits to the restriction of the prior along the identifiable submanifold and it is reasonable to demand that the prior is sensible along this submanifold. A case where it is not desirable to have a non-informative prior on this submanifold is given in \citet{fuglstad2015interpretable}.

\textbf{D6: The prior should control what a parameter does, rather than its numerical value}
A sensible method for setting priors should (at least locally) indifferent to the parameterisation used.  It does not make sense, for example, for the posterior to depend on whether the modeller prefers working with the standard deviation, the variance or the precision of a Gaussian random effect. 

The idea of using the distance
between two models as a reasonable scale to think about priors dates
back to \citet{jeffreys1946invariant} pioneering work to obtain priors
that are invariant to reparameterisation. \citet{art523} build on the
early ideas of \citet{book112} to derive objective priors for
computing Bayes factors for Bayesian hypothesis tests; see also
\citet[Sec.~6.4]{art554}.  They use divergence measures between the
competing models to derive the required proper priors, and call those
derived priors divergence-based (DB) priors. Given the prior
distribution on the parameter space of a full encompassing model,
\citet{consonni2008} used Kullback-Leibler projection, in the context
of Bayesian model comparison, to derive suitable prior distributions
for candidate submodels.

\textbf{D7: The prior should be computationally feasible} If our aim is to perform applied inference, we need to ensure the models allow for computations within our computational budget.  This will always lead to a very delicate trade-off that needs to be evaluated for each problem.  However, the discussion in \Sec{sec:base} suggests that when we want to include information about a base model (see D4), we can either use a computationally expensive spike-and-slab approach or a cheaper scale-mixture of normals approach.

\textbf{D8: The prior should be consistent with theory}
This is the most difficult desideratum to fulfil.  Ideally, we would like to ensure that, for some appropriate quantities of interest, the estimators produced using these priors have theoretical guarantees.  It could be that we desire optimal posterior contraction, asymptotic normality, good predictive performance under mis-specification, robustness against outliers, admissibility in the Stein sense or any other ``objective'' property.  At the present time, there is essentially no theory of any of these desirable features  for the types of models that we are considering in this paper.  As this gap in the literature closes, it will be necessary to update recommendations on how to set a prior for a hierarchical model to make them consistent with this new theory.



\section{Penalised complexity priors}
\label{sec2}

In this section we will outline our approach to constructing 
\emph{penalised complexity priors} (\PCP s) for a univariate
parameter, postponing the extensions to the multivariate case to \Sec{sec:multivar_PC}. These priors, which are fleshed out in further sections, satisfy most of the desiderata listed in \Sec{sec:desiderata}.  We demonstrate these principles by deriving the \PCP\ for the precision of a Gaussian random effect. 

\subsection{A principled definition of the
    \PCP} \label{sec:principles}

We will now state and discuss our principles for constructing a prior
distribution for $\xi$.
\begin{enumerate}[label=\bfseries Principle \arabic*:]
\item \textbf{Occam's razor.} We invoke the principle of parsimony,
    for which simpler model formulations should be preferred until
    there is enough support for a more complex model. Our simpler
    model is the base model hence we want the prior to penalise
    deviations from it. From the prior alone we should prefer the
    simpler model and the prior should be decaying as a function of a
    measure of the increased complexity between the more flexible
    model and the base model.
\item \textbf{Measure of complexity.} We will use the Kullback-Leibler
    divergence (KLD) to measure the increased complexity
    \citep{kld1951}.  Between densities $f$ and $g$, the KLD is
    defined by
    \begin{displaymath}
        \KLN{f}{g} = \KL{f(\mm{x})}{g(\mm{x})}. 
    \end{displaymath}
    KLD is a measure of the information lost when the base model $g$
    is used to approximate the more flexible model $f$. The asymmetry
    in this definition is vital to the success of this method: Occam's
    razor is naturally non-symmetric and, therefore, our measure of
    complexity must also be non-symmetric. In order to use the KLD, it
    is, however, necessary to transform it onto a physically
    interpretable ``distance'' scale.  Throughout this paper, we will
    use the (unidirectional) measure $d(f || g) = \sqrt{2\KLN{f}{g}}$,
    to define the distance between the two models. (We will use this
    phrase even though it is not a metric.)  Hence, we consider $d$ to
    be a measure of complexity of the model $f$ when compared to model
    $g$.  Here, the factor ``2'' is introduced for convenience and the
    square root deals with the power of two that is naturally
    associated with the KLD (see the simple example at the end of this
    subsection).
\item \textbf{Constant rate penalisation.} Penalising the deviation
    from the base model parameterised with the distance $d$, we use a
    constant decay-rate $r$, so that the prior satisfies
    \begin{displaymath}
        \frac{\pi_d(d + \delta)}{\pi_d(d)} =  r^{\delta}, \qquad d,\delta \ge 0
    \end{displaymath}
    for some constant $0 < r< 1$.  This will ensure that the relative
    prior change by an extra $\delta$ does not depend on $d$, which is
    a reasonable choice without extra knowledge (see the discussion on 
    tail behaviour in \Sec{sec:base}). Deviating from the
    constant rate penalisation implies to assign different decay rates
    to different areas of the parameter space. However, this will
    require a concrete understanding of the distance scale for a
    particular problem, see Section~\ref{sec:sparsity}.  Further, the
    mode of the prior is at $d=0$, i.e.\ the base model.  The constant
    rate penalisation assumption implies an exponential prior on the
    distance scale, $\pi(d)= \lambda \exp(-\lambda d)$, for $r =
    \exp(-\lambda)$.  This corresponds to the following prior on the
    original space
    \begin{equation}
        \pi(\xi)  =\lambda \mathrm{e}^{-\lambda d(\xi)}
        \left|\frac{\partial d(\xi)}{\partial \xi}\right|.
    \end{equation}
    In some cases $d$ is upper bounded and we use a truncated
    exponential as the prior for $d$.
\item \textbf{User-defined scaling.} The final principle needed to
    completely define a \PCP\ is that the \emph{user} has an idea of a
    sensible size for the parameter of interest or a property of the corresponding
    model component. This is similar to the principle behind weakly
    informative priors.  In this context, we can select $\lambda$ by
    controlling the prior mass in the tail. This condition is of the
    form
    \begin{equation}
        \label{eqn:weak_bound}
        \Prob(Q(\xi) > U) = \alpha,
    \end{equation}
    where $Q(\xi)$ is an interpretable transformation of the
    flexibility parameter, $U$ is a ``sensible'', user-defined upper
    bound that specifies what we think of as a ``tail event'', and
    $\alpha$ is the weight we put on this event.  This condition
    allows the user to prescribe how informative the resulting \PCP\
    is.
\end{enumerate}
The \PCP\ procedure is invariant to reparameterisation, since the
prior is defined on the distance $d$, which is then transformed to the
corresponding prior for $\xi$. This is a major advantage of \PCP s,
since we can construct the prior without taking the specific
parameterisation into account.

The \PCP\ construction is consistent with the desiderata listed in \Sec{sec:desiderata}.  Limited flexibility (D4), controlling the effect rather than the value (D6), and informativeness  (D1) follow from Principles 1, 2, and 4 respectively.  Lacking more detailed theory for hierarchical models, Principle 3 is consistent with existing theory (D8).   We argue that computational feasibility (D7) follows from the ``computationally aware'' \Def{def:overfit}, which favours absolutely continuous priors over spike-and-slab priors.  Building ``structurally aware'' priors (D2) is discussed in Sections \ref{sec:bym} and \ref{sec:ext}. The idea that a prior should change in an appropriate way when the model changes is discussed in \Sec{sec:bym}.  The desideratum that the prior is meaningful on identifiable submanifolds (D5) is discussed in \citet{fuglstad2015interpretable}.

\subsection{The \PCP\ for the precision of a Gaussian random effect}
\label{sec:pc.prec}

The classical notion of a random effect has proven to be a convenient
way to introduce association and unobserved heterogeneity.  We will
now derive the \PCP\ for the precision parameter $\tau$ for a Gaussian
random effect $\mm{x}$, where $\mm{x}\sim{\mathcal N}(\mm{0},
\tau^{-1}\mm{R}^{-1})$, $\mm{R}\succeq 0$.  The \PCP\ turns out to be
independent of the actual choice of $\mm{R}$, including the case where
$\mm{R}$ is not of full rank, like in popular intrinsic models, spline
and thin-plate spline models~\citep{book80}. The natural base model is
the absence of random effects, which corresponds to
$\tau=\infty$. In the rank deficient case, the natural base model is that the 
effect belongs to the nullspace of $\mm{R}$, which also corresponds to $\tau=\infty$.
This base model leads to a useful negative result.
\begin{theorem}
    \label{theorem:2}
    Let $\pi_{\tau}(\tau)$ be an absolutely continuous prior for $\tau > 0$ where
    $\E(\tau)<\infty$, then $\pi_d(0) = 0$ and the prior overfits (in the sense of  \Def{def:overfit}).
\end{theorem}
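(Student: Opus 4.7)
The plan is to reduce the statement to a one-line integrability argument after first working out the explicit form of the distance function $d(\tau)$ for this Gaussian random-effect model. Applying the standard closed-form expression for the Kullback--Leibler divergence between two zero-mean Gaussians with precisions $\tau\mm{R}$ and $\tau_0\mm{R}$, a direct calculation yields
\begin{equation*}
    \KLN{\mathcal{N}(\mm{0},\tau^{-1}\mm{R}^{-1})}{\mathcal{N}(\mm{0},\tau_0^{-1}\mm{R}^{-1})} \;=\; \tfrac{n}{2}\left[\tau_0/\tau - 1 + \log(\tau/\tau_0)\right],
\end{equation*}
where $n$ is the rank of $\mm{R}$; crucially, the answer does not depend on the specific structure of $\mm{R}$ beyond this rank. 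Approaching the base model by sending $\tau_0 \to \infty$, the leading-order behaviour of the KLD is $n\tau_0/(2\tau)$, so after taking the square root and absorbing the $\tau$-independent factor $\sqrt{n\tau_0}$ into the decay rate $\lambda$ of Principle~3 we obtain $d(\tau) \propto \tau^{-1/2}$.

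Given this form of $d$, the rest is a routine change of variables. With $\tau = d^{-2}$ and $|d\tau/dd| = 2d^{-3}$, the induced density on the distance scale is
\begin{equation*}
    \pi_d(d) \;=\; 2\, d^{-3}\, \pi_\tau(d^{-2}) \;=\; 2\,\tau^{3/2}\, \pi_\tau(\tau).
\end{equation*}
Rewriting the hypothesis $\E(\tau) < \infty$ in the $d$ coordinate by the same substitution gives
\begin{equation*}
    \E(\tau) \;=\; \int_0^\infty \tau\, \pi_\tau(\tau)\, d\tau \;=\; \int_0^\infty \frac{\pi_d(d)}{d^2}\, dd.
\end{equation*}
If $\pi_d$ were bounded below by some $c>0$ on a neighbourhood of the origin, the integrand would behave like $c/d^2$ near $d=0$ and the right-hand integral would diverge, contradicting the finiteness of the mean. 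Hence $\pi_d(d) \to 0$ as $d \to 0^+$, i.e.\ $\pi_d(0)=0$, which by Informal Definition~\ref{def:overfit} is precisely the overfitting conclusion.

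The main obstacle is the first step, because the KLD from an absolutely continuous Gaussian to the degenerate base at $\tau=\infty$ is formally infinite and $d(\tau)$ cannot be produced by a naive substitution. The resolution, sketched above, is to extract the $\tau$-dependent part of the leading-order expansion in $\tau_0$ and absorb the overall divergent $\tau$-independent factor into the user-chosen rate $\lambda$; this is legitimate because Principle~3 treats $\lambda$ as a free scaling constant. Once $d(\tau)\propto \tau^{-1/2}$ has been established, the integrability argument above is immediate and independent of the particular structure matrix $\mm{R}$.
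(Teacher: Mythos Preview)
Your argument is correct and follows essentially the same route as the paper. The paper proves Theorem~\ref{theorem:1} and says Theorem~\ref{theorem:2} ``follows along the same lines'': one computes the leading-order KLD (here $d(\tau)\propto\tau^{-1/2}$, exactly as in \Appendix{appendix:normal}), changes variables to the distance scale, and then uses the finite-moment hypothesis. The only cosmetic difference is that the paper phrases the last step as ``$\E(\tau)<\infty$ forces $\pi_\tau(\tau)=o(\tau^{-2})$, hence $\pi_d(d)\propto\tau^{3/2}\pi_\tau(\tau)\to 0$'', whereas you transport the moment condition to the $d$-scale and argue by non-integrability of $d^{-2}$ near the origin; these are the same statement viewed from the two sides of the change of variables. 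Both versions implicitly assume enough regularity of the density (e.g.\ continuity near the base model) to pass from ``not bounded below / $o(\tau^{-2})$'' to ``$\pi_d(0)=0$'', which is the level of rigour the paper adopts.
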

The proof is given in \Appendix{sec:proof}. Note that all commonly
used $\Gamma(a,b)$ priors with expected value $a/b < \infty$ will
overfit. \citet{art562} and \citet{col37} demonstrate overfitting due
to Gamma priors and suggest using a (half) Gaussian prior for the
standard deviation to overcome this problem, as suggested by
\citet{art547}; See also \citet{roos-held-2011} and the discussion of
\citet{lunn2009}.

The \PCP\ for $\tau$ is derived in
\Appendix{appendix:normal} as a type-2 Gumbel distribution \begin{equation}\label{eq:prec}%
    \pi(\tau) = \frac{\lambda}{2} \tau^{-3/2} \exp\left(-\lambda
      \tau^{-1/2}\right), \qquad \tau > 0.
\end{equation} The
density is given in~\Eref{eq:prec} and has no integer moments. This
prior also corresponds to an exponential distribution with rate $\lambda$ for the
standard deviation. The parameter $\lambda$ determines the magnitude
of the penalty for deviating from the base model and higher values
increase this penalty. As previously, we can determine $\lambda$ by
imposing a notion of \emph{scale} on the random effects. This requires
the user to specify $(U, \alpha)$ so that $\text{Prob}(1/\sqrt{\tau} >
U)=\alpha$. This implies that $\lambda = -\ln(\alpha)/U$. As a rule of
thumb, the marginal standard deviation of $\mm{x}$ with
$\mm{R}=\mm{I}$, after the type-2 Gumbel distribution for $\tau$ is
integrated out, is about $0.31U$ when $\alpha=0.01$. This means that
the choice $(U=0.968, \alpha=0.01)$ gives $\Stdev(\mm{x}) \approx
0.3$. The interpretation of the marginal standard deviation of a
random effect is more direct and intuitive than choosing
hyperparameters of a given prior.

The new prior is displayed in~\Fig{fig:prec} for $(U=0.3/0.31,
\alpha=0.01)$, together with the popular $\Gamma(1,b)$ prior, where
the shape is $1$ and rate is $b$.  In applications, the ``art'' is to
select an appropriate value for $b$.  We selected $b$ so that the
marginal variance for the random effects are equal for the two priors.
Panel~(a) shows the two priors on the precision scale and panel~(b)
shows the two priors on the distance scale.  The priors for low
precisions are quite different, and so are the tail behaviours. For
large $\tau$, the new prior behaves like $\tau^{-3/2}$, whereas the
Gamma prior goes like $\exp(-b\tau)$. This is a direct consequence of
the importance the new prior gives to the base model, i.e.\ the
absence of random effects. Panel~(b) demonstrates that the Gamma prior
has density zero at distance zero, and hence, does not prevent
over-fitting.

\begin{figure}[tbp]

    \begin{subfigure}{0.45\textwidth}
    \centering
    \includegraphics[width=0.9\textwidth,angle=270]{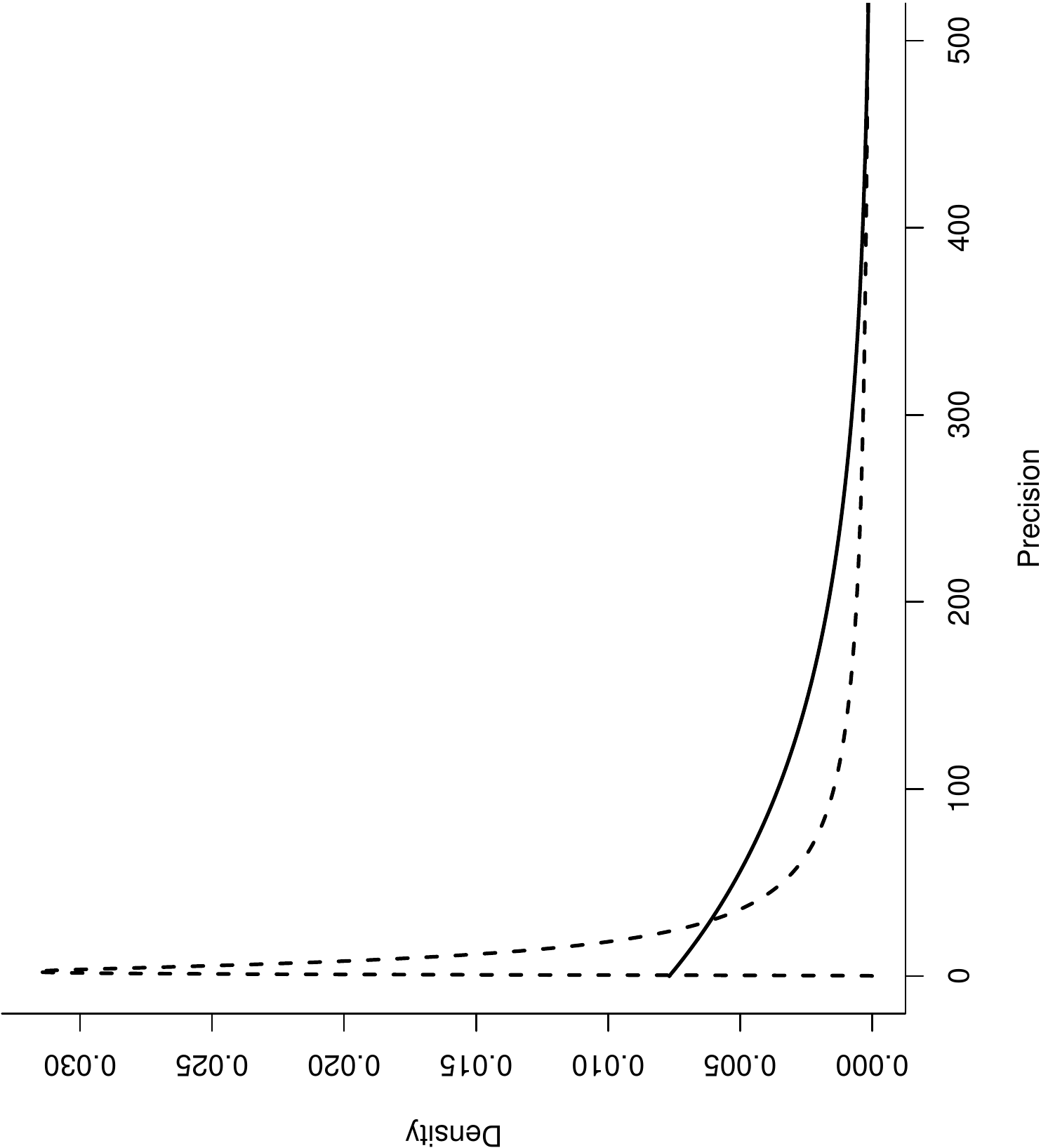}
    \caption{}\label{fig:prec-a}
   \end{subfigure}
   \centering
    \begin{subfigure}{0.45\textwidth}
        \includegraphics[width=0.9\textwidth,angle=270]{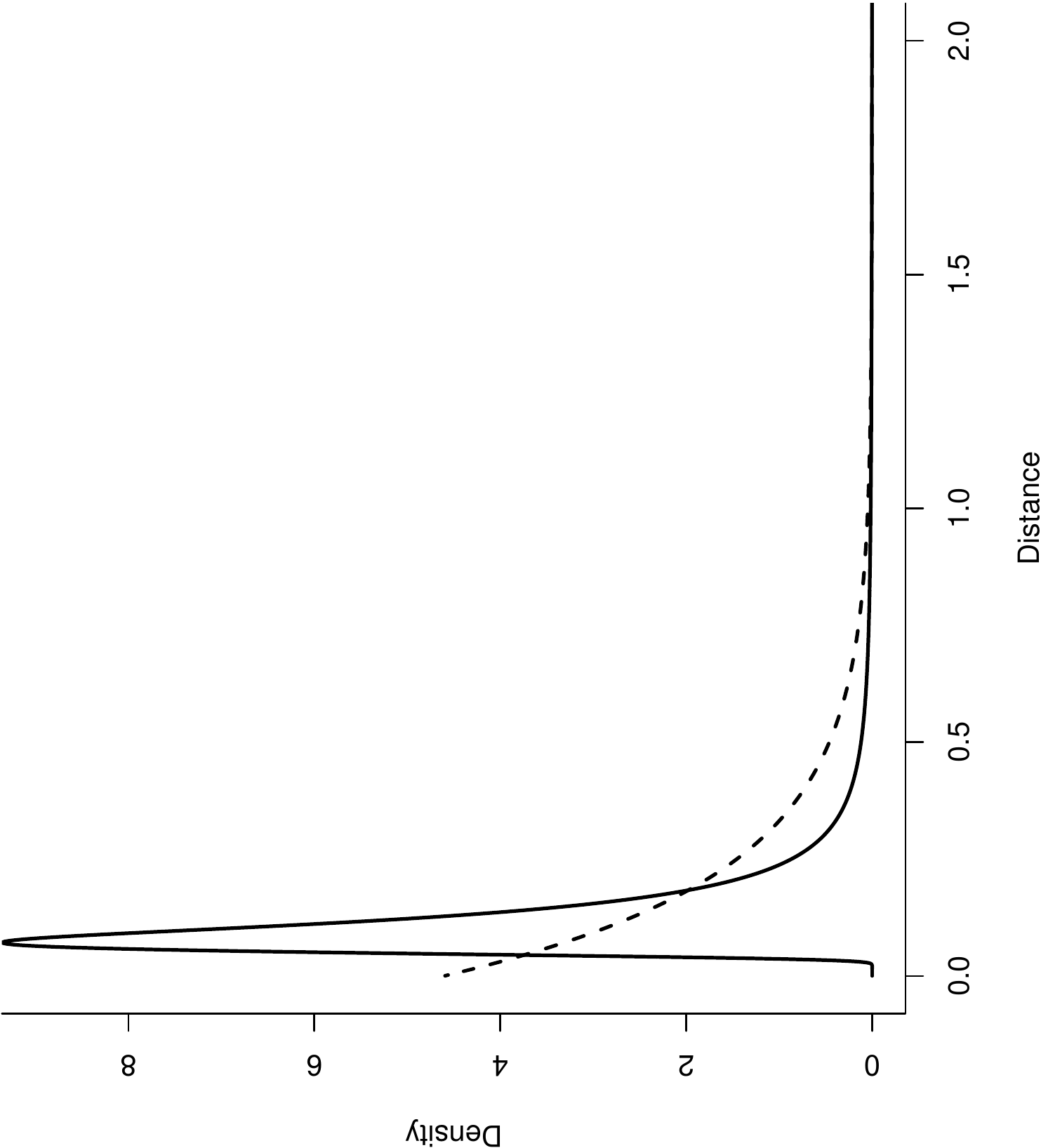}

 \caption{}\label{fig:prec-b}
    \end{subfigure}
        \caption{Panel~(a) displays the new prior (dashed) with parameters
        $(U=0.3/0.31, \alpha=0.01)$, and the $\Gamma(\text{shape}=1,
        \text{rate}=b)$ prior (solid). The value of $a$ is computed
        so that the marginal variances for the random effects are equal
        for the two priors, which leads to $b=0.0076$. Panel~(b) shows
        the same two priors on the distance scale demonstrating that
        the density for the Gamma prior is zero at distance zero.}
    \label{fig:prec}
\end{figure}

We end with a cautionary note about scaling issues for these
models and our third desideratum. If \mm{R} is full-rank, then it is usually scaled, or can be
scaled, so that $(\mm{R}^{-1})_{ii}=1$ for all $i$, hence $\tau$
represents the marginal precision. This leads to a simple
interpretation of $U$.  However, this is usually not the case if
$\mm{R}$ is singular like for spline and smoothing components; see
\cite{art521} for a discussion of this issue. Let the columns of
$\mm{V}$ represent the null-space of $\mm{R}$, so that $\mm{R}\mm{V} =
\mm{0}$. For smoothing spline models, the null-space is a
low-dimensional polynomial and \mm{R} defines the penalty for
deviating from the null space \citep[Sec.~3]{book80}. In order to
unify the interpretation of $U$, we can scale \mm{R} so that the
geometric mean (or some typical value) of the marginal variances of
$\mm{x} | \mm{V}^{T}\mm{x} = \mm{0}$ is one. In this way, $\tau$
represents the precision of the (marginal) deviation from the null
space.
%



\section{Some properties of \PCP s}
\label{sec3}

In this section, we investigate some basic properties of \PCP s for simple models and,
in general, we look at how specific properties of priors affect the
statistical results.  In particular, we will investigate when the
behaviour in the neighbourhood of the base model or the tail behaviour
is important to obtain sensible results.  For most
moderate-dimensional models, we find that the behaviour at the base
model is fundamentally important, while the tail behaviour is less
important. In contrast, in very high-dimensional settings, we find
that a heavier tail than that implied by the principle of constant
rate penalisation is required for optimal inference.

For practical reasons, in this section we restrict ourselves to a much 
smaller set of models than in the rest of the paper.  Sections \ref{sec:jeffries}--\ref{sec:hypothesis_testing} focus on \emph{direct observations of 
a single component model}, while Sections \ref{section:risk}--\ref{sec:sparsity} focus on estimating the mean of a normal distribution with known variance.  None of these models fall within the family of realistically complicated models that are the focus of this paper. Unfortunately, there is very little theory for the types of hierarchical models we are considering, so we are forced to consider these simpler models in order to gain intuition for the more interesting cases.

\subsection{Behaviour near the base model} \label{sec:jeffries}

To understand the \PCP\ construction better, we can study what happens
near $\xi=0$ when it is a regular point, using the connection between
KLD and the Fisher information metric. Let $I(\xi)$ be the Fisher
information at $\xi$. Using the well known asymptotic expansion
\begin{displaymath}
    \KLN{\pi(\mm{x} \mid \xi)}{\pi(\mm{x} \mid \xi = 0)} = \frac{1}{2}
    I(0)\xi^{2} + \text{higher order terms},
\end{displaymath}
a standard expansion reveals that our new prior behaves like
\begin{displaymath}
    \pi(\xi) = I(\xi)^{1/2} \exp\left(
      -\lambda\; m(\xi)
    \right) + \text{higher order terms}
\end{displaymath}
for $\lambda\xi$ close to zero.  Here, $m(\xi)$ is the distance
defined by the metric tensor $I(\xi)$, $m(\xi) = \int_0^{\xi}
\sqrt{I(s)} ds$, using tools from information geometry.  Close to the
base model, the \PCP\ is a tilted Jeffreys' prior for $\pi(\mm{x} |
\xi)$, where the amount of tilting is determined by the distance on
the Riemannian manifold to the base model scaled by the parameter
$\lambda$. The user-defined parameter $\lambda$ thus determines the
degree of informativeness in the prior.

\subsection{Large sample behaviour under the base model}

A good check when specifying a new class of priors is to consider the
asymptotic properties of the induced posterior. In particular, it is
useful to ensure that, for large sample sizes, we achieve frequentist
coverage.  While the Bernstein-von Mises theorem ensures that, for
regular models, asymptotic coverage is independent of (sensible) prior
choice, the situation may be different when the true parameter lies on
the boundary of the parameter space. In most examples in this paper,
the base model defines the boundary of the parameter space and prior
choice now plays an important role~\citep{bochkina2012bernstein}.

When the true parameter lies at the boundary of the parameter space,
there are two possible cases to be considered.  In the regular case,
where the derivative of the log-likelihood at this point is
asymptotically zero, \citet{bochkina2012bernstein} showed that the
large-sample behaviour depends entirely on the behaviour of the prior
near zero.  Furthermore, if the prior density is finite at the base
model, then the large sample behaviour is identical to that of the
maximum likelihood estimator \citep{self1987asymptotic}. Hence
Principle 1 ensures that \PCP s induce the correct asymptotic
behaviour.  Furthermore, the invariance of our construction implies
good asymptotic behaviour for any reparameterisation.

\subsection{\PCP s and Bayesian hypothesis testing} \label{sec:hypothesis_testing}

\PCP s are not built to be hypothesis testing priors and we do not
recommend their direct use as such.  We will show, however, that they
lead to consistent Bayes factors and suggest an invariant, weakly
informative decision theory-based approach to the testing problem.
With an eye towards invariance, in this section we will consider the
re-parameterisation $\zeta = d(\xi)$.

In order to show the effects of using \PCP s as hypothesis testing
priors, let us consider the large-sample behaviour of the precise test
$\zeta=0$ against $\zeta>0$.  We can use the results of
\citet{bochkina2012bernstein} to show the following in the regular
case.
\begin{theorem}
    \label{thm:BF}
    Under the conditions of \citet{bochkina2012bernstein}, the Bayes
    factor for the test $H_0: \zeta=0$ against $H_1: \zeta>0$, is
    consistent when the prior for $\zeta$ does not overfit. That is,
    $B_{01} \rightarrow \infty$ under $H_0$ and $B_{01} \rightarrow 0$
    under $H_1$.
\end{theorem}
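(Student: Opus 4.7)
The plan is to read off the result from the Bernstein–von Mises expansions that Bochkina and Green establish at the boundary of the parameter space, exploiting the fact that Principle~1 together with Informal Definition~\ref{def:overfit} forces $\pi_\zeta(0) \in (0,\infty)$. Write the Bayes factor as $B_{01} = m_0/m_1$ where $m_0 = \pi(\mm{y} \mid \zeta=0)$ and $m_1 = \int_0^{\infty} \pi(\mm{y}\mid\zeta)\,\pi(\zeta)\,d\zeta$. Since $d$ is a unidirectional distance, the induced prior on $\zeta = d(\xi)$ is exponential with rate $\lambda$, so $\pi(0) = \lambda$ and $\pi$ is finite, strictly positive, and smooth on a neighbourhood of $0$ — exactly the hypothesis needed for the boundary BvM. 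This observation, which ties the abstract ``non-overfitting'' requirement to a concrete analytical condition on the prior density, is the linchpin of the argument.

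For the alternative, $H_1: \zeta^\ast > 0$, the parameter lies in the interior, so a standard Laplace/BvM expansion applies and gives
\begin{equation*}
\log m_1 = \log \pi(\mm{y}\mid\hat\zeta) + \log\pi(\hat\zeta) + \tfrac{1}{2}\log\!\left(\tfrac{2\pi}{n I(\hat\zeta)}\right) + o(1),
\end{equation*}
with $\hat\zeta \to \zeta^\ast$. Meanwhile $\log m_0 = \log\pi(\mm{y}\mid 0)$, and by the law of large numbers $n^{-1}[\log\pi(\mm{y}\mid\hat\zeta) - \log\pi(\mm{y}\mid 0)] \to 2\KLN{\pi(\cdot\mid\zeta^\ast)}{\pi(\cdot\mid 0)} > 0$. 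Hence $\log B_{01}$ is $-n$ times a positive constant plus $O(\log n)$, which drives $B_{01}\to 0$ under $H_1$.

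For the null, $H_0: \zeta = 0$, the true value sits on the boundary and ordinary BvM fails, but the regular-case boundary BvM of Bochkina and Green shows that the localised likelihood ratio $\pi(\mm{y}\mid\zeta)/\pi(\mm{y}\mid 0)$ converges (uniformly on compact sets in the rescaled variable $u = \sqrt{n}\,\zeta$) to $\exp(-\tfrac{1}{2}I(0)u^{2} + u Z_n)$ with $Z_n = o_p(1)$ in the regular sub-case. Substituting into $m_1$ and using continuity of $\pi$ at $0$ yields the half-Laplace expansion
\begin{equation*}
m_1 = \pi(\mm{y}\mid 0)\,\pi(0)\,\sqrt{\tfrac{2\pi}{n\,I(0)}}\cdot\tfrac{1}{2}\cdot\bigl(1+o_p(1)\bigr),
\end{equation*}
so $B_{01}$ is of order $\sqrt{n}/\pi(0)$ and diverges under $H_0$. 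Here it is essential that $\pi(0)$ is finite and nonzero: an overfitting prior would either make the leading constant blow up (if $\pi(0)=0$, forcing a slower rate) or vanish, breaking the argument.

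The main obstacle is the boundary expansion under $H_0$: one has to justify swapping the limit and the integral over the full half-line (not just a compact neighbourhood) and track the $\tfrac{1}{2}$ from the half-normal truncation. This is where Bochkina and Green's uniform tightness and tail estimates for the posterior must be invoked verbatim, and where the regularity assumption ``derivative of the log-likelihood at $\zeta=0$ is asymptotically zero'' is used to exclude a non-vanishing drift $Z_n$ that would otherwise produce a finite, random limit for $B_{01}$ rather than divergence to $+\infty$. Verifying that the PCP satisfies all of Bochkina–Green's side conditions (smoothness and integrability near the boundary, exponential tail) is routine from the explicit form $\pi(\zeta) = \lambda e^{-\lambda\zeta}$.
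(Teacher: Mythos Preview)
Your argument is correct in outline and reaches the same $\sqrt{n}$ rate under $H_0$, but the paper takes a different and shorter route. For $H_1$ the paper simply cites \citet{johnson2010use} rather than running the interior Laplace/KL argument you sketch; your version is more self-contained but not needed given the reference. The substantive difference is under $H_0$: instead of Laplace-expanding the marginal likelihood $m_1$ directly, the paper uses the rearranged Bayes identity
\[
B_{01}(\mm{y}_n) = \frac{\pi(\mm{y}_n\mid\zeta=0)}{\pi(\mm{y}_n\mid\zeta)}\cdot\frac{\pi(\zeta\mid\mm{y}_n)}{\pi(\zeta)}
\]
valid for any $\zeta\ge 0$, and then invokes Bochkina--Green's boundary BvM to replace $\pi(\zeta\mid\mm{y}_n)$ by the limiting truncated normal density, evaluated at the point $\zeta=n^{-1/2}$. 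This immediately gives $B_{01}=\mathcal{O}_p(n^{1/2})$ once one checks the likelihood ratio factor is $\mathcal{O}_p(1)$ (which is Assumption~M1 in Bochkina--Green). The virtue of the paper's approach is that the ``main obstacle'' you identify---justifying the swap of limit and integral over the full half-line and tracking tightness/tail estimates---is entirely absorbed into the cited theorem: the posterior convergence result already packages all of that work, so nothing has to be re-proved. Your approach buys a more explicit constant (the $\tfrac12$ from the half-normal and the $\pi(0)$ in the denominator) and is arguably more transparent about where non-overfitting enters, at the cost of re-deriving machinery that Bochkina--Green already supply.
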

\citet{johnson2010use} point out for regular models, that the rates at
which these Bayes factors go to their respective limits under $H_0$
and $H_1$ are not symmetric. This suggests that the finite sample
properties of these tests will be suboptimal. The asymmetry can be
partly alleviated using the moment and inverse moment prior
construction of \citet{johnson2010use}, which can be extended to this
parameter invariant formulation in a straightforward way
\citep[see][]{rousseaumoment}.  The key idea of non-local priors is to
modify the prior density so that it is approximately zero in the
neighbourhood of $H_0$.  This forces a separation between the null and
alternative hypotheses that helps balance the asymptotic rates.
Precise rates are given in \Appendix{appendix:BF}.
 
The construction of non-local priors highlights the usual dichotomy
between Bayesian testing and Bayesian predictive modelling: in the
large sample limit, priors that lead to well-behaved Bayes factors
have bad predictive properties and vice versa.  In a far-reaching
paper, \citet{bernardo2011integrated} suggested that this dichotomy is
the result of asking the question the wrong way.  Rather than using
Bayes factors as an ``objective'' alternative to a proper decision
analysis, \citet{bernardo2011integrated} suggests that reference
priors combined with a well-constructed invariant loss function allows
for predictive priors to be used in testing problems.  This also
suggests that \PCP s can be used in place of reference priors to
construct a consistent, coherent and invariant hypothesis testing
framework based on decision theory.

\subsection{Risk results for the normal means model}
\label{section:risk}

A natural question to ask when presented with a new approach for
constructing priors is \emph{are the resulting estimators any good?}.
In this section, we investigate this question for the analytically
tractable normal means model:
\begin{equation}
    \label{eqn:normal_means}
    y_i | x_i, \sigma\sim {\mathcal N}( x_i, 1), \qquad
    x_i  | \sigma\sim {\mathcal N}(0, \sigma^2), \qquad
    \sigma \sim \pi_d(\sigma), \qquad i=1, \ldots, p.
\end{equation}
This model is the simplest one considered in this paper and gives us 
an opportunity to investigate whether constant rate
penalisation, which was used to argue for an exponential prior on the
distance scale, makes sense in this context.  For the precision
parameter of a Gaussian random effect, the distance parameter is the
standard deviation, $d=\sigma$, which allows us to leverage our
understanding of this parameter and consider alternatives to this
principle.

For an estimator $\delta(\cdot)$, define the mean-square risk as
$R(\mm{x}_0, \delta) = \E\left(\|\mm{x_0}-\delta(\mm{y})\|^2 \right)$,
where the expectation is taken over data $\mm{y} \sim N(\mm{x_0},\mm{I})$.  The standard
estimator $\delta_0(\mm{y}) = \mm{y}$ is the best invariant estimator
and obtains constant minimax risk $R(\mm{x}_0,\delta_0) =
p$. Classical results of
\citet{james1961estimation,stein1981estimation} show that this
estimator can be improved upon.  We will consider the risk properties
of the Bayes' estimators, which in this case is the posterior mean.

By noting that $\E(x_i|\mm{y}, \sigma) = y_i(1 - \E(\kappa| \mm{y}))$ for
the shrinkage parameter $\kappa = (1+\sigma^{2})^{-1}$,
\citet{polson2012half} derived the general form of the mean-square
risk.  Using a half-Cauchy distribution on the standard deviation
$\sigma$, as advocated by \citet{art547}, the resulting
density for $\kappa$ has a horseshoe shape with infinite peaks at zero
and one.  The estimators that come from this horseshoe prior have good
frequentist properties as the shape of the density of $\kappa$ allows
the component to have any level of shrinkage.  In general, the density
for $\kappa$ is related to $\pi_d(\sigma)$ by
\begin{displaymath}
    \pi_\kappa(\kappa) = \pi_d\left(\sqrt{\kappa^{-1} - 1}\right)
    \frac{1}{2\sqrt{\kappa^3 - \kappa^4}}.
\end{displaymath}
Straightforward asymptotics shows how the limit behaviour of
$\pi_d(\sigma)$ transfers into properties of $\pi_\kappa(\kappa)$.
\begin{theorem}
    \label{thm:shrinkage}
    If $\pi_d(\sigma)$ has tails lighter than a Student-t distribution with
    $2$ degrees of freedom, then $\pi_\kappa(0)=0$.  If
    $\pi_d(d) \leq \mathcal{O}(d)$ as $d\rightarrow 0$, then $\pi_\kappa(1) = 0$.
\end{theorem}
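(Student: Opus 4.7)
The proof is essentially a direct asymptotic analysis of the transformation formula
\begin{displaymath}
\pi_\kappa(\kappa) = \pi_d\!\left(\sqrt{\kappa^{-1} - 1}\right)\frac{1}{2\sqrt{\kappa^3 - \kappa^4}}
\end{displaymath}
at the two endpoints $\kappa=0$ and $\kappa=1$. The plan is to substitute $\sigma = \sqrt{\kappa^{-1}-1}$, read off the leading-order behaviour of $\sigma$ in terms of $\kappa$ near each endpoint, and combine with the assumed tail / origin behaviour of $\pi_d$.

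For the first claim, as $\kappa \to 0^+$ we have $\sigma = \sqrt{\kappa^{-1}-1} \sim \kappa^{-1/2} \to \infty$ and the Jacobian factor satisfies $(2\sqrt{\kappa^3-\kappa^4})^{-1} \sim \tfrac{1}{2}\kappa^{-3/2}$. A Student-$t$ density with $2$ degrees of freedom decays like $\sigma^{-3}$ at infinity, so the hypothesis that $\pi_d$ has strictly lighter tails means $\pi_d(\sigma) = o(\sigma^{-3})$ as $\sigma \to \infty$. Substituting $\sigma \sim \kappa^{-1/2}$ gives $\pi_d(\sigma) = o(\kappa^{3/2})$, and multiplying by the Jacobian yields $\pi_\kappa(\kappa) = o(\kappa^{3/2}) \cdot \mathcal{O}(\kappa^{-3/2}) = o(1)$, so $\pi_\kappa(0)=0$.

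For the second claim, as $\kappa \to 1^-$ the substitution gives $\sigma = \sqrt{(1-\kappa)/\kappa} \sim \sqrt{1-\kappa} \to 0$, while the Jacobian factor now satisfies $(2\sqrt{\kappa^3-\kappa^4})^{-1} = (2\kappa^{3/2}\sqrt{1-\kappa})^{-1} \sim \tfrac{1}{2}(1-\kappa)^{-1/2}$. Under the hypothesis that $\pi_d(d)$ vanishes at the origin at least as fast as a multiple of $d$, substituting $\sigma \sim \sqrt{1-\kappa}$ gives $\pi_d(\sigma) = \mathcal{O}(\sqrt{1-\kappa})$, and multiplying by the Jacobian produces a bounded expression of order $\mathcal{O}(1)$. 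To conclude $\pi_\kappa(1)=0$ one needs the sharper statement $\pi_d(d) = o(d)$ as $d\to 0$, in which case the same computation yields $\pi_\kappa(\kappa) = o(\sqrt{1-\kappa}) \cdot \mathcal{O}((1-\kappa)^{-1/2}) = o(1)$.

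No step is a real obstacle; the only thing to be careful about is the precise meaning of the asymptotic condition at the origin (the linear-decay rate $d$ is exactly the borderline case that makes $\pi_\kappa$ finite but generally nonzero at $1$, so a strict little-$o$ reading is required for the stated conclusion). Both parts will be presented as two short paragraphs, with the single display of the change-of-variables formula above serving as the common starting point.
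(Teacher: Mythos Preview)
Your approach is essentially identical to the paper's: its proof consists of the single observation that $\pi_\kappa(\kappa) = \mathcal{O}\!\left(\pi_d(\kappa^{-1/2})\,\kappa^{-3/2}\right)$ as $\kappa \downarrow 0$ and $\pi_\kappa(\kappa) = \mathcal{O}\!\left(\pi_d(\sqrt{1-\kappa})\,(1-\kappa)^{-1/2}\right)$ as $\kappa \uparrow 1$, which is exactly the asymptotic expansion you carry out. Your remark that the borderline hypothesis $\pi_d(d)=\mathcal{O}(d)$ only yields boundedness rather than vanishing at $\kappa=1$ is correct and the paper's proof shares this imprecision; a strict little-$o$ reading is indeed what is required for the stated conclusion.
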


\begin{figure}[h]
    \centering
    \begin{subfigure}{0.45\textwidth}
        \centering
        \includegraphics[width=0.95\textwidth]{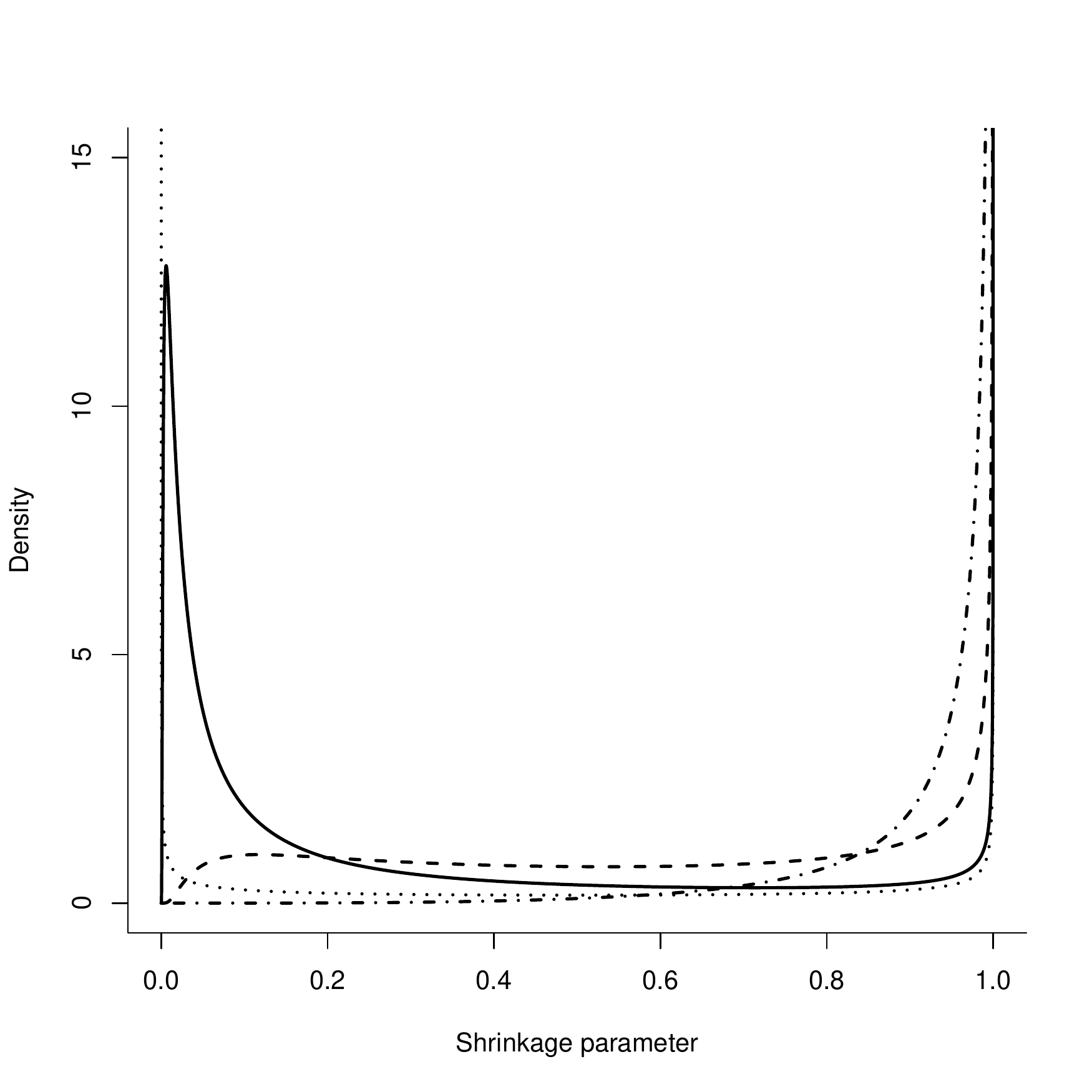}
        \caption{}
        \label{fig:riska}
    \end{subfigure}
    \quad
    \begin{subfigure}{0.45\textwidth}
        \centering \includegraphics[width=0.95\textwidth]{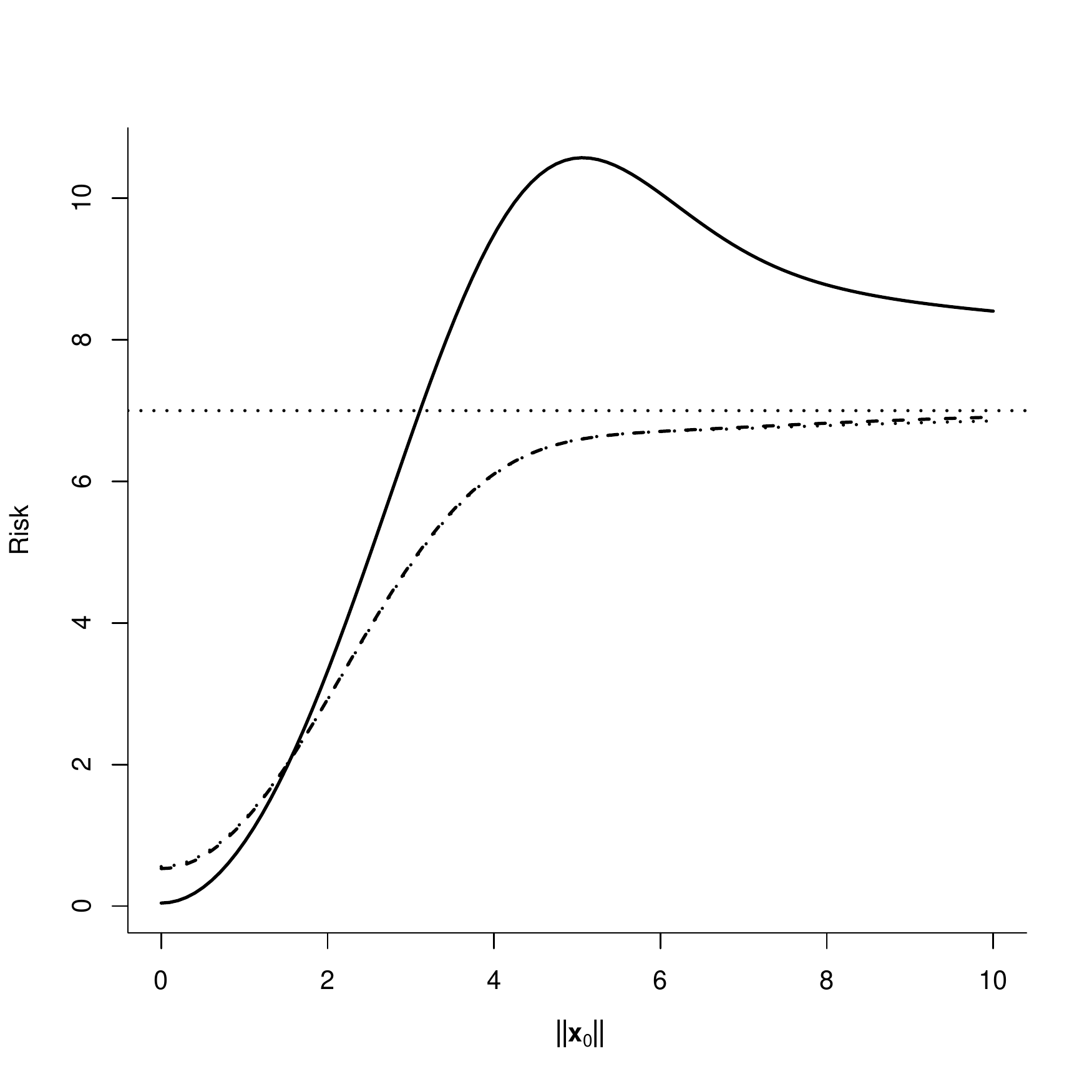}
        \caption{}
        \label{fig:riskb}
    \end{subfigure}
    \caption{Display~(a) shows the {implied prior on the shrinkage
            parameter $\kappa$ for several different priors on the
            distance scale. These priors are the half-Cauchy (dotted)
            and \PCP s with scaling parameter $\lambda=-\log(0.01)/U$
            for $U=1$ (dot-dashed), $U=5$ (dashed), and $U=20$
            (solid).}  Display~(b) shows the  mean squared risk of
            the Bayes' estimators for the normal means model with
            $p=7$ corresponding to different priors on the distance
            parameter, against $\|\mm{x}_0\|$.  The dotted line is the
            risk of the na\"{i}ve minimax estimator $\delta_0(\mm{x})
            = \mm{x}$.  The solid line corresponds to the \PCP\ with
            $U=1$. The dashed and dash-dot lines, which are
            essentially overlaid, correspond respectively to the \PCP\
            with $U=5$ and the half-Cauchy distribution.  }
\end{figure}

This result suggests that the \PCP\ will shrink strongly, putting
almost no prior mass near zero shrinkage, due to the relatively light tail of the
exponential. The scaling parameter $\lambda$ controls the decay of the
exponential, and the effect of $\lambda = -\log(\alpha)/U$, with
$\alpha=0.01$, on the implied priors on $\kappa$ is shown in
\Fig{fig:riska} for various choices of $U$.  For moderate $U$, the
\PCP\ still places a lot of prior mass near $\kappa=0$, in spite of
the density being zero at that point.  This suggests that the effect
of the light tail induced by the principle of constant penalisation
rate, is less than~\Thm{thm:shrinkage} might suggest. For comparison,
the horseshoe curve induced by the half-Cauchy prior is shown as the
dotted line in \Fig{fig:riska}. This demonstrates that \PCP s with
sensible scaling parameter place more mass at intermediate shrinkage
values than the half-Cauchy, which concentrates the probability mass
near $\kappa=0$ and $\kappa=1$.  The overall interpretation of \Fig{fig:riska} is that, 
for large enough $U$, the \PCP\ will lead to a slightly less efficient estimator 
than the half-Cauchy prior, while for small signals we expect them to behave similarly. 

\Fig{fig:riska} demonstrates also to which extent $U$ controls the
amount of information in the prior. The implied shrinkage prior for
$U=1$ (dot-dash line), corresponds to the weakly informative statement
that the effect is not larger than $3\sigma \approx 0.93$, has almost
no prior mass on $\kappa <0.5$.  This is consistent with the
information used to build the prior: if $\|\mm{x}_0\|<1$, the risk of
the trivial estimator $\delta(\mm{y}) = \mm{0}$ is significantly lower
than the standard estimator.

\Fig{fig:riskb} shows the risk using \PCP s with $U=1$ (solid line),
$U=5$ (dashed line), the half-Cauchy prior (dot-dashed line), as a
function of $\|\mm{x}_0\|$.  The mean-squared risk exceeds the minimax
rate for large $\|\mm{x}_0\|$ when $U=1$ which is consistent with the
prior/data mis-match inherent in badly mis-specifying $U=1$.  By
increasing $U$ to $5$, we obtain almost identical results to the
half-Cauchy prior, with a slight difference only for really large
$\|\mm{x}_0\|$.  Increasing $U$ decreases the difference.

The risk results obtained for the normal means model suggests that the
\PCP s give rise to estimators with good classical risk properties,
and that the heavy tail of the half-Cauchy is less important than the
finite prior density at the base model.  It also demonstrates that we
can put strong information into a \PCP, which we conjecture would be
useful for example Poisson and Binomial responses with link functions
like the log and logit, as we have strong structural prior knowledge
about the plausible range for the linear predictor in these cases \citep[Section 5]{polson2012half}.

\subsection{Sparsity priors}
\label{sec:sparsity}

When solving high-dimensional problems, it is often expedient to
assume that the underlying truth is sparse, meaning that only a small
number of the model components have a non-zero effect.  Good Bayesian
models that can recover sparse signals are difficult to build.
\citet{castillo2012needles} consider
spike-and-slab priors, that first select a subset of the components to
be non-zero and then place a continuous prior on these.  These priors
have been shown to have excellent theoretical properties, but their
practical implementation requires a difficult stochastic search
component.  A more pleasant
computational option builds a prior on the scaling parameter of the
individual model components.  In the common case where the component
has a normal distribution, the shrinkage properties of these priors
have received a lot of attention. Two examples of scale-mixtures of
normal distributions are the Horseshoe prior
\citep{carvalho2010horseshoe,van2014horseshoe} and the
Dirichlet-Laplace prior \citep{bhattacharya2014dirichlet} which both
have been shown to have comparable asymptotic behaviour to
spike-and-slab priors when attempting to infer the sparse mean of a
high dimensional normal distribution.  On the other hand,
\citet{castillo2014bayesian} showed that the Bayesian generalisation of
the LASSO \citep{park2008bayesian}, which can be represented as a
scale mixture of normals, gives rise to a posterior that contracts
much slower than the minimax rate.  This stands in contrast to the
frequentist situation, where the LASSO obtains almost optimal rates.

For concreteness, let us consider the problem
\begin{displaymath}
    \mm{y}_i \sim
    \pi(\mm{y} \mid \mm{\beta}), \qquad \mm{\beta} \sim {\mathcal N}(\mm{0},
    \mm{D}), \qquad {D}_{ii}^{-1} \stackrel{\text{iid}}{\sim}
    \pi(\tau),
\end{displaymath}
where $\pi(\mm{y}|\mm{\beta})$ is some data-generating distribution,
$\mm{\beta}$ is a $p$--dimensional vector of covariate weights,
$\pi(\tau)$ is the \PCP\ in \eqref{eq:prec} for the precisions
$\{D_{ii}^{-1}\}$ of the covariate weights. Let us assume that the 
observed data was generated from the
above model with true parameter $\mm{\beta}_0$ that has only $s_0$
non-zero entries. We will assume that $s_0 = {o}(p)$. Finally, in
order to ensure \emph{a priori} exchangeability, we set the scaling
parameter $\lambda$ in each \PCP\ to be the same.

This then begs the question: does an exponential prior on the standard
deviation, which is the \PCP\ in this section, make a good shrinkage
prior? In this section we will show that the answer is no. 
The problem with the basic \PCP\ for this problem is that the base
model has been incorrectly specified.  The base model that a $p$--dimensional
vector is sparse is not the same as the base model that each of the
$p$ components is independently zero and hence the prior encodes the wrong information.
  A more correct application of the principles in \Sec{sec:principles} would lead to a \PCP\ that
first selects the number of non-zero components and then puts i.i.d.\
\PCP s on each of the selected components.  If we measure complexity
by the number of non-zero components, the principle of constant rate
penalisation requires an exponential prior on the number of
components, which matches with the theory of
\citet{castillo2012needles}.  Hence, the failure of $p$ independent \PCP s to capture
sparsity is not unexpected.

To conclude this section, we show the reason for the failure of independent \PCP s to capture sparsity.
The problem is that the induced prior over $\mm{\beta}$ must have mass on values with a few large and many small components.  \Thm{thm:expon_decay} shows that the values of $\lambda$ 
that puts sufficient weight on approximately sparse models does not allow these models to have any large components.
Fortunately, the principled approach allows us to fix the problem by
simply replacing the principle of constant rate penalisation with
something more appropriate (and consistent with D8).  Specifically, in order for the prior to put appropriate mass around models with the true sparsity, the prior on the standard deviation needs to have a heavier tail than an exponential.

As $\pi(\tau)$ is an absolutely continuous distribution, the na\"ive
\PCP\ will never result in exactly sparse signals.  This leads us to
take up the framework of \citet{bhattacharya2014dirichlet}, who
consider the $\delta$--support of a vector
\begin{displaymath}
    \supp_\delta(\mm{\beta}) = \{i: |\beta_i|>\delta\},
\end{displaymath}
and define a vector $\mm{x}$ to be $\delta$--sparse if
$|\supp_\delta(\mm{\beta})| \ll p$.  Following
\citet{bhattacharya2014dirichlet}, we take $\delta =
\mathcal{O}(p^{-1})$. As $s_0 =o(p)$, this ensures that the non-zero
entries are small enough not to have a large effect on
$\norm{\mm{\beta}}$.

For fixed $\delta$, it follows that the $\delta$--sparsity of
$\mm{\beta}$ has a $\text{Binomial}(p, \alpha_p)$ distribution, where $\alpha_p =
\Prob(|\beta_i| > \delta_p)$.  If we had access to an oracle that told
us the true sparsity $s_0$, it would follow that a good choice of
$\lambda$ would ensure $\alpha_p = p^{-1}s_0$.
\begin{theorem}
    \label{thm:expon_decay}
    If the true sparsity $s_0 = o(p)$, then the oracle value of
    $\lambda$ that ensures that the $p^{-1}$--sparsity of $\mm{\beta}$
    is \emph{a priori} centred at the true sparsity grows like $
    \lambda \sim \mathcal{O}\left(\frac{p}{\log(p)}\right)$.
\end{theorem}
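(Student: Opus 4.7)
The plan is to reduce the centring condition to a scalar transcendental equation in $\lambda$ and then to extract its large-$p$ behaviour via a saddle-point analysis. First, since the PC prior in~\eqref{eq:prec} on $\tau$ corresponds to $\sigma := \tau^{-1/2}$ being $\mathrm{Exp}(\lambda)$-distributed, I would represent each covariate weight as $\beta_i = \sigma_i Z_i$ with $\sigma_i \sim \mathrm{Exp}(\lambda)$ and $Z_i \sim {\mathcal N}(0,1)$, independent and iid across $i$. Exchangeability then makes the $\delta_p$-support of $\mm{\beta}$ a $\mathrm{Binomial}(p,\alpha_p)$ variable with $\alpha_p := \Prob(|\beta_1|>\delta_p)$, and the oracle centring requirement becomes simply $p\alpha_p = s_0$, i.e.\ $\alpha_p = s_0/p$.

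The next step is to evaluate $\alpha_p$ in closed form. Conditioning on $Z_1$ and integrating out the exponential $\sigma_1$ gives
\begin{equation*}
\alpha_p = \E\!\left[\exp\!\left(-\frac{\lambda\delta_p}{|Z_1|}\right)\right] = \sqrt{\tfrac{2}{\pi}} \int_0^\infty \exp\!\left(-\tfrac{\lambda\delta_p}{z} - \tfrac{z^2}{2}\right) dz.
\end{equation*}
Since $s_0 = o(p)$ forces $\alpha_p \to 0$, the combined parameter $a := \lambda\delta_p$ must diverge, placing the integral in the regime where Laplace's method applies. The exponent $f(z) = a/z + z^2/2$ has a unique interior minimum at $z^\star = a^{1/3}$ with $f(z^\star)=\tfrac{3}{2}a^{2/3}$ and $f''(z^\star)=3$, yielding
\begin{equation*}
\alpha_p = \sqrt{\tfrac{4}{3}}\,\exp\!\left(-\tfrac{3}{2}(\lambda\delta_p)^{2/3}\right)(1+o(1)).
\end{equation*}
Inverting $\alpha_p = s_0/p$ by taking logarithms and inserting $\delta_p = \Theta(p^{-1})$ then gives the asymptotic identity $(\lambda/p)^{2/3} \sim \tfrac{2}{3}\log(p/s_0)$, from which the announced rate of growth of $\lambda$ follows by solving.

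The main obstacle is making the Laplace approximation quantitative uniformly in $a$: the saddle contribution must dominate both the region $z \ll z^\star$, where the $a/z$ term induces super-polynomial decay, and the region $z \gg z^\star$, where Gaussian decay takes over, with an error term too small to corrupt the logarithmic corrections that pin down $\lambda$. Standard Watson-type tail bounds supply this, and a matching lower bound follows either by restricting the integral to a shrinking neighbourhood of $z^\star$ or by the elementary estimate $\alpha_p \geq \Prob(\sigma_1 > \delta_p)\,\Prob(|Z_1|>1) = 2\bar\Phi(1)\,e^{-\lambda\delta_p}$. A final consistency check is that at this scale of $\lambda$ the per-component mass on $[-\delta_p,\delta_p]$ is $\approx 1 - s_0/p$, so the prior can centre the $\delta_p$-sparsity at $s_0$ only by collapsing each individual $\beta_i$ essentially to zero, which is precisely the failure of independent \PCP s to capture sparsity that the theorem is meant to quantify.
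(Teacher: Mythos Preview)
Your route differs from the paper's. The paper obtains the theorem as a special case of a more general result (Theorem~\ref{thm:general_sparse}) valid for any non-increasing density $\pi$ on $\sigma$ with $\pi(0)=1$, rescaled as $\pi^\lambda(\sigma)=\lambda\pi(\lambda\sigma)$: there one bounds $1-\alpha_p=\Prob(|\beta_1|\le\delta_p)$ above and below by quantities of order $\lambda\,\delta_p\log(\delta_p^{-1})$ via elementary inequalities (essentially $1-e^{-x}\asymp\min(1,x)$ together with exponential-integral estimates), and then specialises to the exponential prior through a Lambert-$W$ inversion. Your direct saddle-point analysis of $\alpha_p=\E[\exp(-a/|Z|)]$ is sharper for the exponential case; the paper's argument buys generality across the whole rescaled family, which it then also uses to discuss the half-Cauchy.

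There is, however, a genuine gap at your final step. Solving $(\lambda\delta_p)^{2/3}\sim\tfrac{2}{3}\log(p/s_0)$ with $\delta_p=\Theta(p^{-1})$ gives
\[
\lambda \;\sim\; p\,\Bigl(\tfrac{2}{3}\log(p/s_0)\Bigr)^{3/2},
\]
which is \emph{not} the announced rate $\mathcal{O}(p/\log p)$. Indeed your own argument already forces $a=\lambda\delta_p\to\infty$ --- otherwise $\alpha_p=\E[e^{-a/|Z|}]$ stays bounded away from zero by dominated convergence --- so $\lambda\gg p$, which immediately rules out $\lambda=\Theta(p/\log p)$. You therefore cannot write that ``the announced rate of growth of $\lambda$ follows by solving'': what your computation produces is a different, strictly larger, order. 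Your analysis does imply the lower bound $\lambda\gtrsim p/\log p$, and if the theorem is read in that one-sided sense --- which is exactly what the paper's coarser inequality $1-\alpha_p\lesssim\lambda\,\delta_p\log(\delta_p^{-1})$ delivers --- then your argument goes through; but you must say so explicitly rather than claim to have recovered a two-sided rate that your own Laplace asymptotics in fact contradicts.
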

Theorem~\ref{thm:expon_decay} shows that $\lambda$ is required to
increase with $p$, which corresponds to a vanishing upper bound $U =
\mathcal{O}(p^{-1} \log(p))$.  Hence, it is impossible for the above
\PCP\ to have mass on signals that are simultaneously sparse and
moderately sized.

The failure of \PCP s to provide useful shrinkage priors is
essentially down to the tails specified by the principle of constant
rate penalisation.  This principle was designed to avoid having to
interpret a change of concavity on the distance scale for a general
parameter.  However, in this problem, the distance is the standard
deviation, which is a well-understood statistical quantity.  Hence, it
makes sense to put a prior on the distance with a heavier tail in this
case.  In particular, if we use a half-Cauchy prior in place of an
exponential, we recover the horseshoe prior, which has good shrinkage
properties. In this case \Thm{thm:general_sparse}, which is a
generalisation of \Thm{thm:expon_decay}, shows that the inverse
scaling parameter of the half-Cauchy must be at least
$\mathcal{O}(p/\log(p))$, which corresponds up to a log factor with
the optimal contraction results of \citet{van2014horseshoe}.  We note
that this is the only situation we have encountered in which the
exponential tails of \PCP s are problematic.



\section{The Student-t case}
\label{sec:t}

In this section we will study the Student-t case focusing solely on
the degrees of freedom (d.o.f.) parameter $\nu = 1/\xi$, keeping the
precision fixed. This is an important non-trivial case, since the
Student-t distribution is often used to robustify the Gaussian
distribution. Inference based on the Gaussian distribution is well
known to be vulnerable to model deviation. This can result in a
significant degradation of estimation performance
\citep{art538,art539,art540}, and using the Student-t distribution can
account for deviations caused by heavier tails, see for example
\citet{art545}, \citet{art541} and \citet{art565} for applications in
the econometric literature.

The base model for the Student-t distribution is the Gaussian, which
occurs when $\nu = \infty$. To maintain the interpretability of the
precision parameter of the distribution when $\nu < \infty$, we will
use a standardised version of the Student-t with unit precision for
all $\nu > 2$. This follows the advice from~\citet{art543}, promoting
that a parameter with a more orthogonal interpretation will ease the
(later) joint prior specification of the precision and the d.o.f.. Our
interpretation of the Occam's razor principle implies that the mode of
$\pi_d(d)$ must be at $d=0$, corresponding to the Gaussian
distribution. It turns out that any proper prior for $\nu$ with finite
expectation violates this principle and promotes overfitting, as
$\pi_d(0) = 0$.
\begin{theorem}
    \label{theorem:1}
    Let $\pi_{\nu}(\nu)$ be an absolutely continuous prior for $\nu>2$ where
    $\E(\nu)<\infty$, then $\pi_d(0) = 0$ and the prior overfits in the sense of  \Def{def:overfit}.
\end{theorem}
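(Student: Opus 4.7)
The plan is to mirror the strategy used for \Thm{theorem:2}: show that $\nu \to \infty$ (the base model) corresponds to $d \to 0$ at a specific rate, and then translate the moment bound $\E(\nu) < \infty$ through a change of variables into a condition on $\pi_d$ near zero.

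The main analytic step is to establish that $d(\nu) \sim c/\nu$ as $\nu \to \infty$ for some $c > 0$. Writing $t_\nu$ for the standardised Student-t density with unit precision and $\phi$ for the standard normal density, a Taylor expansion of $(1 + x^2/(\nu-2))^{-(\nu+1)/2}$ combined with Stirling's formula for the normalising constant yields
\begin{equation*}
\frac{t_\nu(x)}{\phi(x)} = 1 + \frac{h(x)}{\nu} + O(1/\nu^2),
\end{equation*}
for an explicit polynomial $h$. Since both $t_\nu$ and $\phi$ integrate to one, the $1/\nu$ term in $\int (t_\nu - \phi) dx$ must vanish, giving $\int h(x)\phi(x)\,dx = 0$. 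Plugging the expansion of $r = t_\nu/\phi - 1$ into $\int t_\nu \log(t_\nu/\phi)\,dx = \int \phi(1+r)\log(1+r)\,dx = \int \phi r\,dx + \tfrac12 \int \phi r^2\,dx + O(r^3)$, the first-order term cancels and
\begin{equation*}
\KLN{t_\nu}{\phi} = \frac{1}{2\nu^2}\int h(x)^2 \phi(x)\,dx + O(1/\nu^3),
\end{equation*}
with a strictly positive leading constant. Hence $d(\nu) = \sqrt{2\KLN{t_\nu}{\phi}} = c/\nu + O(1/\nu^2)$, and there exist $\delta, c' > 0$ with $\nu(d) \geq c'/d$ for every $d \in (0, \delta)$.

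With this asymptotic in hand I would change variables. Because $d$ is smooth and strictly decreasing in $\nu$ for large $\nu$, $\pi_d(d) = \pi_\nu(\nu(d))|\nu'(d)|$ and
\begin{equation*}
\infty > \E(\nu) \geq \int_0^\delta \nu(d)\,\pi_d(d)\,dd \geq c' \int_0^\delta \frac{\pi_d(d)}{d}\,dd.
\end{equation*}
Setting $F(\delta) = \int_0^\delta \pi_d(s)\,ds$ and using $s \leq \delta$ on $(0,\delta)$,
\begin{equation*}
\frac{F(\delta)}{\delta} = \frac{1}{\delta}\int_0^\delta s \cdot \frac{\pi_d(s)}{s}\,ds \leq \int_0^\delta \frac{\pi_d(s)}{s}\,ds \to 0 \quad \text{as } \delta \to 0^+,
\end{equation*}
which is the precise content of $\pi_d(0) = 0$. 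Overfitting then follows from \Def{def:overfit}.

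The hard part is the asymptotic expansion of $\KLN{t_\nu}{\phi}$. The decisive fact is that standardising the Student-t to unit precision matches the zeroth (and, at leading order, the second) moment of $\phi$, so the first-order term in the KL expansion vanishes and the rate is $O(1/\nu^2)$ rather than $O(1/\nu)$. This produces exactly the rate $d \sim 1/\nu$ required for $\nu(d)$ to blow up fast enough that the bound $\E(\nu) < \infty$ forces $\pi_d(d)/d$ to be integrable at the origin, which is the crux of the final step.
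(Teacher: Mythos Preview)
Your proof is correct and follows the same strategy as the paper: establish $\KLN{t_\nu}{\phi} \sim c/\nu^{2}$ (the paper quotes the explicit expansion $\tfrac{3}{4}\nu^{-2} + \tfrac{3}{2}\nu^{-3} + \mathcal{O}(\nu^{-4})$), deduce $d(\nu)\sim c'/\nu$, and push the moment bound through the change of variables. Your final step is actually more careful than the paper's: the paper asserts that $\E(\nu)<\infty$ forces $\pi_\nu(\nu)=o(\nu^{-2})$ pointwise---which is not literally true for an arbitrary absolutely continuous density---whereas your integral argument yielding $F(\delta)/\delta\to 0$ gives the Lebesgue-point version of $\pi_d(0)=0$ without that gap.
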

The proof is given in \Appendix{sec:proof}.  The intuition is that if
we want $\nu = \infty$ to be central in the prior, a finite
expectation will bound the tail behaviour so that we cannot have the
mode (or a non-zero density) at $d=0$.

Commonly used priors for $\nu$ include the exponential~\citep{art544}
or the uniform (on a finite interval)~distribution \citep{art545},
which, however, place zero density mass onto the base model causing
potentially overfitting according to Theorem~\ref{theorem:1}.  Notable
exceptions are the work of~\citet{art546} who computed (various forms
of) Jeffreys' priors in the case of linear regression models with
Student-t errors, \citet{art565} and \citet{art563} who use a proper
prior with no integer moments, and \citet{art559} who provide an
objective prior for discrete values of the d.o.f.; 

Consider the exponential prior for $\nu > 2$ with mean equal 
to $5$, $10$ and $20$. \Fig{fig:2}~(a) displays these priors converted to the 
distance scale $d = \sqrt{2\;\text{KLD}}$.
Similarly, \Fig{fig:2}~(b) displays the corresponding priors resulting from 
uniform priors on $\nu=2$ to $20$, $50$ and $100$.
\begin{figure}[tbp]
\begin{subfigure}{0.3\textwidth}
\includegraphics[width=0.9\textwidth,angle=270]{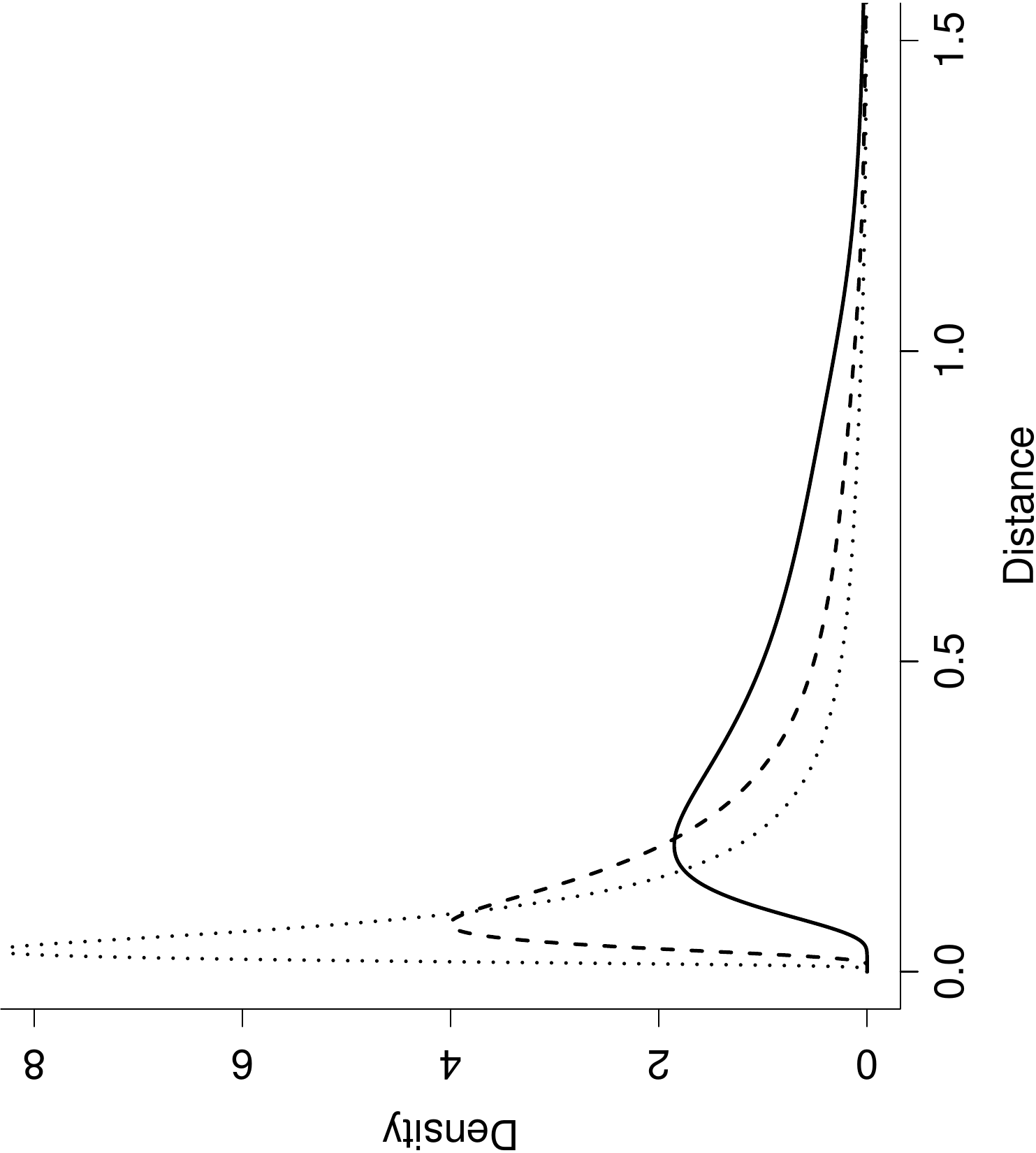}
\caption{}
        \label{fig:ta}
    \end{subfigure}
\begin{subfigure}{0.3\textwidth}
\includegraphics[width=0.9\textwidth,angle=270]{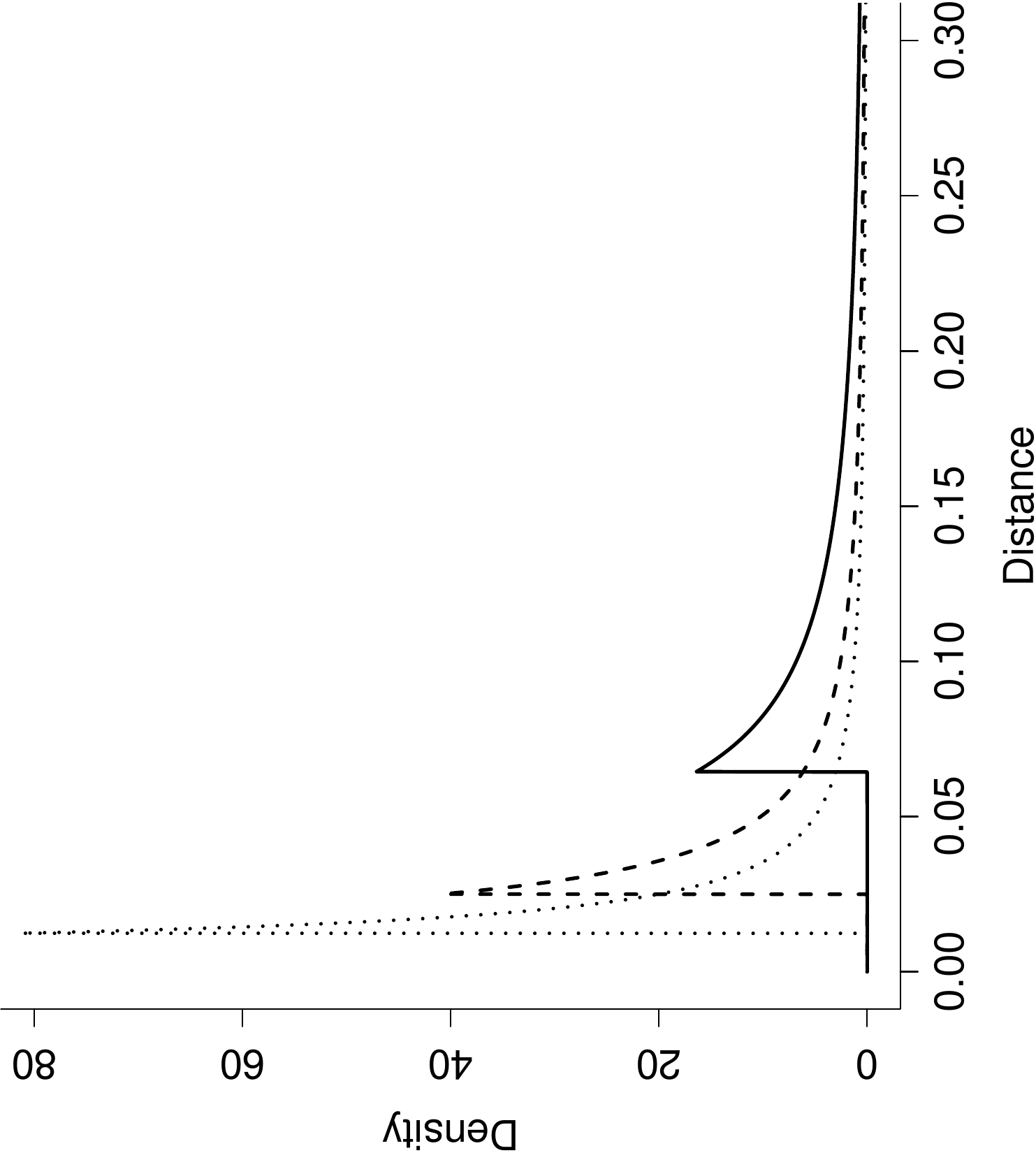}
\caption{}
        \label{fig:tb}
    \end{subfigure}
\begin{subfigure}{0.3\textwidth}
\includegraphics[width=0.9\textwidth,angle=270]{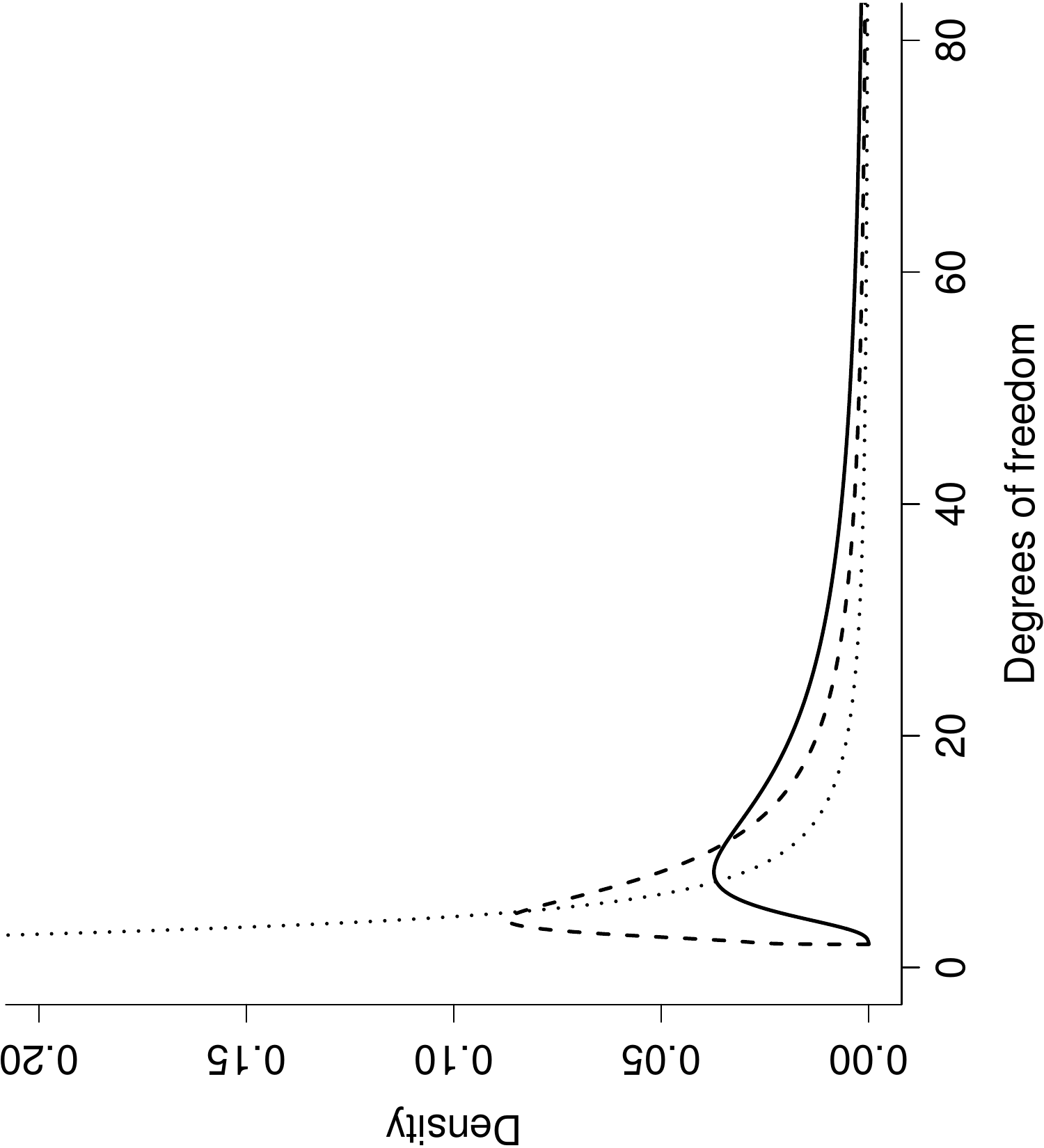}
\caption{}
        \label{fig:tc}
    \end{subfigure}
        
    \caption{Panel~(a) shows the exponential prior for $\nu>2$  with mean 
        equal to 5 (solid), 10 (dashed) and 20 (dotted) transformed to the
        distance scale.  Similarly, Panel~(b) shows
        the uniform prior for $\nu$ from 2, to $20$ (solid),
        $50$ (dashed) and $100$ (dotted) on the distance
        scale. Panel~(c) shows the PC priors
        for $\nu$ with $U=10$ and $\alpha=0.2$ (solid),
        $\alpha=0.5$ (dashed) and $\alpha=0.8$ (dotted) on the d.o.f scale.}
    \label{fig:2}
\end{figure}
As predicted by Theorem~\ref{theorem:1}, the density at $d \rightarrow
0$ is zero for all the six priors.  For the exponential priors in
panel~(a), the mode corresponds to $\nu = 7.0$, $17.0$ and $37.0$,
respectively.  This implies that Occam's razor does not apply, in the
sense that the exponential prior \emph{defines} that the posterior
will shrink towards the respective mode rather than towards the
Gaussian base model. The uniform prior behaves similarly as we
increase the upper limit, we put more mass to large d.o.fs and the
mode moves to the left. However, the finite support implies that the
density is zero up to the point defined by the upper limit. If the
\emph{true} distribution was Gaussian then we would overfit the data
using any of these priors.

The PC prior instead is defined to have the mode at $d=0$. To choose
the parameter $\lambda$ for the exponential distribution for $d$, a
notion of \emph{scale} is required from the user.  A simple choice is
to provide $(U, \alpha)$ such that $\Prob( \nu < U) =
\alpha$, giving $\lambda =
-\log(\alpha)/d(U)$. \Fig{fig:2}~(c) shows the corresponding
priors for $\nu$ setting $U=10$ and $\alpha=0.2, 0.5$ and
$0.8$.  Here, increasing $\alpha$ implies increasing the deviance from
the Gaussian base model.

To investigate the properties of the PC prior on $\nu$ and compare it
with the exponential prior on $\nu$, we performed a simulation
experiment using the model $y_i = \epsilon_i$, $i=1, \ldots, n$, where
$\epsilon$ is Student-t distributed with unknown d.o.f.\,and fixed
unit precision. Similar results are obtained for more
involved models \citep{tech117}.  We simulated data sets with $n=100,
1\,000, 10\,000$.  For the d.o.f.\,~we used $\nu = 5, 10, 20, 100$, to
study the effect of the priors under different levels of the kurtosis.
For each of the $12$ scenarios we simulated $1\,000$ different data
sets, for each of which we computed the posterior distribution of
$\nu$. Then, we formed the equal-weight mixture over all the $1\,000$
realisations to approximate the expected behaviour of the posterior
distribution over different realisations of the data. \Fig{fig:3}
shows the $0.025$, $0.5$ and $0.975$-quantiles of this mixture of
posterior distributions of $\nu$ when using the PC prior with
$U=10$ and $\alpha = 0.2, 0.3, 0.4, 0.5, 0.6, 0.7$ and $0.8$,
and the exponential prior with mean $5$, $10$, $20$ and $100$.  Each
row in~\Fig{fig:3} corresponds to a different d.o.f.\ while each
column corresponds to a different sample size~$n$.

\begin{figure}[tb]
    \centering \includegraphics[width=\textwidth]{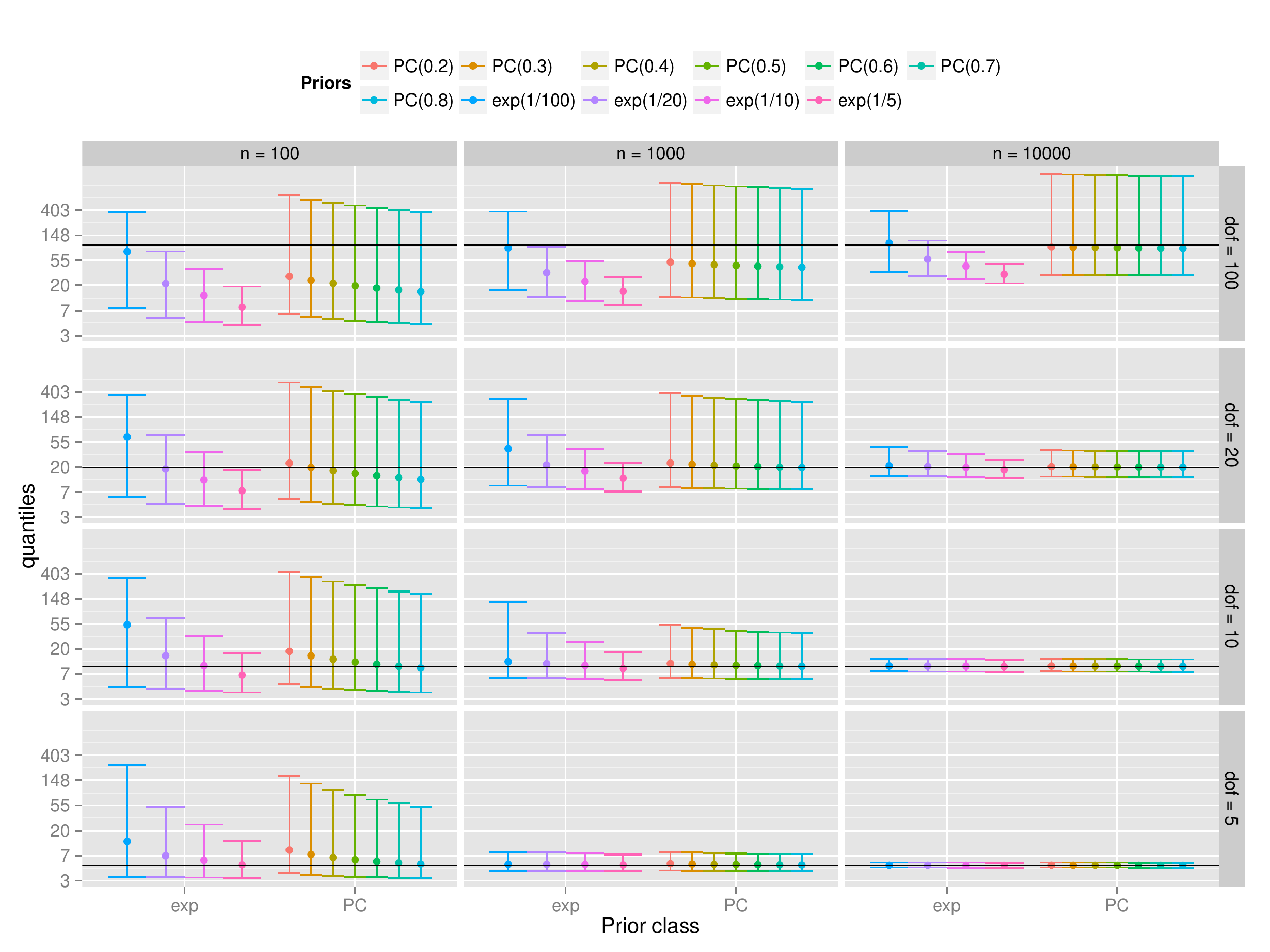}
    \caption{The $0.025$-, $0.5$- and $0.975$-quantile estimates
        obtained from an equal-weight mixture of posterior
        distributions of $\nu$ when fitting a Student-t errors model
        with different priors for $\nu$ over $1\,000$ datasets, for
        each of the $12$ scenarios with sample sizes $n = 100, 1\,000,
        10\,000$ and d.o.f.\ $\nu = 100, 20, 10, 5$.  The first four
        intervals in each scenario correspond to exponential priors
        with mean $100$, $20$, $10$, $5$, respectively.  The last
        seven intervals in each scenario correspond to the PC prior
        with $U = 10$ and $\alpha = 0.2, 0.3, 0.4, 0.5, 0.6,
        0.7$ and $0.8$.  }
    \label{fig:3}
\end{figure}

The first row in~\Fig{fig:3} displays the results with $\nu=100$ in
the simulation which is close to Gaussian observations. Using the PC
priors results in wide credible intervals in the presence of few data
points, but as more data are provided the model learns about the high
d.o.f..  Using an exponential prior for $\nu$, the posterior quantiles
obtained depend strongly on the mean of the prior.  This difference
seems to remain even with $n=1\,000$ and $n=10\,000$, indicating that
the prior still dominates the data. For all scenarios the intervals
obtained with the exponential prior for $\nu$ look similar, with the
exception of scenarios with low d.o.f.\ and high sample size, for
which the information in the data is strong enough to dominate this
highly informative prior.

If we study \Fig{fig:3} column-wise and top-down, we note that the
performance of the \PCP s are barely affected by the change in
$\alpha$ and they perform well for all sample sizes.  For the
exponential priors when $n=100$, we basically see no difference in
inference for $\nu$ comparing the near Gaussian scenario ($\nu=100$)
with the strongly heavy tailed one $(\nu=5)$.  The implication is that
the results will be much more influenced by the choice of the mean in
the exponential prior than by the d.o.f.\ in the data.  Similarly, the
exponential priors continue to be highly informative even for large
sample sizes. This informative behaviour can be seen in particular in
the first row ($\nu = \infty$), and is consistent with
\Def{def:overfit} and Theorem~\ref{theorem:1} .

We also inspected the coverage at a 95\% level for all priors and
simulation settings. The coverage probabilities for all \PCP s were
very similar and always at least $0.9$, whereby they tended to be a
bit too high compared to the nominal level. For the exponential priors
the results are ambiguous, either the coverage probabilities are
sensible while still being higher than the nominal level or they are
far too low, in several settings even zero.

This example sheds light on the consistency issue discussed by
\citet[Ch.~1]{book117}. A prior distribution represents prior belief,
learnt before data is observed, but it also fully specifies the
Bayesian learning model. As more data arrives it is expected that the
learning model goes in the right direction. If it does not, then the
learning model (prior) has not been set well, even though the prior
might be appropriate as representing prior beliefs. We claim that
priors that satisfy~\Thm{theorem:1} and therefore do not respect the
Occam's razor principle will invariably lead to bad learning models.
\Fig{fig:3} illustrates this point for the case of exponential priors.


\section{Disease mapping using the BYM model}
\label{sec:bym}

The application to disease mapping using the popular
BYM-model~\citep{art149} is particularly interesting since we are
required to reparameterise the model to make the sequence of base
models interpretable and parameters have a natural orthogonal
interpretation. Mapping disease incidence is a huge field within
public health and epidemiology, and good introductions to the field
exist~\citep{book115,book116,col21,col34}.

Disease incidences are usually rare and, to anonymise the data, they
are commonly available on an aggregated level, e.g.\ some
administrative regions like county or post-code area. The basic model
for the observed counts $y_i$ in area $i$ with $i=1,\ldots, n$, is the
Poisson distribution and the observations are assumed to be
conditionally independent Poisson variables with mean $E_i
\exp(\eta_i)$ where $\{E_i\}$ are the expected number of cases. These
are precomputed taking the number of people in the areas, their age
distribution and so on, into account. In the BYM-model we define the
log relative risk as $\eta_i = \mu + \mm{z}_i^{T}\mm{\beta} + u_i +
v_i$ where $\mu$ is the overall intercept, $\mm{\beta}$ measures the
effect of possibly region specific covariates $\mm{z}_i$, $\mm{v}$ is
a zero mean Gaussian with precision matrix $\tau_v \mm{I}$ and
represents an unstructured random effect. In contrast, $\mm{u}$ is a
spatial component saying that nearby regions are similar.  A first
order intrinsic Gaussian Markov random field
model~\citep[Ch.~3]{book80} was introduced by~\citet{art149} as a
model for $\mm{u}$. Let ${\mathcal G}$ be the conditional independence
graph of $\mm{u}$, where $\partial i$ denotes the set of neighbours to
node $i$ and let $n_{\partial i}$ be the corresponding number of
neighbours.  The conditional distribution of $u_i$ is
\begin{displaymath}
    u_i\mid \mm{u}_{-i}, \tau_u \;\sim\; {\mathcal N}\left(
      \frac{1}{n_{\partial
              i}}
      \sum_{j\in \partial i} u_j,
      1/\left(n_{\partial i} \tau_u\right) 
    \right)
\end{displaymath}
where $\tau_u$ is the precision parameter; see~\citet[Ch.~3]{book80}
for details. The full conditionals say that the mean for $u_i$ is the
mean of the neighbours, with a precision proportional to the number of
neighbours. This model is intrinsic in the sense that the precision
matrix has a non-empty null-space. The null-space here is the
$\mm{1}$-vector, hence the density is invariant to adding a constant
to $\mm{u}$. To prevent confounding with the intercept, we impose the
constraint that $\mm{1}^{T}\mm{u}=0$, assuming that the graph has only
one connected component.

To complete the model, we need the prior specification for the
intercept and the fixed-effects $\mm{\beta}$, additional to the prior
for the two precision parameters $\tau_u$ and $\tau_v$. It is common
to assign independent gamma priors for $\tau_u$ and $\tau_v$, often of
the form $\Gamma(1,b)$ where the rate parameters $b_u$ and $b_v$ are
set to small values. These choices are usually rather ad-hoc; see for
example~\citet{art368} for a discussion.

There are two main issues with the BYM model and the choice of
priors. The first, related to Desideratum D3, is that the spatial component is not \emph{scaled}
(see \Sec{sec:pc.prec}).
The marginal variance after imposing the
$\mm{1}^{T}\mm{u}=0$ constraint is not standardised, meaning that any
recommended prior (like those suggested by~\citet{art368}) cannot be
transferred from one graph to another, since the characteristic
marginal variance depends on the graph~\citep{art521}.  The second
issue, related to Desideratum D2, is that the structured component $\mm{u}$ cannot be seen
independently from the unstructured component $\mm{v}$. The
independent random effects \mm{v} are somehow partially included in
the spatial random effect \mm{u} for the case where no spatial
dependence is found.  Therefore, a proper model should account for
this feature to avoid identifiability issues. This means that the
priors for $\tau_u$ and $\tau_v$ should be (heavily) dependent, and
not independent as it is usually assumed.

To resolve these issues, and prepare the model-structure for our new
priors, we will reparameterise the model. Let $\mm{u}^{*}$ denote a
standardised spatial component where the characteristic marginal
variance is one. We then rewrite the log relative risk as
\begin{equation}\label{eq:bym-repar}
    \eta_i = \mu + \mm{z}_i^{T}\mm{\beta} +
    \frac{1}{\sqrt{\tau}}\left(\sqrt{1-\phi}\,
      v_i + \sqrt{\phi}\, u_i^{*}\right),
\end{equation}
where $0\le\phi\le1$ is a mixing parameter and the precision $\tau$
controls the marginal precision.  The marginal precision contribution
from $\mm{u}^{*}$ and $\mm{v}$ is $1/\tau$, whereas the fraction of
this variance explained by the spatial term $\mm{u}^{*}$ and the
random effects $\mm{v}$, are $\phi$ and $1-\phi$, respectively. Note
that the two hyperparameters $(\tau,\phi)$ are almost orthogonal (in
interpretation) and naturally allow for independent prior
specification. A similar reparameterisation has been proposed by
\citet{dean-etal-2001}.  However, they did not assume a scaled
structured spatial effect, which is essential for controlling the
influence of the corresponding hyperprior
\citep{art521}. \citet{leroux-etal-2000} proposed a slightly different
reparameterisation, which has been widely promoted as an alternative
formulation to the standard BYM model, see for example
\citet{lee-2011, ugarte-etal-2014}. The structured spatial effect is
however again not scaled and it is assumed that the precision matrix
of the new spatial model component is a weighted average of the
precisions of the structured and unstructured spatial components,
whereas here as well as in \citet{dean-etal-2001} this assumption is
posed on the variance scale.

With the reparameterisation proposed in \eqref{eq:bym-repar}, we can
apply our new approach to construct priors. First, we notice that the
type-2 Gumbel prior applies to the precision $\tau$, as the natural
base model is no effect from $\mm{u}^{*}$ and $\mm{v}$. For a fixed
marginal precision, the base model is no spatial dependency i.e.\
$\phi=0$.  An increased value of $\phi$ will blend in spatial
dependency keeping the marginal precision constant, hence more of the
variability will be explained by $\mm{u}^{*}$ and the ratio is
$\phi$. The prior for $\phi$ is derived in \Appendix{appendix:bym} and
depends on the graph ${\mathcal G}$. Our notion of \emph{scale} can be
used to set $(U, \alpha)$ so that $\Prob(\phi < U) = \alpha$ which
determines the degree of penalisation, see the \Appendix{appendix:bym}
for details.  This new reparameterisation should shrink towards both
no-effect and no-spatial effect and should therefore prevent
over-fitting of the data due to unfortunate \emph{ad hoc} prior
choices.

\subsection{Larynx data in Germany: spatial effect and the effect of
    an ecological covariate} \label{sec:bym_germany}

In this example, we will reanalyse the larynx cancer mortality for
men, registered in $544$ districts of Germany from $1986$ to
$1990$~\citep{art269}. The total number of deaths due to larynx cancer
was $7283$, which gives an average of $13.4$ per region. The
interesting part of this model is the semi-parametric estimation of
the covariate effect of lung cancer mortality rates in the same
period. This covariate acts as an ecological covariate~\citep{col35}
to account for smoking consumption, which is known to be the most
important risk factor of the larynx cancer.

As a smooth model for the ecological covariate $\mm{z}$,~\cite{art269}
used a second order random walk model
\begin{equation}\label{eq:rw2}%
    \pi(\mm{z} \mid \tau_z) \propto
    (\tau_z\tau_z^{*})^{(m-2)/2}\exp\left(
      -\frac{\tau_z\tau_z^{*}}{2} \sum_{i=3}^{m}\left(
        z_i-2z_{i-1}+z_{i-2}\right)^{2}
    \right)
\end{equation}
where $m$ is the length of $\mm{z}$. This spline model penalises the
estimated second order derivatives and its null space is spanned by
$\mm{1}$ and $(1, 2, \ldots, m)$. Similar to the spatial component in
the BYM model, this model component is not standardised and
$\tau_z^{*}$ ensures that the characteristic marginal variance is
one. The base model is here a straight line which reduces to a linear
effect of the ecological covariate, and the type-2 Gumbel distribution
is the resulting PC prior for $\tau_z$.

The log relative risk for this example is given by
\begin{equation}\label{eq:bym.spline}%
    \eta_i = \mu + f(z_i; \tau_z) + 
    \frac{1}{\sqrt{\tau}}\left(\sqrt{1-\phi}\,
      v_i + \sqrt{\phi}\, u_i^{*}\right)
\end{equation}
where $f(z_i; \tau_x)$ refers to the spline model at location $z_i$,
where the ecological covariate $\mm{z}$ have been converted into the
range $1, 2, \ldots, m$ for simplicity. See~\citet[Ch.~3]{book80} for
more details on this spline model, and~\citet{art435} for an extension
to irregular locations. For $\mm{u}^{*}$ and $\phi$ we used the same
parameters as in the previous example, and also for the precision in
the spline model we used $(U=0.2/0.31, \alpha=0.01)$. The results are
shown in~\Fig{fig:bym3}.

\begin{figure}[tbp]
\begin{subfigure}{0.3\textwidth}
    \includegraphics[width=0.9\textwidth,angle=270]{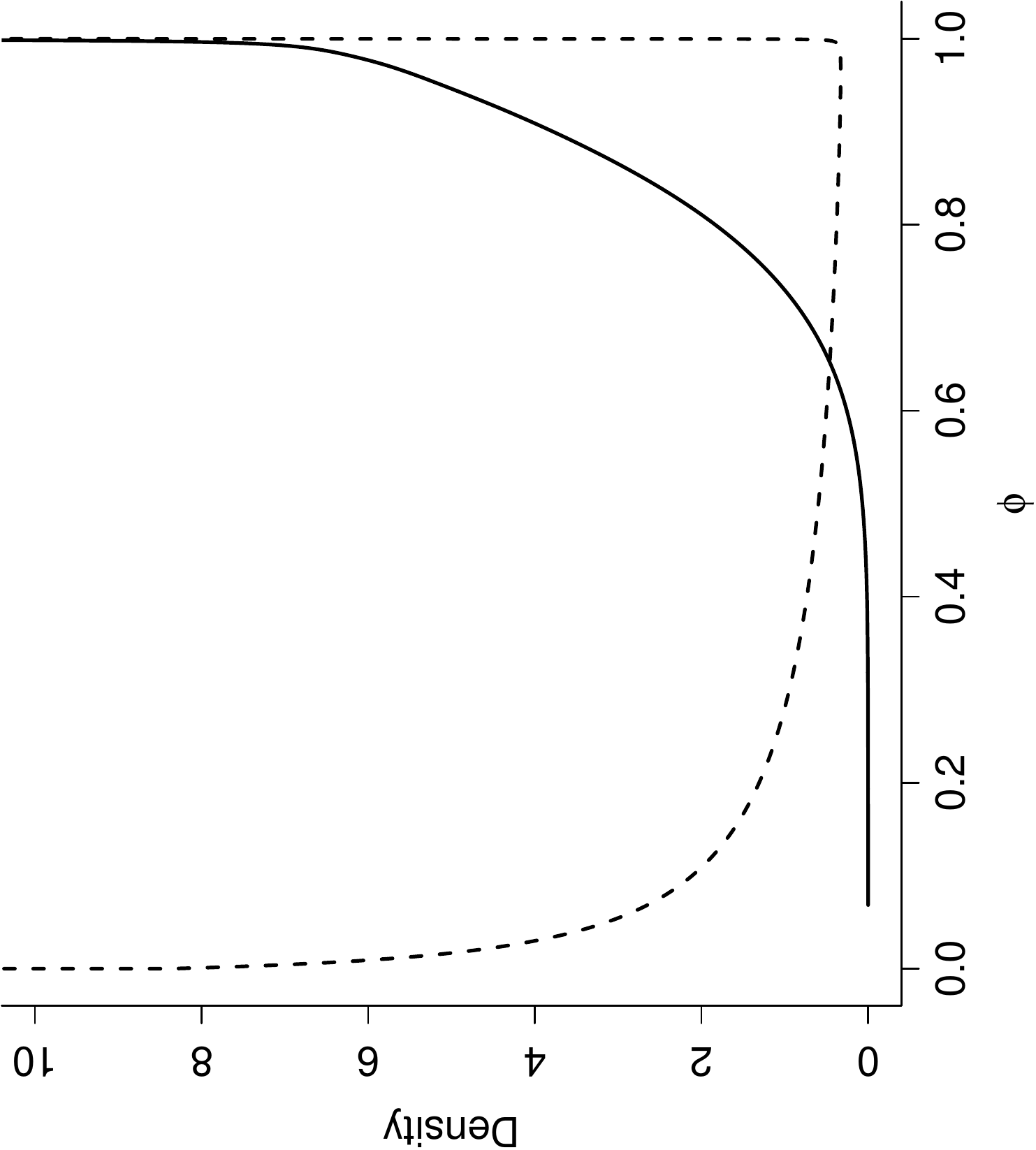}
\caption{}
        \label{fig:ga}
    \end{subfigure}
    \begin{subfigure}{0.3\textwidth}
        \includegraphics[width=0.9\textwidth,angle=270]{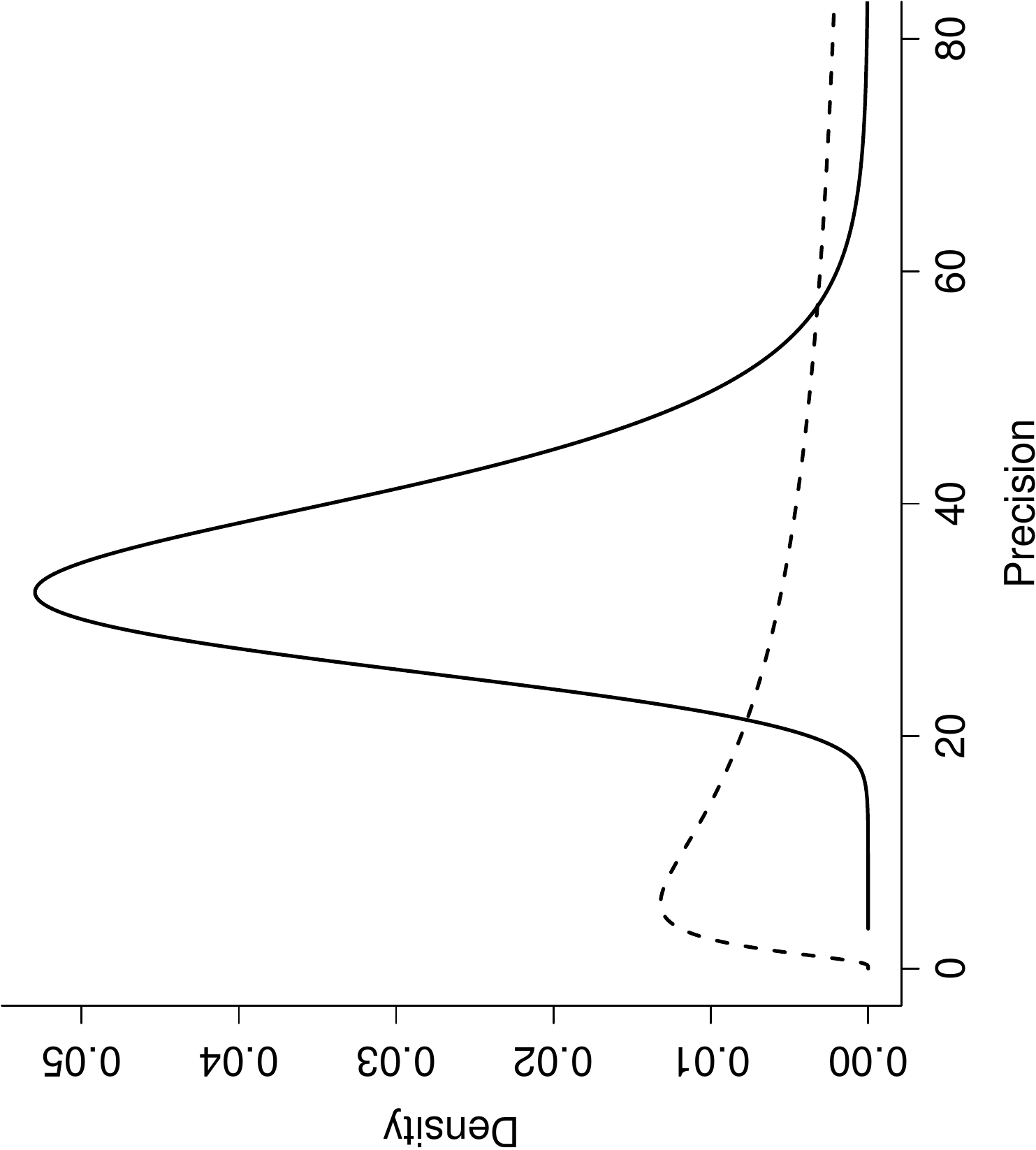}
\caption{}
        \label{fig:gb}
    \end{subfigure}
    \begin{subfigure}{0.3\textwidth}
        \includegraphics[width=0.9\textwidth,angle=270]{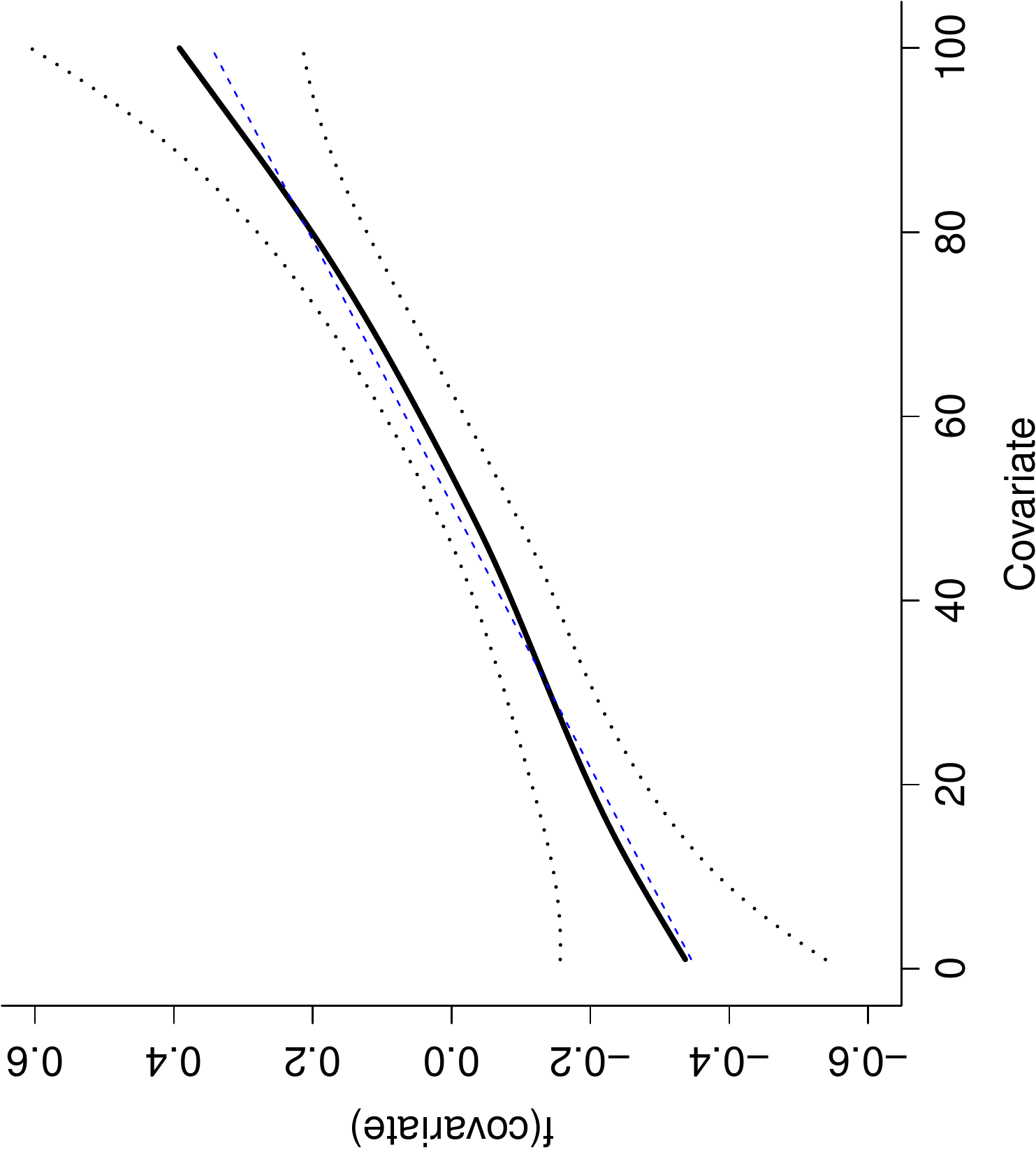}
    \caption{}
        \label{fig:gc}
    \end{subfigure}

    \caption{The results for the larynx data in Germany using PC
        priors and the reparameterised BYM model.  Panel~(a) shows the
        prior density for the mixing parameter $\phi$ (dashed) and the
        posterior density (solid). Panel~(b) shows the prior density
        (dashed) and the posterior density (solid) for the precision
        $\tau$. Panel~(c) shows the effect of the ecological covariate
        where the solid line is the mean, the dotted lines are the
        upper and lower $0.025$-quantiles, and the blue dashed line is
        the best linear fit to the mean.}
    \label{fig:bym3}
\end{figure}

\Fig{fig:bym3}~(a) shows that although the PC prior puts 2/3
probability on $\phi <1/2$, the model learns from the data resulting
in a posterior concentrated around 1. This implies that only the
spatial component contributes to the marginal variance. The posterior
for the precision $\tau$ (panel~(b)) is more concentrated than in
earlier examples due to the relatively high average counts. The effect
of the ecological covariate is shown in panel~(c) which displays the
mean (solid), the lower and upper $0.025$-quantiles (dotted) and the
best linear fit (blue dashed). The effect of the ecological covariate
seems to be shrunk towards the base model, i.e.~a straight line, and
is much more linear than the various estimates by \citet{art269}. We
suppose that the reason lies in over-fitting due to their choice of
priors.  The appropriateness of a linear effect of the ecological
covariate, is also verified using the approach in~\citet{art501}. Due
to the design of our new prior, the clear interpretation of the
parameters $(U, \alpha)$ and the good learning abilities demonstrated
by the new prior in previous examples, we do believe that the high
level of shrinking is a data driven effect and not a hidden effect of
the prior.


\section{Multivariate Probit Models}
\label{sec:mvprobit}

The examples considered thus far have been essentially univariate,
with higher dimensional parameter spaces dealt with using approximate
orthogonality.  In this section, we will demonstrate that the \PCP\
methodology naturally extends to multivariate parameters and
illustrate this by means of multivariate probit models.

Multivariate probit models have applications within sample surveys,
longitudinal studies, group randomised clinical trials, analysis of
consumer behaviour and panel data~\citep{art569}. They represent a
natural extension of univariate probit models, where the probability
for success at the $i$th subject is
\begin{eqnarray}\label{eq31}%
    \Prob(y_i = 1 \mid \mm{\beta}) = \Phi(\mm{x}_i^{T}\mm{\beta}),
    \qquad i=1, \ldots, n.
\end{eqnarray}
Here, $\Phi(\cdot)$ is the cumulative distribution function for the
standard Gaussian distribution, $\mm{x}_i$ a set of fixed covariates
with regression coefficients $\mm{\beta}$. The univariate probit model can be
reformulated into a latent variable formulation which both improves
the interpretation and eases computations. Let $z_i =
\mm{x}_i^{T}\mm{\beta} + \epsilon_i$, and define $y_i = 1$ if $z_i \ge
0$, and $y_i = 0$ if $z_i < 0$. When $\{\epsilon_i\}$ is standard
multivariate Gaussian over all the $n$ subjects, we obtain~\eref{eq31}
after marginalising out $\epsilon_i$.  In the multivariate extension
we have $m$ measurements of the $i$th subject, $\{y_{ij}: j=1,\ldots,
m\}$.  The latent vector for the $i$th subject is (with a slight
confusion in the notation) $\mm{z}_i = \mm{X}_i^{T}\mm{\beta} +
\mm{\epsilon_i}$ where $\mm{\epsilon_i} \sim {\mathcal N}_m(\mm{0},
\mm{R})$, and define $y_{ij}=1$ if $z_{ij} \ge 0$, and $y_{ij}=0$ if
$z_{ij} < 0$. The dependence within each subject, is encoded through
the matrix $\mm{R}$, which, in order to ensure identifiability, is
restricted to be a correlation matrix.

A Bayesian analysis of a multivariate probit model requires a prior
for the correlation matrix $\mm{R}$. For the saturated model for
\mm{R}, \citet{barnard2000modeling} demonstrate the joint uniform
prior $\pi(\mm{R})\propto 1$ which gives highly informative marginals
centred at zero; see \citet{art569}
for applications of this prior within multivariate probit models. The
joint Jeffreys' prior for \mm{R} was used by~\citet{art571}, which
(unfortunately) place most prior mass close to $\pm 1$ in high
dimension. \citet{art570} suggest using a multivariate Gaussian prior
for $\mm{R}$ restricted to the subset where $\mm{R}$ is positive
definite. Neither of these prior suggestions for \mm{R} are particular
convincing.

\subsection{Extending the univariate PC prior construction} \label{sec:multivar_PC}

The principles underlying the PC prior outlined in
\Sec{sec:principles} can be extended to the multivariate setting
$\mm{\xi} \in \mathcal{M}$ with base model $\mm{\xi} = \mm{0}\in
\mathcal{M}$.  This multivariate extension has all the features of the
univariate case.  As many interesting multivariate parameters spaces
are not $\mathbb{R}^n$, we will let $\mathcal{M}$ be a subset of a
smooth $n$-dimensional manifold.  For example, when modelling
covariance matrices $\mathcal{M}$ will be the manifold of symmetric
positive definite matrices, while the set of correlation matrices is a
convex subset of that space.  A nice introduction to models on
manifolds can be found in \citet{byrne2013geodesic}, where the problem
of constructing useful Markov chain Monte Carlo schemes is also
considered.

Assume that $d(\mm{\xi})$ has a non-vanishing Jacobian.  For each $r
\geq 0$, the level sets $\mm{\theta} \in S_r = \left\{ \mm{\xi} \in
  \mathcal{M} : d(\mm{\xi}) = r\right\}$ are a system of disjoint
embedded submanifolds of $\mathcal{M}$, which we will assume to be compact \citep[Chapter
8]{lee2003smooth}.  In the parlance of differential geometry, the
submanifolds $S_r$ are the leaves of a foliation and the decomposition
$\mathcal{M} = \mathbb{R}_+ \times \left(\sqcup_{r\geq 0} S_r\right)$
gives rise to a natural coordinate system on $\mathcal{M}$.  Hence the
natural lifting of the \PCP\ concept onto $\mathcal{M}$ is the prior
that is exponentially distributed in $d(\mm{\xi})$ and uniformly
distributed on the leaves $S_{d(\mm{\xi})}$.

In some sense, this above definition is enough to be useful.  A simple
MCMC or optimisation scheme would proceed in a ``coordinate ascent''
manner, moving first in the distance direction and then along the leaf
$S_r$.  More efficient schemes, however, may be derived from a more
explicit density.  To this end, we can locally find a mapping
$\varphi(\cdot)$ such that$
    \left(d(\mm{\xi}), \varphi(\mm{\xi})\right) = \mm{g}(\mm{\xi})$.
With this mapping, we get a local representation for the multivariate
\PCP\ as
\begin{equation}
    \label{eqn:multi_pcp}
    \pi(\mm{\xi})
    =\frac{\lambda}{\left|S_{d(\mm{\xi})}\right|}  \exp\left({-\lambda
          d(\mm{\xi})}\right) \left| \det(\mm{J}(\mm{\xi}) )\right |,
\end{equation}
where $J_{ij} = \frac{\partial g_i}{\partial \xi_j}$ is the Jacobian
of $\mm{g}$. While the definition of multivariate \PCP s is
significantly more involved than in the univariate case, it is still
useful.  In general, computational geometry can be used to evaluate
\eqref{eqn:multi_pcp} approximately in low dimension.  In the case
where the level sets are simplexes or spheres, exact expressions for
the \PCP\ can be found. These situations occur when $d(\mm{\xi})$ can
be expressed as
\begin{equation}\label{eq:h.simplex}%
    d(\mm{\xi}) = h\left(\mm{b}^{T}\mm{\xi}\right),
    \quad \mm{b} > \mm{0},
    \quad \mm{\xi} \in \mathbb{R}^{n}_+
\end{equation}
or
\begin{equation}\label{eq:h.sphere}%
    d(\mm{\xi}) = h\left(\frac{1}{2}\mm{\xi}^{T}\mm{H}\mm{\xi}\right),
    \quad \mm{H} > 0, \quad \mm{\xi}\in \mathbb{R}^{n},
\end{equation}
for some function $h(\cdot)$ satisfying $h(0)=0$, typically
$h(a)=\sqrt{2a}$.  The linear case will prove useful for deriving the
\PCP\ for general correlation matrices in \Sec{sec:PC_corr}.
  The quadratic case will be fundamental to derive approximate multivariate
\PCP s for hierarchical models, see \Sec{sec:ext}. 

It is trivial to
simulate from the \PCP\ when \eref{eq:h.simplex} or \eref{eq:h.sphere}
holds.  First we sample $d$ from the exponential distribution with
rate $\lambda$. In the linear case~\eref{eq:h.simplex}, sample
$\mm{s}$ uniformly on an $(n-1)-$simplex (by sampling $z_1, \ldots,
z_n$ independent from $\text{Exp}(1)$ and set $\mm{s} =
\mm{z}/\mm{1}^{T}\mm{z})$ and compute $\mm{\xi} =
h^{-1}(d)\mm{s}/\mm{b}$.  In the quadratic case~\eref{eq:h.sphere},
sample $\mm{s}$ uniformly on an unit sphere (by sampling independent
standard Gaussians $z_1, \ldots, z_n$ and then set $\mm{s} =
\mm{z}/\sqrt{\mm{z}^{T}\mm{z}}$) and compute $\mm{\xi} =
\sqrt{2h^{-1}(d)}\mm{H}^{-1/2}\mm{s}$. Using these direct simulation
algorithms, it is a simple change of variables exercise to derive the
densities for the \PCP s. In the linear case with $\mm{b}=\mm{1}$, the
\PCP\ is
\begin{equation}\label{eq:simplex}%
    \pi(\mm{\xi}) =
    \lambda \exp\left(-\lambda d(\mm{\xi})\right)
    \times
    \frac{(n-1)!}{r(\mm{\xi})^{n-1}}
    \times \left|h'(r(\mm{\xi}))\right|, \qquad
    r(\mm{\xi}) = h^{-1}(d(\mm{\xi})),
\end{equation}
in the quadratic case with $\mm{H}=\mm{I}$, the
\PCP\ is
\begin{equation}\label{eq:sphere}%
    \pi(\mm{\xi}) =
    \lambda \exp\left(-\lambda d(\mm{\xi})\right)
    \frac{\Gamma\left(\frac{n}{2}+1\right)}{
        n\pi^{\frac{n}{2}}\; r(\mm{\xi})^{n-2}}
    \left|h'\left(\frac{1}{2}r(\mm{\xi})^{2}\right)\right|, \,
        r(\mm{\xi}) = \sqrt{2h^{-1}(d(\mm{\xi}))}.
\end{equation}
The results for the general case, $\mm{b}>\mm{0}$ and $\mm{H}>0$,
follows directly after a linear transformation of $\mm{\xi}$.

\subsection{A prior for general correlation matrices} \label{sec:PC_corr}

Consider the model component $\mm{x} \sim {\mathcal N}(\mm{0},
\mm{R})$ where $\mm{R}$ is a $q \times q$ correlation matrix. The
distance function to the base model is given by $d(\mm{R}) = \sqrt{-
    \log(\det(\mm{R}))}$.  This distance function can be greatly
simplified by considering a different parameterisation of the set of
correlation matrices.  \citet{rapisarda2007parameterizing} show that
every correlation matrix can be written as $\mm{R} = \mm{BB}^T$, where
$\mm{B}$ is a lower triangular matrix with first row given by a $1$ on
the diagonal (first position) and zeros in every other position and,
for rows $i \geq 2$, entries are given by
\begin{displaymath}
    B_{ij} = \begin{cases}
        \cos(\theta_{ij}), \qquad &j=1; \\
        \cos(\theta_{ij}) \prod_{k=1}^{j-1} \sin(\theta_{ik}), 
        \qquad& 2 \leq j \leq i-1; \\
        \prod_{k=1}^{i-1} \sin(\theta_{ik}), \qquad &j=i; \\
        0, \qquad &i+1 \leq j \leq q,
    \end{cases}
\end{displaymath}
where $\theta_{ij} \in [0,\pi]$.  The advantage of this
parameterisation is that the distance function is now given by
$d(\mm{R}) = \sqrt{2\sum_{i=2}^q \sum_{j=1}^{i-1} \gamma_{ij}}$, where
$\gamma_{ij} = -\log(\sin (\theta_{ij})) \in [0,\infty)$ are the $p =
q(q-1)/2$ parameters. Using the $\mm{\gamma}$-parameterisation, we are
in the linear case \eref{eq:h.simplex} and the \PCP\ is given by
\eref{eq:simplex} with $\mm{\xi} = \mm{\gamma}$, $h(a) = \sqrt{2a}$
and $n = p$. The \PCP\ for $\mm{\theta}$ follows directly after a
change of variables exercise, and is simplified by noting that the two
branches of the mapping $\theta_{ij}=\theta_{ij}(\gamma_{ij})$ has
equal Jacobian.  

The scaling parameter $\lambda$ controls the degree of compression of
the parallelotope with vertices given by the column vectors of
$\mm{R}$.  For large values of $\lambda$, most of the mass will be
concentrated near parallelotopes with unit volume, while for small
$\lambda$, the volume could be significantly less than one.  This
parameter may be difficult to visualise in practice, and we suggest
calibrating the prior by drawing samples from the model component and
selecting a value of $\lambda$ for which this component behaves
appropriately.   Figure \ref{fig:corr}(a) shows the
\PCP\ marginal for one of these correlations for an exchangeable \PCP\
on a $3\times 3$ correlation matrix, using $\lambda = 10, 5$ and
$2$. Decreasing $\lambda$ makes the marginal less tightened up around
zero.

There are two complications in interpreting the \PCP\ for
$\mm{\theta}$. The first, as mentioned in the previous section, is
that the corresponding prior is not exchangeable due to the dependence
on the Cholesky decomposition on the ordering of the random
effects. This can be rectified by summing over all orderings, however
we have observed that the pairwise distributions are very similar even
without doing this. While we do not necessarily recommend summing out
the permutations in practice, for the figures in this section, we have
computed the exchangeable \PCP .

\begin{figure}[tbp]
    \centering
    \begin{subfigure}{0.45\textwidth}
        \centering
        \rotatebox{270}{\includegraphics[width=0.9\textwidth]{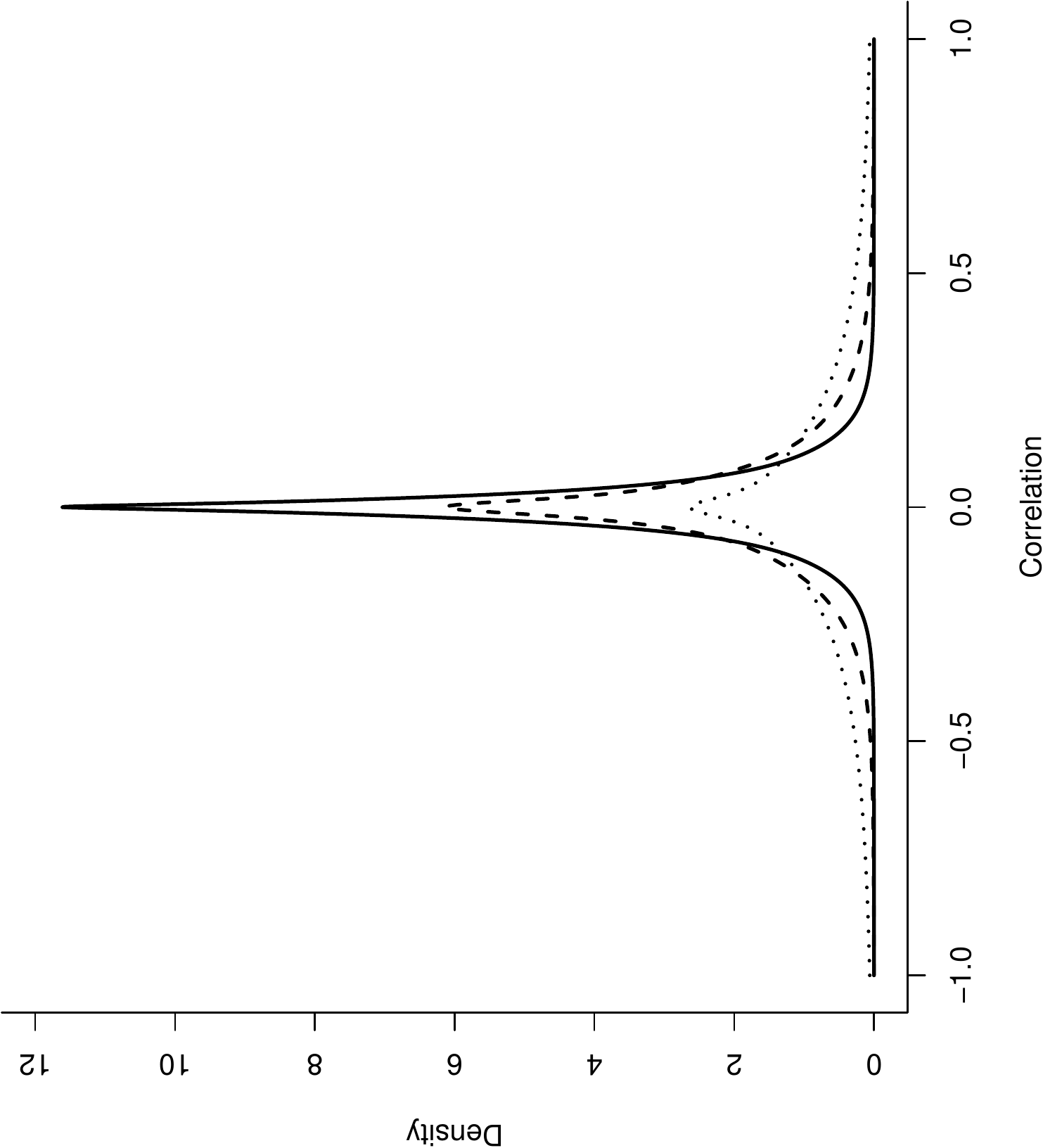}}
        \caption{}
    \end{subfigure}
    \quad
    \begin{subfigure}{0.45\textwidth}
        \centering
        \rotatebox{270}{\includegraphics[width=0.9\textwidth]{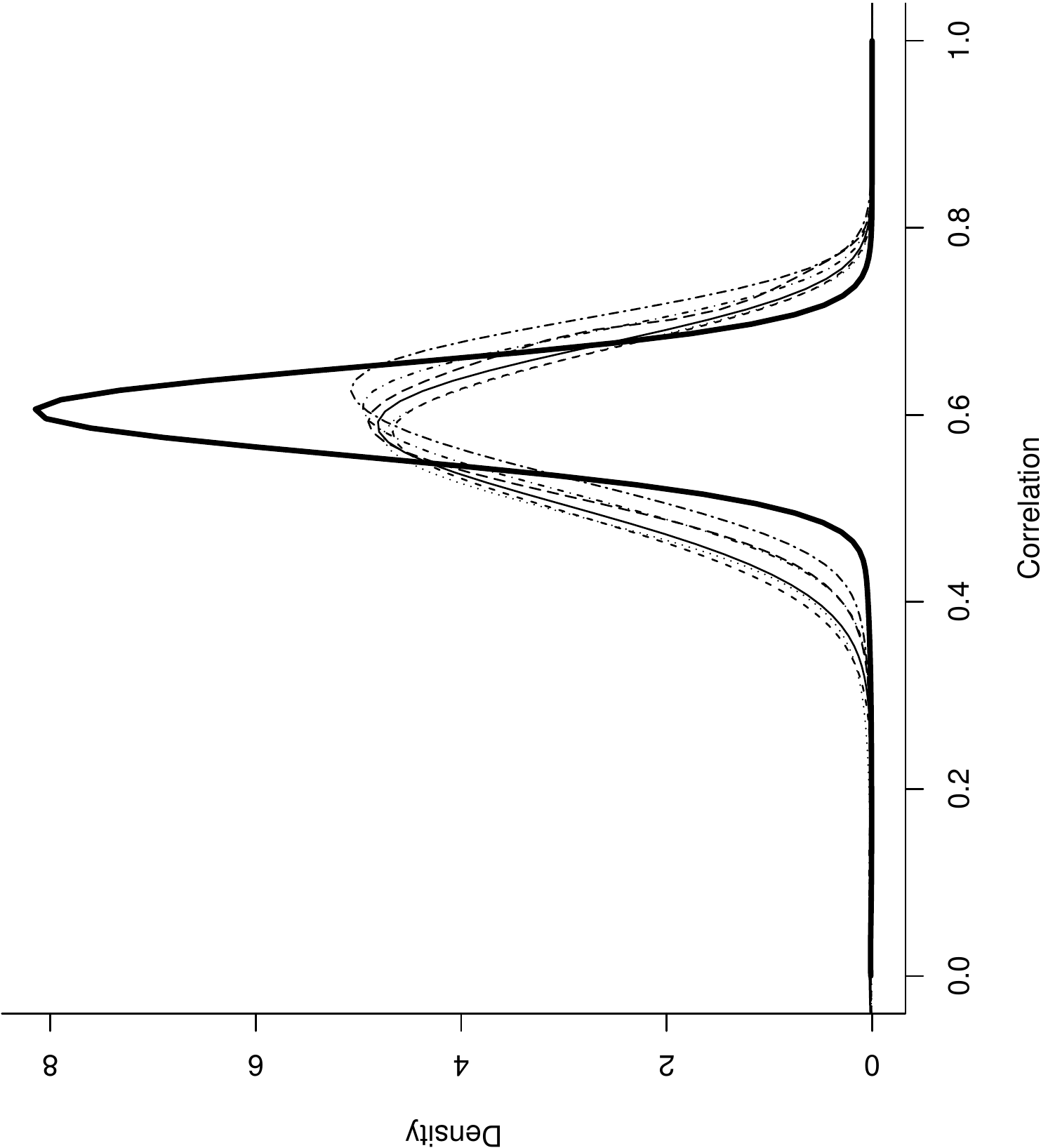}}
        \caption{}
    \end{subfigure}
    \caption{Panel~(a) shows the symmetric marginal prior density for
        the correlation computed from the \PCP\ for a $3\times 3$
        correlation matrix with $\lambda = 10$ (solid), $5$ (dashed)
        and $2$ (dotted).  Panel~(b) shows the posterior correlations
        for the Six Cities study with $\lambda=0.1$. The solid tick
        line is for the exchangeable model. The marginal densities are
        approximately identical.    \label{fig:corr}}
\end{figure}

\subsection{The Six Cities study and exchangeable random effects}

We will now illustrate the use of \PCP s for \mm{R} and reanalyse a
subset of the data from the Six Cities study discussed by
\citet[Sec~5.2]{art570} using the data as tabulated in their
Table~3. The Six Cities study investigates the health effects of air
pollution; refer to \citet{art570} for background. The response is the
wheezing status from $n=537$ children in Ohio at ages $7$, $8$, $9$
and $10$ years, and the aim is to study the effect of the smoking
habit of the mother to the response. The model is
\begin{displaymath}
    \Prob(y_{ij} = 1 \mid \mm{\beta}, \mm{R}) = \Phi\left(
      \beta_0 + \beta_1 x_{i1} + \beta_2 x_{i2} + \beta_3 x_{i3}
    \right), \qquad j = 1, \ldots, m=4,
\end{displaymath}
where covariates are $\mm{x}_1$ representing age (centred at $9$),
$\mm{x}_2$ for smoking ($1=$yes, $0=$no), and their interaction
$\mm{x}_3$, respectively. \citet{art570} used two models for $\mm{R}$,
the saturated or general case with $m(m-1)/2$ parameters, and the
exchangeable case where all off-diagonal terms in $\mm{R}$ are the
same. Our analysis is made more comparable to \citet{art570} by
adapting their ${\mathcal N}_4(\mm{0}, 10^{-1}\mm{I})$ prior for
$\mm{\beta}$.

We chose the decay-rate $\lambda$ by sampling from the \PCP\ for
various values of $\lambda$. We then estimated $\Prob(|\rho_{ij}| >
1/2)$ and found the values of $\lambda$ where this probability was
approximately $0.8$ and $0.2$. These two (rather extreme) choices gave
$\lambda = 0.1$ and $1.0$. We then ran two long MCMC chains to obtain
posterior samples after disregarding the burn-in phase. The estimated
posterior marginals for $\beta_2$ (effect of smoking) are shown
in~\Fig{fig:mvprobit1}(a) (solid and dashed lines). The choice of
$\lambda$ seems to have only a minor effect on the posterior marginal
for the effect of smoking $\beta_2$.

Since all the estimated coefficients in $\mm{R}$ are somewhat similar
in the general model for \mm{R} (Figure \ref{fig:corr}(b)), it is
natural to investigate a simplified model with an exchangeable
correlation matrix where all correlations are the same, $\rho$. For
positive definiteness, we require $-1/(m-1) < \rho < 1$.  The fact
that positive and negative correlations are apparently very different,
makes the selection of the prior for $\rho$ challenging.  Due to the
invariance property of the \PCP s, this asymmetry is automatically
accounted for and the potential problem goes away.  We can easily
compute the \PCP\ for $\rho$ for any fixed base value $\rho_0$. For
$\rho_0=0$, which is the same base model as the correlation matrix
\PCP, the distance function to the base model is
\begin{displaymath}
    d(\rho) = \sqrt{
        -\log\left( (1 + (m-1)\rho)(1-\rho)^{m-1}\right)
    }
\end{displaymath}
and the prior follows directly after noting that in this case we must
also allow for $\xi < 0$. The \PCP\ is shown in \Fig{fig:mvprobit1}(b)
for $\lambda=0.1$ (solid) and $1.0$ (dashed). The \PCP\ automatically
adjusts the prior density for $\rho<0$ and $\rho>0$ due to the
constraint $\rho > -1/(m-1)$.

A second potential issue is the following: as we are changing the
model for $\mm{R}$, we should, in order to get comparable results, use
a comparable prior. By reusing the values $\lambda = 0.1$ and $1.0$
from the general case, we define the prior for $\rho$ to penalise the
distance from the base model the same way in both parameterisations of
$\mm{R}$. In this sense, the prior is the same for both models. We can
then conclude that the reduced variability of about $10\%$ for
$\beta_2$ as shown in~\Fig{fig:mvprobit1}(a) for $\lambda=0.1$
(dotted) and $\lambda=1.0$ (dashed-dotted), is due to the more
restrictive exchangeable model for \mm{R} and not an unwanted effect
from the prior distributions for \mm{R}.

The results obtained with the \PCP s are in this example in overall
agreement with those reported in \citet{art570}.

\begin{figure}[tbp]
     \begin{subfigure}{0.45\textwidth}
    \centering
    \includegraphics[width=0.9\textwidth,angle=270]{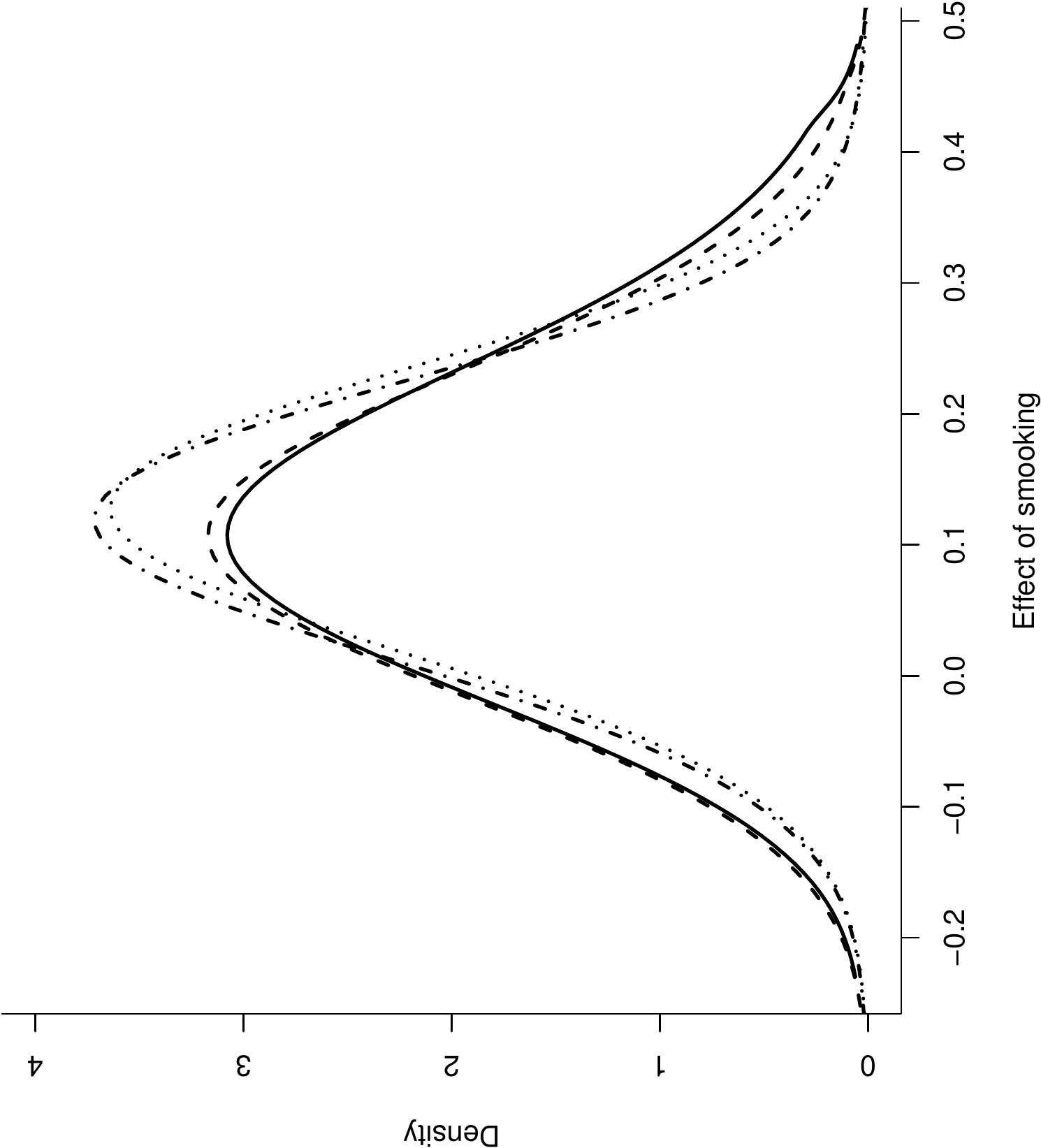}
        \caption{}\label{fig:prec-a}
   \end{subfigure}
         \begin{subfigure}{0.45\textwidth}
    \centering
    \includegraphics[width=0.9\textwidth,angle=270]{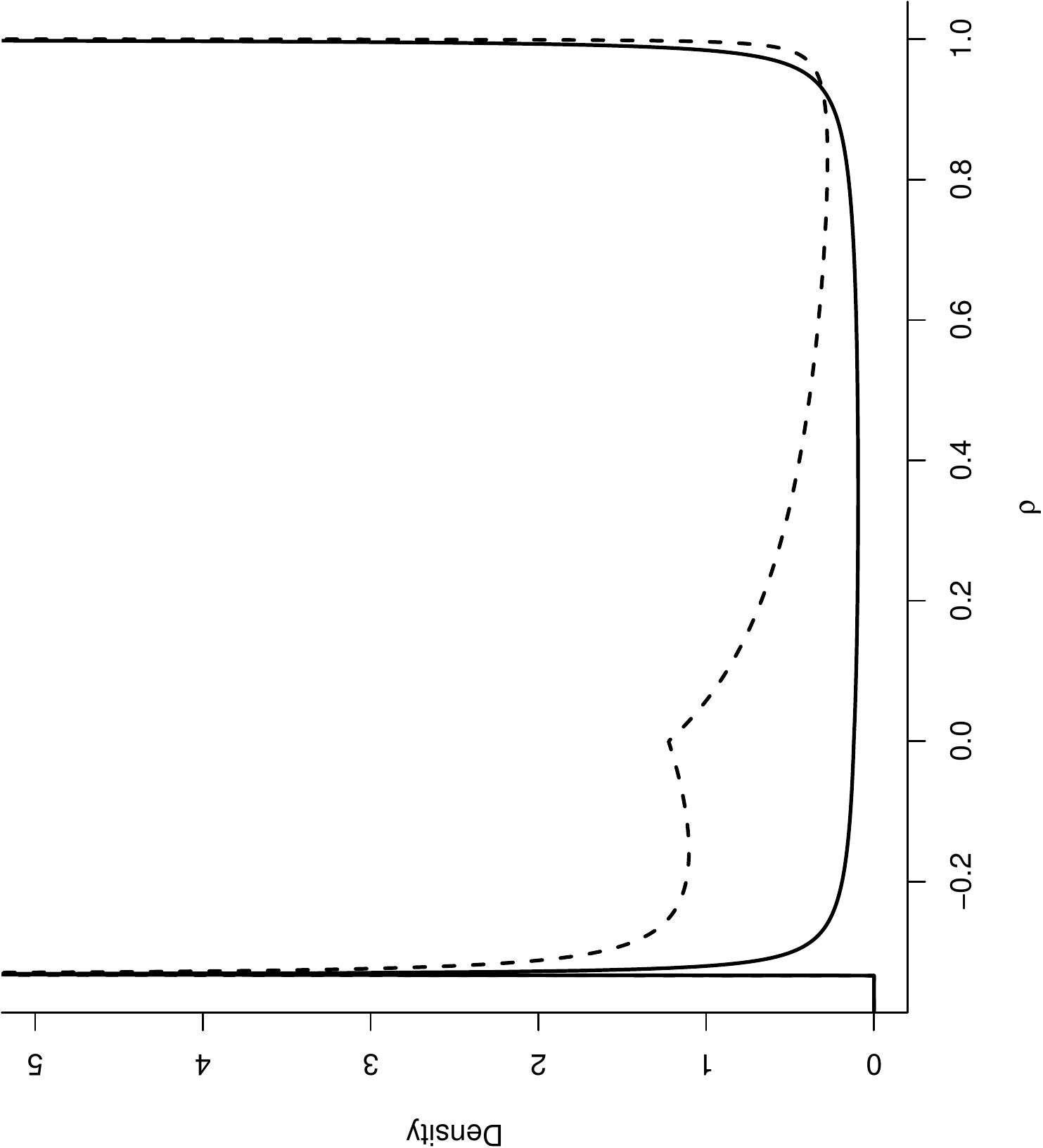}
        \caption{}\label{fig:prec-a}
   \end{subfigure}

    \caption{Panel~(a) shows the estimated posterior marginal for
        $\beta_2$ (the effect of smoking) for both the general model for
        $\mm{R}$ ($\lambda=0.1$ (solid) and $\lambda = 1.0$
        (dashed)) and the exchangeable model ($\lambda=0.1$ (dotted) and $\lambda = 1.0$
        (dashed-dotted)). Panel~(b) shows the \PCP\ for the exchangeable case
        with base-model $\rho=0$, for $\lambda=0.1$ (solid) and $1.0$
        (dashed).}
    \label{fig:mvprobit1}
\end{figure}


\section{Distributing the variance: hierarchical models, and alternative distances}
\label{sec:ext}

For complex models, it is unlikely that practitioners will be able to
provide information about the relative effect of each component in an
hierarchical model. Hence, we can no longer build informative
independent priors on each component but need to consider the global
model structure. Looking back to the example in \Sec{sec:bym_germany},
it is clear from \eref{eq:bym.spline}, that we were not able to
control jointly the variance contribution from the spline and the
combined spatial/random effect term. It could be argued that in these
simple situations, this is less of a problem as priors could easily be
tuned to account for this effect and this strategy is well within
current statistical practice.  In this section we argue and
demonstrate that it is possible to control overall variance
automatically using the \PCP\ framework.  This requires, in concordance with Desideratum D2, that the priors on the individual scaling parameters for each component change as the global model changes.  We will demonstrate this by
considering a logistic regression model with several linear and
non-linear effects, and show how we can take the global structure into
account to control the overall variance of the linear predictor, and
controlling how each term contributes to it. To achieve this, we will
use a multivariate \PCP\ on the fractions of how much each component
contributes to the variance of the linear predictor.

The broader message of this section is that within the \PCP\
framework, it is possible to build priors that respect the global
graphical structure of the underlying model. Additionally, it is
possible to build these priors automatically for new models and
data-sets (which can be integrated into software like
\texttt{R-INLA} or \STAN). The priors depend on the graphical structure and the
model design (or covariate values), but do not, of course, depend on
the observations.  Following this path into the future, we happily
give up global invariance of reparameterisation, as we are adding
another source of information to our prior construction. Additional to
knowledge of the base model and the strength of the penalisation, we
also require expert information about the structure of the model. As
with the previous two information sources, this is not particularly
onerous to elicit.

Our practical approach to handle multivariate \PCP s in this setting
is to use a Taylor expansion around the base model, and approximate it
using a first or second order expansion.  When the base model is an
interior point in the parameter space, then the second order
approximation~\Eref{eq:h.sphere} gives the \PCP\ in \Eref{eq:sphere},
while the linear approximation~\Eref{eq:h.simplex} is appropriate if
the base model is at the border of the parameter space leading to the
\PCP\ in \Eref{eq:simplex}. For the quadratic approximation, it is
well known that
\begin{equation}\label{eq:ext.2}
    \text{KLD} = \frac{1}{2} (\mm{\xi} - \mm{\xi}_0)^{T} I(\mm{\xi}_0)
    (\mm{\xi} - \mm{\xi}_0) + \text{higher order terms}    
\end{equation}
where $\mm{\xi}_0$ is the base model and $I(\mm{\xi}_0)$ is the Fisher
information at the base model. This approximation has resemblance to
the generalisation by~\citet[Sec.~3]{art576} of early ideas by
\citet{book112} for the purpose of hypothesis testing in the the
multivariate case.  \Eref{eq:ext.2} in not unsound as measure of
complexity by itself, and adopting this as our second principle for
practical/computational reasons, then \Eref{eq:sphere} will be the
corresponding \PCP, but will not longer be invariant for
reparameterisation. Hence, care needs to be taken in order to choose a
parameterisation for the second order expansion to be sensible. This
parameterisation is typically motivated by a variance-stabilising
transformation.

To fix ideas, we will discuss a dataset from~\citet{art574} about a
retrospective sample of males in a heart-disease high-risk region of
the Western Cape, South Africa. These data are available as
\texttt{heart} in the \texttt{R}-package \texttt{catdata}, and we will
use the model suggested by~\citet[Sec.~6.4]{art575} changing the link
to logit. A total of 462 subjects are classified of have had ($y_i=1$)
a heart attack or not ($y_i=0$), and the measured risk factors are age
at onset (Age), systolic blood pressure (BP) and low density
lipoprotein cholesterol (CR). We use standardised covariates in this
analysis.  The linear predictor is
\begin{displaymath}
    \mm{\eta} = \mu \mm{1} + \tau^{-1/2}\times g\left(
      \text{Age}, \text{BP}, \text{CR}\right)
\end{displaymath}
where $g(\cdot)$ is some smooth function of the covariates. At this
top-level, we can use the structural information provided by the model
to elicit the amount of variability we expect from covariates. This
information can be incorporated in the prior for the precision
parameter $\tau$.  We assume here that the effect of covariates
$g(\cdot)$ have zero mean and ``unit variance''. We use the phrase
``unit variance'' for $\beta_x \mm{x}$ to describe a standardised
covariate $\mm{x}$ and a covariate weight $\beta_x$ with unit
variance. The predicted probabilities from the regression model might
mostly be in the range $[0.05, 0.95]$ leading to an interval
$[-2.94,2.94]$ on the linear predictor scale. We take the marginal
standard deviation of the effect of the covariates to be $2.94/1.96$,
which gives parameters $U=4.84$ and $\alpha=1\%$ in the \PCP\ for the
precision in a Gaussian random effect~\eref{eq:prec}. This prior will
shrink the effect of the covariates towards the intercept, which is
the first base model.

At the next level in the model, we use an additive model for the
covariates and distribute the unit variance among the covariates,
\begin{displaymath}
    g(\text{Age}, \text{BP}, \text{CR}) = \sqrt{w_1} g_1(\text{Age}) +
    \sqrt{w_2} g_2(\text{BP}) + \sqrt{w_3} g_3(\text{CR})
\end{displaymath}
where the weights live on a 2-simplex, i.e.\ $w_1+w_2+w_3=1$ and $w_i
\ge 0$, and $\{g_i(\cdot)\}$ are smooth functions (with unit
variance). The variance contribution from covariate Age, say, is then
$w_1$. Without additional knowledge, it is reasonable to treat them
equally, meaning that the base model for the weights is $w_1 = w_2 =
w_3 = 1/3$.  This reflects an \emph{a priori} assumption of
(conditional) exchangeability between these model components.

Further one level down, we continue to distribute the variance, but
now for each $g_i(\cdot)$ function and between a linear and (purely)
non-linear effect. For covariate Age, this becomes
\begin{displaymath}
    g_1(\text{Age}) = \sqrt{1-\phi_{1}} \beta_1 \text{Age} +
    \sqrt{\phi_1} f_1(\text{Age}), \quad \phi_1 \ge 0.
\end{displaymath}
Here, both $\beta_1$ and $f_1(\cdot)$ have unit variance, and
$f_1(\cdot)$ is a smooth (purely) non-linear function. The natural
base model is $\phi_1=0$ meaning that the variance in
$g_1(\text{Age})$ is only explained by the linear effect, as we do not
want to involve deviations from the linear effect without support from
data.

\begin{figure}
    \begin{subfigure}{0.6\textwidth}
        \centering
        \resizebox{\textwidth}{!}{\begin{tikzpicture}
    \tikzstyle{latent} = [circle,fill=white, draw=black, inner sep=1pt,
    minimum size=30pt, font=\fontsize{10}{10}\selectfont, node distance=1]
    
    \node[latent] (beta1) {$\beta_1\times\text{\small Age}$};
    \node[latent, right=of beta1] (f1) {$f_1(\text{\small Age})$};
    \node[latent, above=of beta1, xshift=1.15cm] (g1) {$g_1(\text{\small Age})$};
    \factor[above=of beta1,xshift=5mm]{beta1-g1}{left:${1-\phi_1}$} {} {};
    \factor[above=of f1,xshift=2mm]{f1-g1}{left:${\phi_1}$} {} {};
    \edge{g1}{beta1}
    \edge{g1}{f1}

    \node[latent, right=of f1] (beta2) {$\beta_2\times\text{\small BP}$};
    \node[latent, right=of beta2] (f2) {$f_2(\text{\small BP})$};
    \node[latent, above=of beta2, xshift=1.15cm] (g2) {$g_2(\text{\small BP})$};
    \factor[above=of beta2,xshift=5mm]{beta2-g2}{left:${1-\phi_2}$} {} {};
    \factor[above=of f2,xshift=2mm]{f2-g2}{left:${\phi_2}$} {} {};
    \edge{g2}{beta2}
    \edge{g2}{f2}

    \node[latent, right=of f2] (beta3) {$\beta_3 \times\text{\small CR}$};
    \node[latent, right=of beta3] (f3) {$f_3(\text{\small CR})$};
    \node[latent, above=of beta3, xshift=1.15cm] (g3) {$g_3(\text{\small CR})$};
    \factor[above=of beta3,xshift=5mm]{beta3-g3}{left:${1-\phi_3}$} {} {};
    \factor[above=of f3,xshift=2mm]{f3-g3}{left:${\phi_3}$} {} {};
    \edge{g3}{beta3}
    \edge{g3}{f3}

    \node[latent, above=of g2] (g) {$g(\text{\small Age,BP,CR})$};
    \edge{g}{g1}
    \edge{g}{g2}
    \edge{g}{g3}
    \factor[above=of g1, xshift=17mm]{g1-g}{left:${w_1}$} {} {};
    \factor[above=of g2, xshift=0mm]{g2-g}{left:${w_2}$} {} {};
    \factor[above=of g3, xshift=-7mm]{g3-g}{left:${w_3}$} {} {};


\end{tikzpicture}


        \caption{}
    \end{subfigure}
    \hspace{0.\textwidth}
    \begin{subfigure}{0.3\textwidth}
        \centering
        \rotatebox{270}{\includegraphics[width=\textwidth]{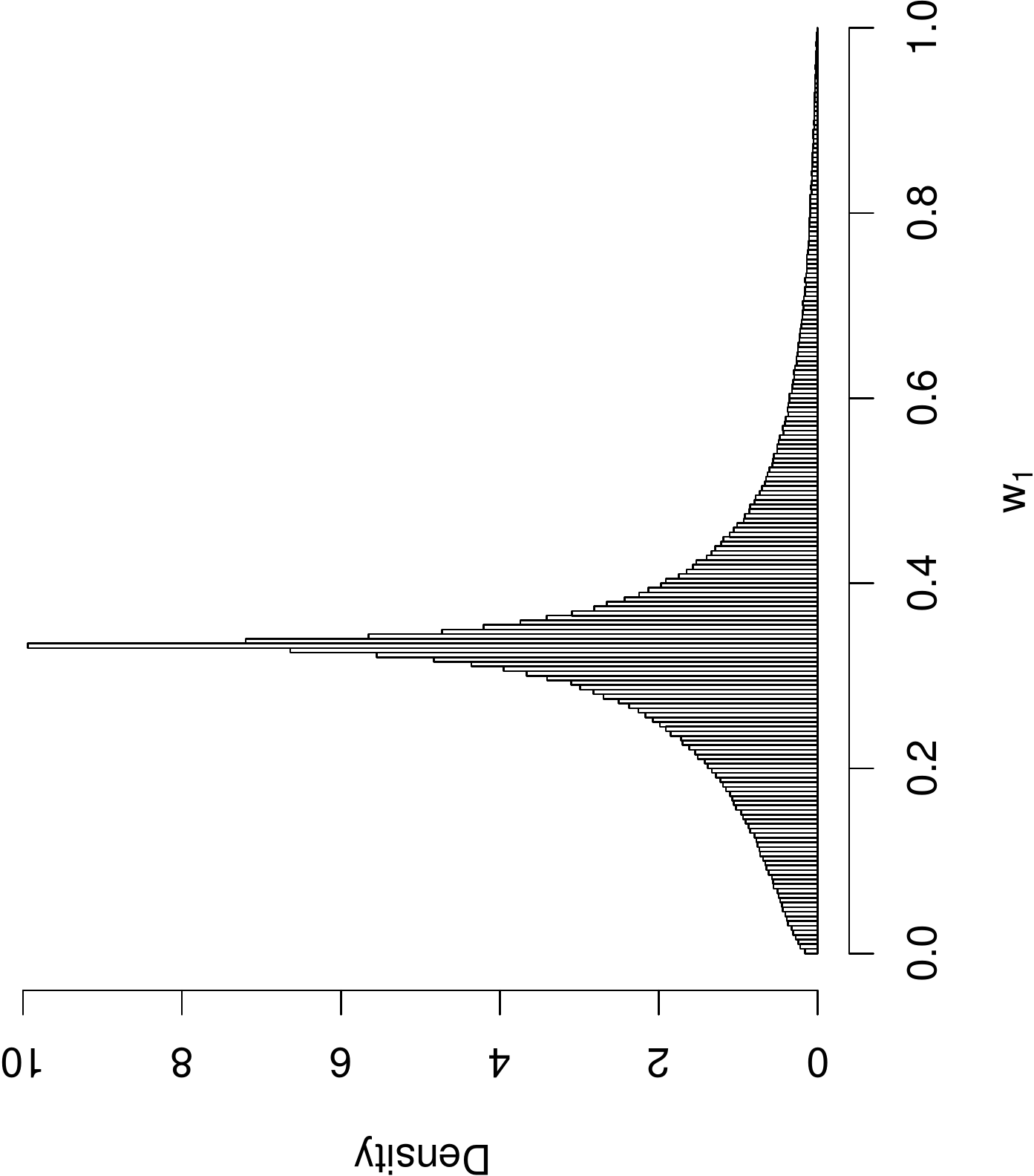}}
        \caption{}
    \end{subfigure}
    \caption{Panel~(a) displays the recursive structure of how the
        additive effect of the covariates are built up, to control
        the variance contributed from each of the model terms to the
        linear predictor. Panel~(b) shows the histogram of $w_1$ from
        a typical joint \PCP\ for $\mm{w}$ using $\lambda = 0.3$.  }
    \label{fig:pc.variance}
\end{figure}
\Fig{fig:pc.variance}~(a) displays the graphical structure of the
recursive decomposition of the variance of $g(\text{Age}, \text{BP},
\text{CR})$. By following the path from the top node to the relevant
node we can determine the fraction of the variance explained by that
node.  For example, the relative contribution to the variance from the
linear effect of covariate Age, is $w_1(1-\phi_1)$, and the relative
contribution to the variance from $g_3(\text{CR})$ is $w_3$.

In order to proceed with the analysis and computation of the \PCP, we
need to make some specific choices for the linear effects
$\{\beta_i\}$ and the (purely) non-linear effects
$\{f_i(\cdot)\}$. For $\beta_i$ we use independent zero mean Gaussians
with unit variance, and for the non-linear effect, we use the second
order random walk~\eref{eq:rw2} which corresponds to the integrated
Wiener process used by~\citet{art575}; see \citet{art435} for a
discussion of this issue. \Fig{fig:pc.variance}~(b) displays the
histogram of samples for the first weight component $w_1$ derived
using a typical \PCP\ for the weights $\mm{w}$ with $\lambda = 0.3$.
The histogram is centred at the base model $w_1 = 1/3$, but still
supports weights close to $0$ meaning that the covariate Age does not
contribute to the linear predictor, and close to 1 meaning that
\emph{only} Age contributes to the linear predictor.

The \PCP\ in this example is computed as follows. The priors for
$\mm{\phi}$ follows the computations described
in \Appendix{appendix:bym}. The joint prior for $\mm{w}$, depends on
the values for $\mm{\phi}$ (but not too much in this example), and
therefore has to be recomputed when there is any change in
$\mm{\phi}$. We use here~\eref{eq:sphere} as an approximation to the
multivariate \PCP\ which only requires a numerical estimate of the
Hessian matrix at the base model. More details are available
in \Appendix{sec:ext.appendix}. The results were obtained using
\texttt{R-INLA}. The covariate estimates (not shown) are comparable to
those in \citet{art575} obtained using MCMC based on independent
diffuse and flat priors.


\section{Discussion}
\label{sec:disc}

Prior selection is \emph{the} fundamental issue in Bayesian
statistics.  Priors are the Bayesian's greatest tool, but they are
also the greatest point for criticism: the arbitrariness of prior
selection procedures and the lack of realistic sensitivity analysis
(which is addressed in \citet{roos-held-2011} and \citet{roos2013})
are serious arguments against current Bayesian practice.  In this
paper, we have provided a principled, widely applicable method for
specifying priors on parameters that are difficult to directly elicit
from expert knowledge.  These \PCP s can be vague, weakly informative,
or strongly informative depending on the way the user tunes an
intuitive scaling parameter.  The key feature of these priors is that
they explicitly lay out the assumptions underlying them and, as such,
these assumptions and principles can be directly critiqued and
accordingly modified.

\PCP s are defined on individual components.  This distinguishes \PCP
s, from reference priors, in which the priors depend on the global
model structure. This global dependence is required to ensure a proper
posterior. However, the modern applied Bayesian is far more likely to
approach their modelling using a component-wise and often additive
approach. The directed acyclic graph--approach pioneered by the
WinBUGS inference engine, is now a standard tool for specification of
general Bayesian models. The additive approach pioneered
by~\citet{book72} is now a standard approach within generalised
regression based models.  Hence, the ability to specify priors in a
component-wise manner is a useful feature.  It is worth noting that
 none of the examples in this paper  have known
reference priors. We believe that \PCP s are a valuable addition to
the literature on prior choice.  They are not designed as, and should
not be interpreted as, a replacement for reference priors, but rather
a method to solve a different set of problems.

This is not the whole story of \PCP s. We still have to work them out
on a case by case basis, construct better guidance for choosing the
scaling using knowledge of the global model (like the link-function
and the likelihood family), and make them the default choice in
packages like \RINLA. Not all cases are straight forward.  The
over-dispersion parameter in the negative binomial distribution,
considered as an extension of the Poisson distribution, cannot be
separated from the mean. Hence, we cannot compute the \PCP\ without
knowing a typical value for the mean. We also need to get more
experience deriving joint priors for two or more parameters, such as a
joint prior for the skewness and kurtosis deviation from a Gaussian
\citep{art566}.  We also have not considered \PCP s on discrete
parameters.  To do this, we need to find a sensible, computationally tractable notion of distance
for these problems.  In this paper, we have focused on generic
specification, however \citet{fuglstad2015interpretable} show that, in the case of hyperparameters for Gaussian random fields, 
if the distance knows about the structure of the model component, the
resulting priors perform very well.  Hence, there is more work to be
done on tailoring distance measures to specific problem classes.

Several of the examples in this paper have used the notion that model
components can be regarded as a flexible extension of a base model.
This idea has natural links to Bayesian non-parametrics.  In
particular, we consider many of the examples, such as the logistic GAM
model in Section \ref{sec:ext}, as a non-parametric model that has
been firmly rooted in a simpler parametric model.  We believe that
this decomposition of the model into favoured parametric and extra
non-parametric components improves the interpretability of these
models in many applications.  This is related to the ideas of
\citet{Kennedy2001}, where the nonparametric component can be used to
``calibrate'' the covariate effect.  An alternative interpretation of
this decomposition is that the nonparametric part adds ``robustness''
to the linear model and the flexibility parameter gives an indication
of how far from the simple, interpretable, base model the data
requires you to depart.

There is a great deal of scope for further theoretical work on this
problem.  Firstly, it would be useful to understand better the effect
of the prior tail on the inference.  The results in this paper suggest
that an exponential tail is sufficiently heavy in low-dimensional
problems, and the heavy tailed half-Cauchy distribution only gives
different results in the high dimensional sparse setting.  However,
it's not clear that this is truly a problem with the tail, as an
examination of the base model suggests that it is not shrinking
towards sparse models.  Hence the question is ``are heavy tails
necessary in high dimensions, or are they just more forgiving of our
poor prior specification?''.  Staying in the realm of sparse models,
it would be interesting to see if the models in \Sec{sec:ext} could be
extended to high dimensional sparse models.  It may be possible to
take inspiration in this case from the Dirichlet--Laplace construction
of \citet{bhattacharya2012bayesian}.  More generally, there are open questions
relating to the large sample properties of hierarchical models with PC priors, hypothesis testing
for flexible models, Stein-type risk properties for PC priors, and robustness against mis-specification.

The current practice of prior specification is not in a good shape.
While there has been a strong growth of Bayesian analysis in science,
the research field of ``practical prior specification'' has been left
behind.  There are few widely applicable guidelines for how this could
or should be done in practice.  We hope that with this new principled
approach to prior construction, we can reduce the number of ``cut and
paste'' prior choices from other similar research articles, and
instead use the derived tools in this paper to specify weakly
informative or informative priors with a well defined shrinkage. As
always, if the user knows of a better prior for their case, then they
should use it. However, having a better default proposal for how to
construct priors is a significant advantage.  The \PCP\ framework was
constructed because of our work with  scientists on applied
problems and came out of a desire to derive and explain the prior
information that we were putting into hierarchical models.  As such,
we believe that these priors are ``fit for purpose'' as tool for
real-world applied statistics.

These new \PCP\ have made a difference to how we do and see Bayesian
analysis. We feel much more confident that the priors we are using do
not force over-fitting, and the notion of \emph{scale}, which
determines the magnitude of the effects, really simplifies the
interpretation of the results. The fact that the prior specification
reduces to a notion of \emph{scale}, makes them very easy to interpret
and communicate.  We feel that \PCP s lay out a new route forward
towards more sound Bayesian analysis.  Jimmie Savage \citep{Lindley1983}
suggested that we ``build our models as big as
elephants'', while J. Bertrand \citep{le1990maximum} told us to ``give [him] four parameters and [he] shall describe an elephant; with five, it will wave its trunk''.  The modern practice of Bayesian statistics can be seen as a battle between these two elephants, and with \PCP s we hope to make it a fair fight.


\section*{Acknowledgement}

The authors acknowledge
Gianluca Baio,
Haakon C.\ Bakka,
Simon Barthelm\'e,
Joris Bierkens,
Sylvia Fr\"uhwirth-Schnatter,
Nadja Klein,
Thomas Kneib,
Alex Lenkoski, 
Finn K.\ Lindgren,
Christian P.\ Robert and
Malgorzata Roos
for stimulating discussions and comments related to this work.

\appendix
\section{Derivation of \PCP s}

\subsection{The ``distance'' to a singular model}

As \PCP s are defined relative to a base model, which is essentially a
distinguished point in the parameter space, we occasionally run into
the difficulty that this point (denoted $\xi = 0$) is fundamentally
different from the points $\xi > 0$.  In particular, the base model is
occasionally singular to the other distributions and we need to define
a useful notion of distance from a singular point in the parameter
space.  We are saved in the context of this paper by noting that the
singular model $\pi(x|\xi=0)$ is the end point of a curve in model
space $t \rightarrow \pi(x | \xi = t)$, $t\geq0$.

Consider the $\epsilon$-distance $d_\epsilon(t) = \sqrt{2 \KLN{\pi(x |
        \xi = t)}{\pi(x | \xi=\epsilon)}}$, which is finite for every
$\epsilon >0$. If $d_\epsilon(t) = \mathcal{O}(\epsilon^{-p})$, then
we can define the renormalised distance $\tilde{d}_\epsilon(t) =
\epsilon^{p/2}d_\epsilon(t)$.  Using this renormalisation,
$\tilde{d}(t) = \lim_{t\downarrow 0}\tilde{d}_\epsilon(t)$ is finite
and parameterisation invariant and we can use it to define \PCP s for
singular models.

\subsection{The \PCP\ for the precision in a multivariate Normal
    distribution}
\label{appendix:normal}

Let ${\mathcal N}^{(p)}(\mm{\mu}, \mm{\Sigma})$ denote a multivariate
normal distribution with dimension $p$.  The Kullback-Leibler
divergence from ${\mathcal N}_1^{(p)}(\mm{\mu}_1, \mm{\Sigma}_1)$ to
${\mathcal N}_0^{(p)}(\mm{\mu}_0, \mm{\Sigma}_0)$ is
\begin{equation}\label{appendix:kld}%
    \KLN{{\mathcal N}_1^{(p)}}{{\mathcal N}_0^{(p)}} =
    \frac{1}{2} \left\{
      \text{tr}\left( \mm{\Sigma}_0^{-1} \mm{\Sigma}_1\right)
      + (\mm{\mu}_0 - \mm{\mu}_1)^{T} \mm{\Sigma}_0^{-1}
      (\mm{\mu}_0 - \mm{\mu}_1)
      - p
      -\ln\left( \frac{|\mm{\Sigma}_1|}{|\mm{\Sigma}_0|}\right)
    \right\}.
\end{equation}
In our setting, ${\mathcal N}_1^{(p)}$ denotes the flexible model and
${\mathcal N}_0^{(p)}$ the base model.

To derive the \PCP\  for $\tau$ where $\mm{\Sigma}_1 =
\mm{R}/\tau$, $\mm{R}$ is a fixed matrix and $\mm{\Sigma}_0 =
\mm{0}$, we will study the limiting behaviour when $\mm{\Sigma}_0 =
\mm{R}/\tau_0$ for a high fixed value of $\tau_0$. In the end we
will look at the limit $\tau_0 \rightarrow \infty$. For simplicity,
assume $\mm{R}$ has full rank. We then get $
    \text{KLD} = \frac{p}{2} \frac{\tau_0}{\tau} \left(
      1 + \frac{\tau}{\tau_0} \ln\left(\frac{\tau}{\tau_0}\right) -
      \frac{\tau}{\tau_0}
    \right) \longrightarrow \frac{p}{2} \frac{\tau_0}{\tau}$, when $\tau \ll \tau_0$,
and hence $d(\tau) = \sqrt{p\tau_0/\tau}$. With an exponential prior
for $d$ with rate $\lambda = \theta/\sqrt{p\tau_0}$ we get they type-2 Gumbel distribution \eqref{eq:prec}. To infer $\theta$ from a notion
of \emph{scale}, we request $(U, \alpha)$ saying that large standard
deviations are less likely, $\Prob\left({1}/{\sqrt{\tau}} > U
\right) = \alpha$.  This gives $\theta = -\ln(\alpha) / U$.  We can
now take the limit $\tau_0 \rightarrow \infty$ by choosing $\lambda$
so that $\theta$ is kept constant. In the case where $\mm{R}$ is not
of full rank, we use the generalised inverse and determinant, and we
arrive at the same result.

\subsection{The \PCP\  for the mixing parameter in the
    BYM-model}
\label{appendix:bym}

We will now derive the \PCP\  for the mixing parameter $\phi$ in
the new parameterisation for the BYM model.  Let $\mm{u}$ be a
$n$-dimensional standardised Gaussian model with zero mean and
precision matrix $\mm{R}>0$, $\mm{v}$ be an independent zero mean
random effects with unit variance ${\mathcal N}(\mm{0}, \mm{I})$, and
where the mixing parameter $\phi$ satisfies $0 \le \phi \le 1$.  The
more flexible model is $\sqrt{1-\phi}\; \mm{v} + \sqrt{\phi}\;
\mm{u}$, and the base model is $\mm{v}$ (i.e.\ the model flexible
model when $\phi=0$). Let $\mm{\Sigma}_0 = \mm{I}$ and
$\mm{\Sigma}_1(\phi) = (1-\phi) \mm{I} + \phi \mm{R}^{-1}$, then
\begin{eqnarray*}
    2\,\text{KLD}(\phi) &=& 
    \text{tr}(\mm{\Sigma}_1(\phi)) -n -\ln
    |\mm{\Sigma}_1(\phi)| \\
    &=&
    n\phi\left(\frac{1}{n}\text{tr}(\mm{R}^{-1}) - 1\right)
    -
    \ln | (1-\phi) \mm{I} + \phi \mm{R}^{-1}|
\end{eqnarray*}
and $d(\phi) = \sqrt{2\text{KLD}(\phi)}$.  The interesting case is
when $\mm{R}$ is sparse, for which $\text{tr}(\mm{R}^{-1})$ is quick
to compute~\citep{art358,art375}. For the determinant term, we can
massage the expression to facilitate the speedup of computing with
sparse matrices. Using the matrix identity $(\mm{I} +
\mm{A}^{-1})^{-1} = \mm{A}(\mm{A} + \mm{I})^{-1}$, we get
$
    | (1-\phi) \mm{I} + \phi \mm{R}^{-1}| =
     | \phi^{-1}\mm{R} |^{-1}{|\frac{1-\phi}{\phi} \mm{R} + \mm{I} |}{
        }$.
An alternative approach, is to compute the eigenvalues $\{\gamma_i\}$
of $\mm{R}$, which we need to do only once.  Let $\tilde{\gamma}_i =
1/\gamma_i$, and we get $
    | (1-\phi) \mm{I} +\phi \mm{R}^{-1}| = \prod_{i=1}^{n}
    \left(
      1-\phi + \phi \tilde{\gamma}_i
    \right)$.
    
In the case where $\mm{R}$ is singular we introduce linear
constraint(s) to ensure that any realisation of $\mm{u}$ is in its
null-space. It is now easier to use the latter computational strategy,
but redefine $\tilde{\gamma}_i$ as $1/\gamma_i$ if $\gamma_i > 0$ and
$\tilde{\gamma}_i = 0$ if $\gamma_i = 0$.

The truncated exponential prior for $d(\phi)$ gives
\begin{displaymath}
    \pi(\phi) = \frac{\lambda \exp\left(-\lambda d(\phi)\right)}{
        1-\exp\left(-\lambda d(1)\right)}
    \; \bigg\vert\frac{\partial d(\phi)}{\partial\phi}\bigg\vert
\end{displaymath}
where we do not spell out the Jacobian for simplicity.  The
penalisation parameter $\lambda$ can be determined by $\Prob(\phi < u)
= \alpha$, requiring that $\alpha > d(u)/d(1)$.

The \PCP\ for the mixing parameter in the hierarchical models in 
\Sec{sec:ext} generalise the BYM-model as the base model is
more general.  Let $\mm{\Sigma}_1(\phi) = (1-\phi) \mm{S}_1 + \phi
\mm{S}_2$, and where the base model is $\mm{\Sigma}_0 =
\mm{\Sigma}_1(1/2)$. The costly task is to compute
$\det(\mm{\Sigma}_1(\phi))$ for a sequence of $\phi$'s. Using the
Matrix determinant lemma: $\det(\mm{A} + \mm{U}\mm{V}^{T}) =
\det(\mm{I} + \mm{V}^{T}\mm{A}^{-1}\mm{U})\det(\mm{A})$ for compatible
matrices $\mm{A}$ (invertible), $\mm{V}$ and $\mm{U}$, we can reduce
the computational cost to essential one evaluation of
$\det(\mm{\Sigma}_1(\phi))$.
\subsection{The \PCP\  for the variance weights in additive models}
\label{sec:ext.appendix}

The joint \PCP\ of the weights \mm{w} in \Sec{sec:ext} is computed as
follows. Let $\mm{\eta}^{*}$ be the standardised linear predictor and
$\mm{x}_i$ the $i$'th vector of standardised covariates, then the model
considered in \Sec{sec:ext} can be written as
$
    \mm{\eta}^{*} = 
    \sum_i \sqrt{w_i} \left(
      \sqrt{1-\phi_i} \beta_i \mm{x}_i + \sqrt{\phi_i} \mm{A}_i \mm{f}_i
    \right)$,
where $\mm{A}_i$ is a sparse matrix extracting the required elements
(or linear combinations thereof) of the Gaussian vector $\mm{f}_i$
representing the scaled second order random walk model. The covariance
for the linear predictor is then
$
    \text{Cov}(\mm{\eta}^{*}) = \sum_i
    w_i\left(
      (1-\phi_i) \mm{x}_i\mm{x}_i^{T} + \phi_i \mm{A}_i
      \text{Cov}(\mm{f}_i) \mm{A}_i^{T}
    \right)$.
In order to improve the second order approximation~\Eref{eq:sphere},
we reparameterise the weights following the ideas in compositional
data analysis~\citep{book122}, using new parameters $\tilde{w}_i =
\log(w_i/w_n)$, for $i=1, \ldots, n-1$ for $n$ components. This makes
$\text{Cov}(\mm{\eta}^{*})$ a function of $\tilde{\mm{w}}$ with base
model at $\tilde{\mm{w}} = \mm{0}$. The KLD can then be computed
from~\Eref{appendix:kld}, and the \PCP\ follows from a numerical
approximation to the Hessian matrix of the KLD and \Eref{eq:sphere}.


\section{Proofs}
\subsection{Proofs of Theorems \ref{theorem:2} and \ref{theorem:1}}
\label{sec:proof}
We give the proof of \Thm{theorem:1}.  The proof of \Thm{theorem:2} follows along the same lines.
The KLD of approximating the unit precision Student-t with
d.o.f.~$\nu$ with a standard Gaussian is for large $\nu$, $
    \text{KLD} = \frac{3}{4}\nu^{-2} + \frac{3}{2}\nu^{-3} +
    {\mathcal O}(\nu^{-4})$.
Since $\pi_{\nu}(\nu)$ has a finite first moment it is $o(\nu^{-2})$
as $\nu \rightarrow \infty$. Using the fact that $d =
\sqrt{2\;\text{KLD}}$, shows that $\pi_d(d) = o(1)$ as $d\rightarrow
0$. 

\subsection{Proof of \Thm{thm:BF}}
\label{appendix:BF}
Let $\mm{y}_n$ be $n$ i.i.d.\ draws from $\pi(y|\zeta)$ and consider
the hypothesis test $H_0: \zeta=0$ against the alternative $H_1: \zeta
\sim \pi(\zeta)$, where $\pi(0)\in(0,\infty)$.  Let $m_i(\mm{y}_n)$
denote the marginal likelihood under each model.  As the domain of
$H_1$ is open and contains only regular points for the model, the
consistency of $B_{01} $ under $H_1$ follows from
\citet{johnson2010use}. Assume that the model has regular asymptotic
behaviour under $H_0$.  Under $H_0$, Bayes' theorem implies that, for
any $\zeta \geq 0$,
\begin{align*}
    B_{01}(\mm{y}_n)=\frac{m_0(\mm{y}_n) }{m_1(\mm{y}_n)} &=
    \frac{\pi(\mm{y}_n\mid \zeta =0)}{\pi(\mm{y}_n \mid \zeta)}
    \frac{\pi(\zeta \mid \mm{y}_n)}{\pi(\zeta)}
    \stackrel{p}{\longrightarrow} \frac{\pi(\mm{y}_n\mid \zeta
        =0)}{\pi(\mm{y}_n \mid \zeta)}\frac{\exp(-v/2)}{\pi(0)}
    \sqrt{\frac{n}{8\pi v}},
\end{align*}
where the second equality follows from
\citet[Thm.~1]{bochkina2012bernstein}, which states that 
$\pi(\zeta|\mm{y}_n)$ converges to a truncated normal distribution
with mean parameter $0$ and variance parameter $n^{-1/2}v$, and we
have evaluated this asymptotic density at $\zeta =n^{-1/2}$. That
$B_{01}(\mm{y}_n) = \mathcal{O}_p(n^{1/2})$ follows by noting that
\citet[Assumption M1]{bochkina2012bernstein} implies that the first
quotient is $\mathcal{O}_p(1)$.

When the model has irregular asymptotic behaviour, the result follows
by replacing the truncated normal distribution by the appropriate
Gamma distribution.  It follows that $B_{01}(\mm{y}_n) =
\mathcal{O}_p(n)$ in this case. The results of
\citet{bochkina2012bernstein} can also be used to extend this to the
parameter invariant extensions of the non-local priors considered by
\citet{johnson2010use}.  If $\pi(\zeta) = \mathcal{O}(\zeta^k)$,
$k>-1$ as $\zeta \rightarrow 0$, then a simple extension of the above
argument shows that $B_{01}(\mm{y}_n) = \mathcal{O}_p(n^{1/2+k})$ in
the regular case and $B_{01}(\mm{y}_n) = \mathcal{O}_p(n^{1+k})$ in
the irregular case.

\subsection{Proof of \Thm{thm:shrinkage}}

The theorem follows by noting that as $\kappa\downarrow 0$,
$\pi(\kappa) =
\mathcal{O}\left(\pi_d\left(\kappa^{-1/2}\right)\kappa^{-3/2}\right)$
and as $\kappa \uparrow 1$, $\pi(\kappa) =
\mathcal{O}\left(\pi_d\left(\sqrt{1-\kappa}\right)(1-\kappa)^{-1/2}\right)$.

\subsection{Proof of \Thm{thm:expon_decay}}
\label{appendix:normal_proof}

For convenience, we will prove Theorem 4 for
general priors with $\pi_d(0) \in (0, \infty)$.  Consider a prior
$\pi(\sigma)$ on the standard deviation with $\pi(0) = 1$ and define
the re-scaled prior as $\pi^\lambda(\sigma) =
\lambda\pi(\lambda\sigma)$.  The following theorem shows that, for
sufficiently large $\lambda = \lambda(p)\uparrow \infty$, the marginal
prior for $\mm{\beta}$ has mass on $\delta$--sparse vectors.

\begin{theorem}
    \label{thm:general_sparse}
    Let $\pi(\sigma)$ be a non-increasing prior on $\sigma$ such that
    $\pi(0)=1$ and let $\mm{\beta} \sim {\mathcal N}(\mm{0},\mm{D}^2)$, where
    $D_{ii} \sim \pi^\lambda(\sigma)$.  Set $\delta_p = p^{-1}$.  
    Sufficient condition for the prior on the $\delta_p$--dimension to
    be  centred at the true sparsity $s$ are that $\lambda \geq
    \mathcal{O}\left(\frac{p}{ \log
          (p)}\left[1-\frac{s}{p}\right]\right)$ and
    $\pi^\lambda(p^{-1}) \leq \mathcal{O}\left(\frac{p}{ \log
          (p)}\left[1-\frac{s}{p}\right]\right)$, where $s$ is the
    true sparsity of the target vector.
\end{theorem}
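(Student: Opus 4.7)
First I would reduce the statement to a one-dimensional calculation. Since the $D_{ii}$ are i.i.d.\ with density $\pi^\lambda$ and $\beta_i\mid D_{ii}\sim\mathcal{N}(0,D_{ii}^2)$, the components $\beta_i$ are marginally i.i.d., and so $|\supp_{\delta_p}(\mm{\beta})|$ is $\text{Binomial}(p,\alpha_p)$ with
\[
\alpha_p \;=\; \Prob(|\beta_i|>1/p) \;=\; \int_0^\infty 2\bigl(1-\Phi(1/(p\sigma))\bigr)\,\pi^\lambda(\sigma)\,d\sigma,
\]
obtained by first conditioning on $D_{ii}=\sigma$ and then marginalising. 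The expected $\delta_p$-dimension is $p\alpha_p$, so the theorem reduces to showing that the two hypotheses force $p\alpha_p$ to sit at the true sparsity level $s$.

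My plan is to split the integral at the natural scale $\sigma=1/p$, which is where the Gaussian tail factor $2(1-\Phi(1/(p\sigma)))$ transitions from exponentially small to close to one. On $\sigma\leq 1/p$ I would invoke Mill's inequality $1-\Phi(x)\leq \phi(x)/x$ to get $2(1-\Phi(1/(p\sigma)))\leq \sqrt{2/\pi}\,p\sigma\exp(-1/(2p^2\sigma^2))$, then use the monotonicity of $\pi^\lambda$ together with the normalisation $\pi(0)=1$ to bound $\pi^\lambda(\sigma)\leq\pi^\lambda(0)=\lambda$. The substitution $u=1/(p\sigma)$ turns this piece into an explicit Gaussian-type integral of order $\lambda/p$ times an absolute constant. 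On $\sigma\geq 1/p$ I would use $2(1-\Phi(\cdot))\leq 1$ together with a complementary mass bound obtained from monotonicity: $\int_0^{1/p}\pi^\lambda\,d\sigma\geq \pi^\lambda(1/p)/p$, which yields $\int_{1/p}^\infty\pi^\lambda\,d\sigma\leq 1-\pi^\lambda(1/p)/p$.

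Assembling the two pieces gives an inequality of the form $p\alpha_p\leq C_1\lambda + p - \pi^\lambda(1/p)$ for an absolute constant $C_1$. The hypothesis $\pi^\lambda(1/p)\leq\mathcal{O}(p(1-s/p)/\log p)$ controls how close $p-\pi^\lambda(1/p)$ lies to $s$, and the hypothesis $\lambda\geq\mathcal{O}(p(1-s/p)/\log p)$ is the quantitative expression of a balancing condition: if $\lambda$ were much smaller, the leftover mass $p-\pi^\lambda(1/p)$ would overwhelm $s$; if it were much larger, the Gaussian-tail contribution $C_1\lambda$ would do so. The $\log p$ factor in the hypotheses emerges precisely from equating the two pieces, recovering the oracle rate asserted in \Thm{thm:expon_decay}.

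The hard part will be strengthening this one-sided bound into a genuine centering statement, rather than a mere upper bound on the expectation. A matching lower bound on $\alpha_p$ would restrict the integral to an intermediate window $\sigma\in[c_1/p,c_2/p]$ on which $2(1-\Phi(1/(p\sigma)))$ is bounded away from zero, and use monotonicity in the opposite direction to lower bound $\pi^\lambda$ on that window by $\pi^\lambda(c_2/p)$, which differs from $\pi^\lambda(1/p)$ by only a bounded factor in the regimes of interest. Quantifying the prior mass on this window tightly enough to produce a contribution of order $s$, while keeping the constants compatible with the upper-bound argument, is where a refined Laplace-type analysis of the Gaussian factor near the transition $\sigma=1/p$ is likely to be required.
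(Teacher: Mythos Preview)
Your decomposition does not produce a usable bound. You obtain
\[
p\alpha_p \;\leq\; C_1\lambda \;+\; p \;-\; \pi^\lambda(1/p),
\]
but in the regime of interest $\pi^\lambda(1/p)\leq\pi^\lambda(0)=\lambda=\mathcal{O}(p/\log p)$, so the right-hand side is $p+\mathcal{O}(p/\log p)$, i.e.\ essentially $p$. This is a trivial upper bound on a quantity you need to pin down near $s=o(p)$. The culprit is the step $2(1-\Phi(1/(p\sigma)))\leq 1$ on $\sigma\geq 1/p$: the tail mass $\int_{1/p}^\infty\pi^\lambda$ is nearly $1$ there, and your bound throws away all information about how far below $1$ the Gaussian factor sits. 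No refinement of this split will recover the right order, because $\alpha_p$ is genuinely of order $1-\mathcal{O}(\lambda\log p/p)$ and you are trying to estimate it directly rather than its complement. Your claim that ``the $\log p$ factor emerges from equating the two pieces'' is also incorrect: equating $C_1\lambda$ and $p-\pi^\lambda(1/p)$ gives no logarithm.

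The paper instead works with $1-\alpha$, writing
\[
1-\alpha \;=\; 2\int_0^\delta\int_0^\infty (2\pi\sigma^2)^{-1/2}\exp\!\bigl(-\beta^2/(2\sigma^2)\bigr)\,\pi^\lambda(\sigma)\,d\sigma\,d\beta,
\]
and bounds this two-sidedly. Using $\pi^\lambda(\sigma)\leq\lambda$ above and $\pi^\lambda(\sigma)\geq\pi^\lambda(\delta)$ below (for $\sigma$ small), together with the exponential-integral estimate $\int_0^1\sigma^{-1}e^{-\beta^2/(2\sigma^2)}\,d\sigma\asymp\log(1/\beta)$, one gets $\pi^\lambda(\delta)\,\delta\log(\delta^{-1})\lesssim 1-\alpha\lesssim \lambda\,\delta\log(\delta^{-1})$. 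The logarithm appears here, from integrating the $\sigma^{-1}$ in the Gaussian density, and setting $1-\alpha=1-s/p$, $\delta=1/p$ immediately gives the two stated conditions on $\lambda$ and $\pi^\lambda(1/p)$. There is no separate ``hard part'' for the lower bound: both directions follow from the same sandwich.
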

\begin{proof}
    The result follows by noting that
    \begin{align*}
        1-{\alpha} &= 2\int_0^\delta\int_0^\infty
        (2\pi\sigma^2)^{-1/2}\exp\left(\frac{\beta^2}{2\sigma^2}\right)\pi^\lambda(\sigma)\,d\sigma\,d\beta \\
         &\gtrsim \lambda \pi(\lambda\delta)
        \int_0^\delta \int_0^1 \sigma^{-1} \exp\left(\frac{\beta^2}{2\sigma^2}\right)\,d\lambda\,d\beta \\
        &\gtrsim \lambda \pi(\lambda\delta)\int_0^\delta
        \log\left(1+\frac{4}{\beta^2}\right)e^{-\frac{\beta^2}{2}}\,d\beta
        \gtrsim \lambda\pi(\lambda\delta)
        \log\left(1+\frac{4}{\delta^2}\right)\operatorname{erf}(2^{-1/2}\delta) \\
       & \gtrsim \lambda\pi(\lambda\delta)\delta \log (\delta)^{-1},
    \end{align*}
    where the first inequality comes from the definition of
    $\pi^\lambda(\sigma)$ and the second follows from standard bounds
    on the exponential integral.  Noting that $$\int_1^\infty
    (2\pi\sigma^2)^{-1/2}
    \exp\left(\frac{\beta^2}{2\sigma^2}\right)\pi^\lambda(\sigma)\,d\sigma
    \lesssim 1$$ and $\pi^\lambda(\sigma) \lesssim 1$, a similar
    calculations yield $1-\alpha \lesssim \lambda \delta
    \log(\delta^{-1})$.  It follows that $\alpha = p^{-1}s$
    when $\lambda\pi\left(\frac{\lambda }{p}\right)\lesssim
    \frac{p}{\log(p)} \left(1-\frac{s}{p}\right) \lesssim \lambda$,
    which implies the result.
\end{proof}

Theorem 4 follows from the assumption that $s \leq
\mathcal{O}\left(\frac{p}{\log(p)}\right)$ and using the Taylor
expansion on the Lambert-W function to get the upper bound.

\section{Supplementary material}

The \texttt{R}-code for analysing all examples and generating the
corresponding figures in this report, is available at\newline
{\small\texttt{www.r-inla.org/examples/case-studies/pc-priors-martins-et-al-2014}}

{\bibliography{../mybib,../local-bibfile}}

\end{document}